\newcommand{\ac}{{\check a}}
\newcommand{\zetac}{{\check z}}
\newcommand{\wc}{{\check w}}
\newcommand{\qc}{{\check q}}
\newcommand{\pc}{{\check p}}
\newtheorem{theorem}{Theorem}
\newtheorem{lemma}{Lemma}
\newtheorem{corollary}{Corollary}
\newtheorem{proposition}{Proposition}
\newtheorem{remark}{Remark}
\newtheorem{definition}{Definition}
\begin{document}

\preprint{APS/123-QED}

\title{Quantum Wiener architecture for quantum reservoir computing}%

 \author{Alessio Benavoli}
 \email{alessio.benavoli@tcd.ie}
\affiliation{School of Computer Science and Statistics, Trinity College Dublin.\vspace{1mm} \\
 Trinity Quantum Alliance, Unit 16,\\ 
 Trinity Technology and Enterprise Centre,\\
 Pearse Street, Dublin 2, Ireland
}%

\author{Felix Binder}%
 \email{felix.binder@tcd.ie}
 \affiliation{School of Physics, Trinity College Dublin.\vspace{1mm}\\
 Trinity Quantum Alliance, Unit 16, Trinity Technology and Enterprise Centre, Pearse Street, Dublin 2, Ireland
}%

\date{\today}%

\begin{abstract}
This work focuses on  quantum reservoir computing and, in particular, on quantum Wiener architectures (qWiener), consisting of quantum linear dynamic networks with weak continuous measurements and classical nonlinear static readouts. We provide the first rigorous proof that qWiener systems retain the fading-memory property and universality of classical Wiener architectures, despite quantum constraints on linear dynamics and measurement back-action. Furthermore, we develop a kernel-theoretic interpretation showing that qWiener reservoirs naturally induce deep kernels, providing a principled framework for analysing their expressiveness. We further characterise the simplest qWiener instantiation, consisting of concatenated quantum harmonic oscillators, and show the difference with respect to the classical case. Finally, we empirically evaluate the architecture on standard reservoir computing benchmarks, demonstrating systematic performance gains over prior classical and quantum reservoir computing models.

\end{abstract}

\maketitle

\section{Introduction}
Reservoir computing (RC) \citep{jaeger2001echo,maass2002real} is a computational framework in which temporal information is processed by driving a dynamical system with input signals while training only the output layer. Despite this strong restriction, RC architectures such as  liquid state machines \citep{jaeger2001echo} and echo-state networks \citep{maass2002real} are provably universal for approximating fading-memory systems, i.e., dynamical systems depending only on a finite past history of inputs. 

Recent efforts have extended RC into the quantum regime \citep{Fujii2017,PhysRevApplied.11.034021,kutvonen2020optimizing,chen2020temporal,nokkala2021gaussian,mujal2021,martinez2023information,hu2024overcoming,paparelle2025experimental,selimovic2025experimental}, motivated both by the high dimensionality of quantum systems and by the prospect of exploiting near-term noisy intermediate-scale quantum platforms. Quantum reservoir computing (QRC) aims to process temporal data streams through quantum dynamics, and extract classical information through suitable measurements. 
QRC have achieved promising results  in addressing temporal and classification tasks \citep{ghosh2019quantum,markovic2020quantum,bravo2022quantum,sakurai2022quantum,nokkala2023online,dudas2023quantum,hayashi2023impact}. 

In \citep{nokkala2021gaussian}, it was shown that even linear quantum networks restricted to Gaussian states can, in principle, provide universality for RC. This argument relies on the classical result  \citep{boyd2003fading} that any continuous fading-memory dynamical system can be represented by a single-input, multi-output linear time-invariant (LTI) system followed by a polynomial nonlinearity. This architecture is known as the (classical) Wiener architecture \citep{wiener1966nonlinear}, depicted in Figure \ref{fig:1}. By suitably combining the outputs of finitely many quantum linear networks with a polynomial readout, \cite{nokkala2021gaussian} claimed that universality can be attained by this QRC architecture. In this paper, we refer to it as the \textit{quantum Wiener architecture (qWiener)}. This claim is particularly interesting, as continuous-variable quantum optical systems provide efficient platforms for its realization \citep{nokkala2021gaussian,garcia2023scalable,nokkala2024retrieving}. Moreover, by employing weak measurements, this architecture mitigates the loss of past input memory \citep{nokkala2021gaussian,mujal2023time,yasuda2023quantum} that typically results from the measurement process.

However, there are fundamental differences  between a classical and a quantum architecture:  
\begin{itemize}
    \item Unlike classical LTI systems, quantum linear networks must preserve the canonical commutation relations, which impose stringent constraints (e.g., unitarity) on their parameters  and their dynamics.  
    \item Weak measurements also perturb the quantum system, introducing back-action that may compromise both fading-memory and universality properties.  
\end{itemize}  
These important aspects were not explicitly addressed in \cite{nokkala2021gaussian}.  In this work, our contributions are as follows:  
\begin{itemize}
    \item We rigorously prove the universality of the \textit{qWiener} architecture: a system of quantum linear networks with weak continuous measurements, whose outcomes are processed through classical polynomial readouts. We prove that, despite quantum constraints in their dynamics and measurement back-action, \textit{qWiener} systems retain the fading-memory property and universal approximation capabilities of classical Wiener architectures.  
    \item We develop a kernel-theoretic perspective on classical Wiener architectures for RC, linking their expressive power to kernel methods and Gaussian processes.  We show
that the reservoir naturally induces kernel functions
capturing correlations between input sequences.
This leads to a deep-kernel interpretation of classical Wiener architectures. Similarly, we show that, with some important differences (due to the need of preserving the canonical commutation relations), just as
randomised LTI dynamical systems with nonlinear readouts yield deep kernels in classical RC, qWiener architectures combining randomised quantum linear networks  with classical nonlinear readouts also generate deep kernels. This allows us to bridge QRC
with modern kernel methods in machine learning, providing a novel approach to assess their expressiveness instead, for instance, of using \textit{information processing capacity} \citep{dambre2012information,nokkala2021gaussian}.
    \item Leveraging this kernel-based viewpoint, we analyse the expressive power of the simplest \textit{qWiener} instantiation, where the quantum linear block is the concatenation of $n_c$ quantum harmonic oscillators, each linearly coupled to an external field. This allows us to highlight the difference between the classical and quantum case. We further show that the resulting system is completely passive and, therefore,  implementable using only passive optical components. Moreover, we prove that the weak measurement does not disturb its dynamics but merely adds classical Wiener noise to the output. This result exploits the quantum nature of the proposed QRC architecture, in particular the fact that an observed quantum linear network evolves according to the quantum Kalman filter equations.
\item We evaluate the simplest \textit{qWiener} architecture against the model proposed in \citep{nokkala2021gaussian}, as well as a classical echo-state network (used as a baseline), on three benchmark reservoir computing tasks: parity-check, NARMA10, and delay. Our results demonstrate that the proposed approach consistently outperforms the method in \citep{nokkala2021gaussian}. In particular, we show empirically (consistently with our theoretical derivation of the corresponding deep kernel) that the performance of our model systematically improves as the number of harmonic oscillators in the linear network increases -- a property that does not hold for \citep{nokkala2021gaussian}.

\item Finally, we exploit our universality result for the qWiener architecture to show that it can be designed to have a prescribed long-term memory capacity, and a computational basis related to the Legendre polynomials, using a result derived in \cite{voelker2019legendre} for classical reservoir computing.
\end{itemize}

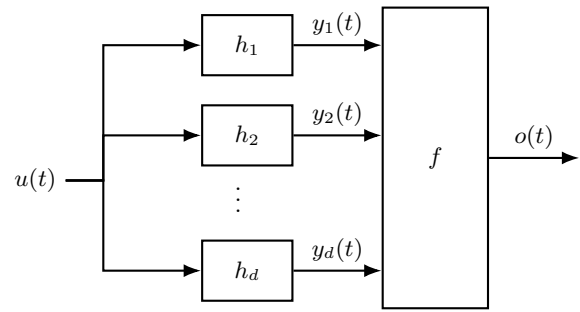
\begin{figure}
\centering
\begin{tikzpicture}[node distance=1.2cm and 1.8cm, thick, >=Latex]
\begin{scope}[xshift=-1cm]
\node (input) {$u(t)$};
\node[draw, rectangle, minimum width=1.2cm, minimum height=0.8cm, right=of input, yshift=1.8cm] (g1) {$h_1$};
\node[draw, rectangle, minimum width=1.2cm, minimum height=0.8cm, right=of input, yshift=0.6cm] (g2) {$h_2$};
\node[draw=none, right=of input, yshift=-0.15cm] (dots) {~~~$\vdots$};
\node[draw, rectangle, minimum width=1.2cm, minimum height=0.8cm, right=of input, yshift=-1.2cm] (gm) {$h_d$};
\draw[->] (input.east) -- ++(0.5,0) |- (g1.west);
\draw[->] (input.east) -- ++(0.5,0) |- (g2.west);
\draw[->] (input.east) -- ++(0.5,0) |- (gm.west);

\node[draw, rectangle, minimum width=1.4cm, minimum height=4.0cm, right=4.2cm of input, yshift=0.3cm] (combiner) {$f$};

\draw[->] (g1.east) -- (combiner.west |- g1.east) node[midway, above] {$y_1(t)$};
\draw[->] (g2.east) -- (combiner.west |- g2.east) node[midway, above] {$y_2(t)$};
\draw[->] (gm.east) -- (combiner.west |- gm.east) node[midway, above] {$y_d(t)$};

\draw[->] (combiner.east) -- ++(1.2,0) node[midway, above] {$o(t)$};
\end{scope}
\end{tikzpicture}
\caption{Wiener architecture: connection of $d$ parallel linear SISO dynamical systems ($h_i$) with a  memoryless nonlinear function $f$. $u(t),o(t)$ are the input and output signal respectively.}
\label{fig:1}
\end{figure}

\section{Background and Novel Preliminaries}
In this section, we introduce the notation and present both established and novel background results that form the foundation of our main contributions. While extensive, this material is essential, as these results will be directly employed in developing our core findings in the subsequent sections.

\subsection{Classical Wiener architecture and connection to kernel methods}
\label{sec:kernel_methods}
A classical Wiener architecture consists of multiple parallel classical Single Input Single Output (SISO)  LTI  systems whose outputs are fed into a static nonlinearity. When this architecture is used in RC, only the final (readout) layer is trained. Consequently, the parameters of the parallel SISO LTI systems must be selected beforehand.
There are two main approaches for choosing these parameters: (1) random generation from specific probability distributions; (2) deliberate design to achieve a desired dynamic behavior of the LTI system \citep{voelker2019legendre}.
In the first approach, the main challenge is to select distributions that yield a system with sufficient expressiveness. This expressiveness can be assessed by examining the statistical properties of the nonlinear basis functions obtained through combinations of randomly generated LTI systems. In this paper, we concentrate on this approach, although we will provide an example of the second approach in the numerical experiments section.

A unifying perspective for analysing the first approach  is offered by kernel theory and its Bayesian counterpart, Gaussian processes \citep{rasmussen2010gaussian}.  We recall that:

\begin{definition}[Gaussian Process]
A GP is a collection of random variables, any finite number of which have a joint Gaussian distribution. Formally, a real-valued function $f(\mathbf{x})$  with $\mathbf{x} \in \mathbb{R}^n$ is said to be GP distributed, denoted as $f \sim \mathcal{GP}(\mu, k)$, if for any finite set of inputs $\mathbf{x}_1, \dots, \mathbf{x}_n$, the corresponding function values $f(\mathbf{x}_1), \dots, f(\mathbf{x}_n)$ follow a multivariate normal distribution:
\[
f(\mathbf{x}) \sim N\left(
\begin{pmatrix} \mu_{\boldsymbol{\theta}}(\mathbf{x}_1) \\ \vdots \\ \mu_{\boldsymbol{\theta}}(\mathbf{x}_n) \end{pmatrix},
\begin{pmatrix}
K_{\boldsymbol{\theta}}(\mathbf{x}_1, \mathbf{x}_1) & \cdots & K_{\boldsymbol{\theta}}(\mathbf{x}_1, \mathbf{x}_n) \\
\vdots & \ddots & \vdots \\
K_{\boldsymbol{\theta}}(\mathbf{x}_n, \mathbf{x}_1) & \cdots & K_{\boldsymbol{\theta}}(\mathbf{x}_n, \mathbf{x}_n)
\end{pmatrix}
\right),
\]
where $\mu_{\boldsymbol{\theta}}(\mathbf{x}) = \mathbb{E}[f(\mathbf{x})]$ is the mean function and $K_{\boldsymbol{\theta}}(\mathbf{x}, \mathbf{x}') = \mathbb{E}[(f(\mathbf{x}) - \mu(\mathbf{x}))(f(\mathbf{x}') - \mu(\mathbf{x}'))]$ is the covariance function (or kernel) for any $\mathbf{x},\mathbf{x}' \in \mathbb{R}^n$ . The mean and covariance functions fully define a GP. They usually depend on a vector of hyparparameters ${\boldsymbol{\theta}} \in \mathbb{R}^{n_p}$.
\end{definition}

Generally, the mean function $\mu_{\boldsymbol{\theta}}$ of a GP is assumed to be zero, so its covariance function entirely characterises the properties and behavior of the process.

Now consider a deterministic SISO  LTI model:
\begin{align}
\label{eq:sys1}
d \mathbf{x}(t)&=A\mathbf{x}(t)dt+B u(t)dt,\\
\label{eq:sys2}
y(t)&=C\mathbf{x}(t)+Du(t),
\end{align}
where $t\geq 0$ is time,  $u(t) \in \mathbb{R}$ is the input signal, $y(t) \in \mathbb{R}$ is the output signal and $\mathbf{x}(t) \in \mathbb{R}^{n_x}$ is the state. The matrices are $A \in \mathbb{R}^{n_x \times n_x}$, $B \in \mathbb{R}^{n_x \times 1}$, $C \in \mathbb{R}^{1 \times n_x}$, $D \in \mathbb{R}$. We assume  the system is stable, that is the eigenvalues of $A$ have strictly negative real parts. The output of a SISO LTI system can be written as
\begin{equation}
y(t) = \int_{0}^t h(t-\tau)u(\tau) d\tau,
\end{equation}
where $h$ is the impulse response  (we have assumed that $u(t)=0$ for $t<0$).\footnote{Here, we considered a deterministic, classical, stable SISO LTI system. For a stochastic LTI system with classical Wiener noise, the output $y(t)$   becomes stochastic. In this case, the convolution $\int_{0}^t h(t-\tau)u(\tau) d\tau$ coincides with the mean of $y(t)$.} The impulse response of the above system depends on  the matrix parameters as follows: $h(t)=Ce^{At}B+D\delta(t)$, where $e^M$ denotes the \textit{matrix exponential} of $M$ and $\delta(t)$ is the Dirac's delta.
It is well known that the term $Ce^{At}B$ can be written as the linear combination of decaying sinusoidal functions:\footnote{We do not consider here the case where $A$ has eigenvalues with multiplicity greater than one.}
\begin{equation}
\label{eq:h}
h(t) = \sum_{i=1}^{n_c} e^{-\alpha_{i} t} (\beta_{i}\cos(\omega_{i} t)-\gamma_{i}\sin(\omega_{i} t))+D\delta(t),
\end{equation}
where  $\alpha_{i}>0$, $\omega_{i} \in \mathbb{R}$ are the decaying rate, and, respectively,  the angular frequency of the i-th damped
oscillation, and $\beta_{i},\gamma_{i} \in \mathbb{R}$ are scaling parameters. The number of components is $n_c=n_x/2$, where we have assumed that $n_x\geq 2$ is an even number.  For simplicity, in this section we  assume that $D=0$.

We now consider the case where the parameters $\alpha_i,\omega_i,\beta_i,\gamma_i$ are randomly generated from some specific probability distributions.
We report results in dynamic system identification \citep{zorzi2018harmonic}, which allow us to define distributions over these parameters that generate universal kernel functions.

\begin{lemma}[\citep{zorzi2018harmonic}]
\label{lem:1} For $i=1,\dots,n_c$, assume that the parameters $\alpha_{i},\omega_{i}$ of the impulse response \eqref{eq:h} are jointly sampled from a PDF $\phi_{\boldsymbol{\theta}}(\alpha,\omega)$ (depending on some hyperparameters ${\boldsymbol{\theta}}$),  and $\beta_{i},\gamma_{i} \sim N(0,1/\sqrt{n_{c}})$ independently.
Then, given the parameters $\{\alpha_{i},\omega_{i}\}_{i=1}^{n_c}$, the impulse response $h(t)$ is GP distributed with zero-mean function and covariance
 function
\begin{equation}
\label{eq:fintiecov}
 E[h(t)h(s)]   =\frac{1}{n_c}\sum_{i=1}^{n_c}  e^{-\alpha_{i} (t+s)} \cos(\omega_{i} (t-s)),
\end{equation}
for all time instants $t,s$.
Then, in the limit as $n_c \rightarrow \infty$, $E[h(t)h(s)]$ tends to:
\begin{equation}
\label{eq:covlimit}
\begin{aligned}
 K_{\boldsymbol{\theta}}(t,s)=\int_{0}^{\infty} \int_{-\infty}^{\infty} \phi_{\boldsymbol{\theta}}(\alpha,\omega)e^{-\alpha (t+s)} \cos(\omega (t-s))d\alpha d\omega.
 \end{aligned}
\end{equation}
Moreover, at the limit, $h(t) \sim \mathcal{GP}(0,K_{\boldsymbol{\theta}}(t,s))$.
\end{lemma}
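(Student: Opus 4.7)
My plan is to split the argument into a conditional part (for fixed $\{\alpha_i,\omega_i\}$) and an asymptotic part ($n_c\to\infty$), combining a direct covariance calculation with a law-of-large-numbers and multivariate-CLT argument. First, I would condition on $\{\alpha_i,\omega_i\}_{i=1}^{n_c}$: expression \eqref{eq:h} then writes $h(t)$ as a deterministic linear combination, in $t$, of the independent Gaussians $\beta_i,\gamma_i$, so for any finite collection of times $t_1,\dots,t_k$ the vector $(h(t_1),\dots,h(t_k))$ is an affine image of a Gaussian vector and is therefore jointly Gaussian, which is precisely the GP property. The zero-mean claim follows immediately from $E[\beta_i]=E[\gamma_i]=0$. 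For the covariance, I would expand $h(t)h(s)$ into a double sum, note that independence across $i$ together with $E[\beta_i\gamma_j]=0$ kills all off-diagonal and cross terms, substitute the variance $E[\beta_i^2]=E[\gamma_i^2]=1/n_c$, and collapse via $\cos(\omega t)\cos(\omega s)+\sin(\omega t)\sin(\omega s)=\cos(\omega(t-s))$ to recover \eqref{eq:fintiecov}.

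Second, I would take the limit $n_c\to\infty$ at the level of the conditional covariance. Expression \eqref{eq:fintiecov} is the sample mean $\tfrac{1}{n_c}\sum_i F(\alpha_i,\omega_i;t,s)$ of i.i.d.\ draws from $\phi_{\boldsymbol{\theta}}$, where $F(\alpha,\omega;t,s)=e^{-\alpha(t+s)}\cos(\omega(t-s))$ is bounded by $1$ for $t,s\geq 0$ and $\alpha\geq 0$. Hence $F$ is integrable against $\phi_{\boldsymbol{\theta}}$, and the strong law of large numbers yields almost-sure convergence to the population mean $E_{\phi_{\boldsymbol{\theta}}}[F]=K_{\boldsymbol{\theta}}(t,s)$, which is the limit \eqref{eq:covlimit}.

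Third, to promote this to an unconditional GP statement, I would rescale $\tilde\beta_i=\sqrt{n_c}\,\beta_i$ and $\tilde\gamma_i=\sqrt{n_c}\,\gamma_i$, which are i.i.d.\ standard normals independent of $(\alpha_i,\omega_i)$. This rewrites $h(t)=\tfrac{1}{\sqrt{n_c}}\sum_i Y_i(t)$ with $Y_i(t)=e^{-\alpha_i t}(\tilde\beta_i\cos(\omega_i t)-\tilde\gamma_i\sin(\omega_i t))$ i.i.d., zero-mean, and with uniformly bounded second moments. The multivariate CLT applied to the vectors $(Y_i(t_1),\dots,Y_i(t_k))$ then gives convergence in distribution of $(h(t_1),\dots,h(t_k))$ to a centred Gaussian whose covariance entries equal $E[Y_1(t_a)Y_1(t_b)]=K_{\boldsymbol{\theta}}(t_a,t_b)$, which are exactly the finite-dimensional marginals of $\mathcal{GP}(0,K_{\boldsymbol{\theta}})$.

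The only delicate point, though essentially technical, is keeping the conditional and unconditional statements cleanly separated: Lemma~\ref{lem:1} first asserts Gaussianity conditional on $\{\alpha_i,\omega_i\}$ with the random covariance \eqref{eq:fintiecov}, and only in the limit asserts an unconditional GP with the deterministic kernel $K_{\boldsymbol{\theta}}$. The CLT route above gives the unconditional limit directly, and the moment bound $|F|\leq 1$ together with $\phi_{\boldsymbol{\theta}}$ being a PDF is enough to apply Fubini and identify the limiting covariance with the integral in \eqref{eq:covlimit}, so no serious obstacle arises.
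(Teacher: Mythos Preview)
Your proposal is correct and follows essentially the same route as the paper: the paper does not give a detailed proof of Lemma~\ref{lem:1} (it is cited from \citep{zorzi2018harmonic}), but the accompanying remark and footnote explain that the limit \eqref{eq:covlimit} is obtained precisely via the strong law of large numbers for Monte Carlo integration, which is exactly your step~2. Your step~1 (conditional Gaussianity and the covariance identity via $\cos\omega t\cos\omega s+\sin\omega t\sin\omega s=\cos\omega(t-s)$) is the standard computation underlying the cited result, and your step~3 (multivariate CLT for the unconditional limiting GP) is a clean way to make the final clause precise, which the paper leaves implicit.
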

Note that, $\phi_{\boldsymbol{\theta}}(\alpha,\omega)$ is a generalised power spectral density (that is, a nonnegative function  satisfying $\phi_{\boldsymbol{\theta}}(\alpha,\omega)=\phi_{\boldsymbol{\theta}}(\alpha,-\omega)$), which has been normalised to integrate to one. We can then interpret it as a PDF and, therefore, the limit
in \eqref{eq:covlimit} follows by the strong law of large numbers.\footnote{
  Lemma \ref{lem:1} relies on a different convergence result than that originally used in \citep{zorzi2018harmonic}. Specifically, we employ the strong law of large numbers of Monte Carlo integration, whereas \citep{zorzi2018harmonic} relies on the convergence of Riemann sums.}

\begin{proposition}[\citep{zorzi2018harmonic}]
\label{prop:1}
Assume that the impulse response $h(t)$ is GP distributed with zero mean function and covariance function \eqref{eq:covlimit}. Depending on the choice of the probability density function $\phi_{\boldsymbol{\theta}}(\alpha,\omega)$,  from \eqref{eq:covlimit} we can derive  that
\begin{enumerate}
\item if $\phi_{\boldsymbol{\theta}}(\alpha,\omega)=\delta(\alpha)\mathcal{N}(\omega; \omega_0, \frac{1}{\ell^2})$ then 
\begin{equation}
    K_{\boldsymbol{\theta}}^{(SE)}(t,s)=e^{-\frac{(t-s)^2}{2\ell^2}}\cos(\omega_0(t-s));
\end{equation}
\item if $\phi_{\boldsymbol{\theta}}(\alpha,\omega)=\frac{\alpha}{\pi(\omega^2+\alpha)}\text{Unif}(\alpha; \frac{\alpha_m}{2},\frac{\alpha_M}{2})$ then the kernel is 
\begin{equation}
\label{eq:TC}
K_{\boldsymbol{\theta}}^{(TC)}(t,s)=\frac{e^{-\alpha_m \max(t,s)}-e^{-\alpha_M \max(t,s)}}{2\max(t,s)(a_M-a_m)};
\end{equation}
\end{enumerate}
where $\delta(\alpha)$ is the Dirac's delta. The vector ${\boldsymbol{\theta}}$ includes the hyperparameters, such as, for instance,    $\ell>0$ (lengthscale), $\omega_0 \in [0,2\pi]$ and  $0<\alpha_m<\alpha_M$.
\end{proposition}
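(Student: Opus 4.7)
The plan is to carry out two direct computations of the integral defining $K_{\boldsymbol{\theta}}(t,s)$ in \eqref{eq:covlimit}, one for each choice of $\phi_{\boldsymbol{\theta}}(\alpha,\omega)$. In both cases the integrand factorises into an $\alpha$-dependent exponential and an $\omega$-dependent cosine, which allows the two integrations to be performed separately (or in a convenient order) using standard Fourier-type identities. Throughout I write $\tau = t-s$ and $\sigma = t+s$.

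For the first case, I would substitute $\phi_{\boldsymbol{\theta}}(\alpha,\omega)=\delta(\alpha)\mathcal{N}(\omega;\omega_0,1/\ell^2)$ and perform the $\alpha$-integration first. The Dirac delta forces $\alpha=0$, which kills the factor $e^{-\alpha\sigma}$. What remains is the integral of $\cos(\omega\tau)$ against a Gaussian density of mean $\omega_0$ and variance $1/\ell^2$, i.e.\ the real part of the characteristic function of that Gaussian evaluated at $\tau$. A standard calculation gives $\mathrm{Re}\bigl[e^{i\omega_0\tau - \tau^2/(2\ell^2)}\bigr] = \cos(\omega_0\tau)\,e^{-\tau^2/(2\ell^2)}$, which is exactly $K^{(SE)}_{\boldsymbol{\theta}}(t,s)$.

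For the second case, I would integrate over $\omega$ first. The $\omega$-dependent factor $\tfrac{\alpha}{\pi(\omega^2+\alpha^2)}$ is a Cauchy/Lorentzian density whose Fourier transform is the two-sided exponential $e^{-\alpha|\tau|}$, so the inner integral yields $e^{-\alpha|\tau|}$. Multiplying by $e^{-\alpha\sigma}$ and using the identity $\sigma + |\tau| = 2\max(t,s)$, the integrand reduces to $e^{-2\alpha\max(t,s)}$ weighted by the uniform density on $[\alpha_m/2,\alpha_M/2]$. The remaining $\alpha$-integration is elementary (a single exponential primitive) and, after the change of variable $u=2\alpha$, produces the difference of exponentials in $\max(t,s)$ with the denominator $2\max(t,s)(\alpha_M-\alpha_m)$ stated in \eqref{eq:TC}.

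The only genuine subtlety, and the step I would handle most carefully, is the rigorous treatment of the Dirac delta in case 1: strictly speaking $\phi_{\boldsymbol{\theta}}$ is then a generalised density, so one should either interpret \eqref{eq:covlimit} distributionally, or obtain the claimed kernel as the limit of the corresponding integrals when $\delta(\alpha)$ is replaced by a sequence of narrow non-negative densities concentrating at $0$ (e.g.\ Gaussians of vanishing width), with dominated convergence justifying the passage to the limit. Apart from that, the bookkeeping of the normalisation constants (in particular the factor of $2$ arising from the support $[\alpha_m/2,\alpha_M/2]$ of the uniform density under the change of variable $u=2\alpha$) is the main thing to double-check; the rest is routine.
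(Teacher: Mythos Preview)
Your approach is correct and coincides with the paper's own methodology: Proposition~\ref{prop:1} is not proved in the paper itself (it is attributed to \citep{zorzi2018harmonic}), but the same Cauchy--Fourier identity $\int \tfrac{\beta}{\pi(\omega^2+\beta^2)}\cos(\omega\tau)\,d\omega = e^{-\beta|\tau|}$ and the same reduction $\sigma+|\tau|=2\max(t,s)$ are precisely what the paper invokes in the proof of Theorem~\ref{th:clt}, so your computation is in the intended spirit.

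Two minor notes on the bookkeeping you already flagged. First, your tacit reading of $\omega^2+\alpha$ as $\omega^2+\alpha^2$ in the density is indeed required for the argument to go through (otherwise it is not a Cauchy kernel in $\omega$); the analogous formula in Theorem~\ref{th:clt}, with $\tfrac{\alpha^2/2}{\pi(\omega^2+\alpha^4/4)}$, confirms this is the intended form. Second, carrying the uniform normalisation $\tfrac{2}{\alpha_M-\alpha_m}$ through the change of variable $u=2\alpha$ actually yields $\tfrac{e^{-\alpha_m\max(t,s)}-e^{-\alpha_M\max(t,s)}}{\max(t,s)(\alpha_M-\alpha_m)}$, i.e.\ without the extra factor $2$ in the denominator of \eqref{eq:TC}; this appears to be a further typo in the statement rather than an error in your derivation.
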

Observe that, for $\omega_0=0$, the first kernel function coincides with the
Squared-Exponential kernel, which is a standard kernel in machine learning due to its universal function approximation  property \citep{rasmussen2010gaussian}. For $\omega_0\neq0$, one obtains one component of a spectral mixture kernel \citep{wilson2013gaussian}, which is also universal. The second kernel is known as the integrated
tuned-correlated kernel \citep{chen2012estimation,pillonetto2016regularized}, whose universality holds for representing the impulse response of a linear dynamical system. Therefore, in the sequel, we will focus on this  kernel.

\begin{remark}
Proposition \ref{prop:1} shows that, for suitable choices of the distributions of the random parameters $\alpha_i$ and $\omega_i$, a reservoir composed of $n_c$ SISO LTI systems gives rise to a Gaussian joint distribution for its output signal, whose covariance coincides with standard kernels commonly used in machine learning. In particular, by appropriately selecting the distribution of the parameters, one can recover well-known kernels such as the squared exponential or the integrated tuned-correlated kernel (and many others), thereby providing a principled interpretation of reservoir features in terms of reproducing kernel Hilbert spaces \citep{paulsen2016introduction}.
\end{remark}

The GP prior on $h$ induces a GP prior on $y(t)=\int_{0}^t h(t-\tau)u(\tau)d\tau$ (because the integral is a linear operator and preserves gaussianity). In particular, we have
\begin{equation}
\label{eq:GPony}
   y(t) \sim \mathcal{GP}\left(0, \int_{0}^t \int_{0}^{t'} K_{\boldsymbol{\theta}}(t-\tau_1,t'-\tau_2)u(\tau_1)u(\tau_2)\right).
\end{equation}
While in the previous proposition we considered the limiting case $n_c \rightarrow \infty$, in the context of reservoir computing we focus on the finite case, where $n_c$ is finite.\footnote{The finite case is also of interest in machine learning as technique in kernel approximation for scalability of kernel methods \citep{rahimi2007random}.} In this regime, the impulse response can be \textit{physically realised} by a state-space model (i.e., by a physical system), since
$$
h(t) = \sum_{i=1}^{n_c} e^{-\alpha_i t} \left( \beta_i \cos(\omega_i t) - \gamma_i \sin(\omega_i t) \right) = C e^{At} B,
$$
where the matrices $A,B,C$ include the parameters $\{\alpha_i,\beta_i,\gamma_i,\omega_i\}_{i=1}^{n_c}$.

\begin{proposition}
\label{prop:linBlocks}
Assume the parameters $(\alpha_i,\omega_i)_{i=1}^{n_c}$ are sampled from $\phi_{\boldsymbol{\theta}}(\alpha,\beta)$. Define
\begin{equation}
\label{eq:h_i}
h_j(t)=\sum_{i=1}^{n_c} e^{-\alpha_i t} \left( \beta^{(j)}_i \cos(\omega_i t) - \gamma^{(j)}_i \sin(\omega_i t) \right),
\end{equation}
where $\beta^{(j)}_i,\gamma^{(j)}_i \sim N(0,1/\sqrt{n_c})$ independently for $i=1,\dots,n_c$ and $j=1,\dots,d$. Denote with $y_j(t)=\int_{0}^t h_j(t-\tau)u(\tau)d\tau$.
Define ${\bf y}(t)=\frac{1}{\sqrt{d}}(y_1(t),\dots,y_d(t))^\top$, then  we have that for a large $d$ and $n_c$,
\begin{equation}
\label{eq:yy}
{\bf y}^\top(t){\bf y}(t') \approx \int_{0}^t\int_{0}^{t'}  K_{{\boldsymbol{\theta}},n_c}(t-\tau_1,t'-\tau_2)u(\tau_1)u(\tau_2)d\tau_1d\tau_1,
\end{equation}
where $K_{{\boldsymbol{\theta}},n_c}(\cdot,\cdot)$ is the covariance defined in \eqref{eq:covlimit}.
\end{proposition}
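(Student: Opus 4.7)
The plan is to reduce the proposition to an empirical-average concentration statement for the impulse responses, to which Lemma~\ref{lem:1} can be applied. First, I would expand the inner product by bilinearity/Fubini,
\begin{equation}
{\bf y}^\top(t){\bf y}(t') = \int_0^t\!\!\int_0^{t'} \!\left[\frac{1}{d}\sum_{j=1}^d h_j(t-\tau_1)h_j(t'-\tau_2)\right] u(\tau_1)u(\tau_2)\, d\tau_1 d\tau_2,
\end{equation}
so the task reduces to showing that, for each pair of time lags $(s_1, s_2) = (t-\tau_1, t'-\tau_2)$, the bracketed empirical average approximates $K_{{\boldsymbol{\theta}},n_c}(s_1, s_2)$, and then justifying that this pointwise approximation can be pulled through the outer double integral.

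Next, I would observe that the $d$ impulse responses $h_j$ share the common draw of $\{\alpha_i,\omega_i\}_{i=1}^{n_c}$ but have weights $\beta_i^{(j)},\gamma_i^{(j)}$ that are independent across $j$. Conditional on $\{\alpha_i,\omega_i\}$, the products $h_j(s_1)h_j(s_2)$ are therefore i.i.d.\ across $j$, and by the same Gaussian computation that underlies Lemma~\ref{lem:1} their common conditional expectation is exactly $\tfrac{1}{n_c}\sum_{i=1}^{n_c} e^{-\alpha_i(s_1+s_2)}\cos(\omega_i(s_1-s_2))$, i.e.\ $K_{{\boldsymbol{\theta}},n_c}(s_1,s_2)$. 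Applying the strong law of large numbers (conditionally) in $d$ gives
\begin{equation}
\frac{1}{d}\sum_{j=1}^d h_j(s_1)h_j(s_2) \xrightarrow[d\to\infty]{a.s.} K_{{\boldsymbol{\theta}},n_c}(s_1,s_2),
\end{equation}
which is the claimed approximation. A further Monte Carlo limit in $n_c$, as in Lemma~\ref{lem:1}, replaces the finite-$n_c$ kernel with its limit $K_{\boldsymbol{\theta}}$ and explains the use of "$\approx$" for large but finite $n_c$.

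Finally, to pass convergence through the double integral, I would work on the compact window $(\tau_1,\tau_2)\in[0,t]\times[0,t']$. Each $h_j$ is a finite sum of damped oscillations with Gaussian coefficients of finite second moment, so the integrand is bounded in $L^2$ over the random weights, uniformly on this compact set, and $u$ is bounded on the same window; thus dominated convergence yields the integral form on the right-hand side of \eqref{eq:yy}. I expect the main obstacle to be the careful handling of the \emph{joint} limit in $d$ and $n_c$: the cleanest route is to condition on $\{\alpha_i,\omega_i\}$, send $d\to\infty$ first to obtain uniform convergence on the compact rectangle (via continuity of the $h_j$ and a Glivenko--Cantelli-type argument over a totally bounded index set of time-lags), and only then invoke the Monte Carlo convergence in $n_c$ from Lemma~\ref{lem:1}. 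A fully quantitative version would further require concentration estimates (Hoeffding-type for the $d$-average, Monte Carlo CLT for the $n_c$-average), but these are not needed for the stated asymptotic claim.
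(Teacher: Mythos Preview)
Your proposal is correct and follows essentially the same route as the paper: expand $\mathbf{y}^\top(t)\mathbf{y}(t')$ by bilinearity into a double integral of $\tfrac{1}{d}\sum_j h_j(s_1)h_j(s_2)$, use independence of the Gaussian weights across $i$ and $j$ to reduce (for large $d$) to the diagonal sum $\tfrac{1}{n_c}\sum_i e^{-\alpha_i(s_1+s_2)}\cos(\omega_i(s_1-s_2))$, and then invoke the Monte Carlo limit in $n_c$ from Lemma~\ref{lem:1}. The paper carries this out as an explicit term-by-term computation (writing out the double sum and dropping cross terms), whereas you phrase the same two limits more abstractly via conditional i.i.d.\ structure and the SLLN; your added remarks on dominated convergence and the order of the $(d,n_c)$ limits supply justification that the paper leaves implicit.
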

The proof of this proposition is provided in Appendix \ref{app:proofs}, which also contains the proofs of all results stated without explicit references. Proposition \ref{prop:linBlocks}, which follows directly from Lemma \ref{lem:1} combined with \eqref{eq:GPony},  shows how we can use $d$ parallel LTI systems to produce a vector ${\bf y}(t)$, whose squared euclidean norm approximates the covariance function of the GP in \eqref{eq:GPony}.
{
\begin{remark}
The rate of convergence of the left hand side of \eqref{eq:yy} to the right hand side is $1/\sqrt{r}$ with $r=\min(d,n_c)$, which follows by the properties of Monte Carlo integration.
\end{remark}
}
Now we are ready to define the last component of the classical Wiener architecture, that is a memoryless nonlinearity. Assume we use the vector ${\bf y}(t)$  as input for the memoryless function $f$, as shown in Figure \ref{fig:1}. The output is $o(t)=f(y_1(t),\dots,y_d(t))$ at each time instant $t$. Since we do not know $f$ we put  a GP prior on it, that is $f \sim \mathcal{GP}(0,K^o_{\boldsymbol{\vartheta}}({\bf y}(t),{\bf y}(t')))$. Note that, contrary to the previous case, the kernel has the vector ${\bf y}(t)$ as input and it is memoryless, because the input only depends on the value of ${\bf y}(t)$ at time $t$.
\begin{proposition}
\label{prop:deepKernel}
Assume    $f \sim \mathcal{GP}(0,K^o_{\boldsymbol{\vartheta}}({\bf y}(t),{\bf y}(t')))$ with either a polynomial kernel \citep[Sec.4.2]{rasmussen2010gaussian}:
\begin{equation}
\label{eq:finitedeepkernelpoly}
K^o_{\boldsymbol{\vartheta}}(\mathbf{y}(t), \mathbf{y}(t')) = \sigma_f^2 (\mathbf{y}(t)^\top
          \mathbf{y}(t') + c)^{deg},
\end{equation}
with ${\boldsymbol{\vartheta}}=(\sigma_f,c)$ denoting the scale and off-set hyperparameter and $\text{deg}$ the degree, or a squared-exponential kernel \citep[Sec.4.2]{rasmussen2010gaussian}:
\begin{equation}
\label{eq:finitedeepkernel}
K^o_{\boldsymbol{\vartheta}}(\mathbf{y}(t), \mathbf{y}(t')) = \sigma_f^2 \exp\left( -\frac{1}{2\ell^2} \|\mathbf{y}(t) - \mathbf{y}(t')\|^2 \right),
\end{equation}
with ${\boldsymbol{\vartheta}}=(\sigma_f,\ell)$ denoting the scale and lengthscale hyperparameter. Then, for a large $d$ and $n_c$, we have that, for the polynomial kernel
\begin{equation}
\label{eq:deepkernelpoly}
\begin{aligned}
&K^o_{\boldsymbol{\vartheta}}(\mathbf{y}(t), \mathbf{y}(t'))\approx  \sigma_f^2 \Big(\\
&\int_{0}^{t}\int_{0}^{t'}   K_{{\boldsymbol{\theta}},n_c}(t-\tau_1,t'-\tau_2)u(\tau_1)u(\tau_2)d\tau_1d\tau_1 + c \Big)^{deg},
\end{aligned}
\end{equation}
or, for the squared-exponential kernel,
\begin{equation}
\label{eq:deepkernel}
\begin{aligned}
&K^o_{\boldsymbol{\vartheta}}(\mathbf{y}(t), \mathbf{y}(t')) \approx  \sigma_f^2 \exp\Bigg(\\
&-\frac{1}{2\ell^2} \int_{0}^t\int_{0}^t   K_{{\boldsymbol{\theta}},n_c}(t-\tau_1,t-\tau_2)u(\tau_1)u(\tau_2)d\tau_1d\tau_1\\
&-\frac{1}{2\ell^2} \int_{0}^{t'}\int_{0}^{t'}   K_{{\boldsymbol{\theta}},n_c}(t'-\tau_1,t'-\tau_2)u(\tau_1)u(\tau_2)d\tau_1d\tau_1\\
&+\frac{2}{2\ell^2} \int_{0}^{t}\int_{0}^{t'}   K_{{\boldsymbol{\theta}},n_c}(t-\tau_1,t'-\tau_2)u(\tau_1)u(\tau_2)d\tau_1d\tau_1\Bigg) .
\end{aligned}
\end{equation}
\end{proposition}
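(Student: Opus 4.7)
The plan is to reduce both claims to a direct application of Proposition \ref{prop:linBlocks}, applied to three pairs of time arguments, and then to propagate the approximation through the two scalar nonlinearities (polynomial and exponential) by continuity. The core observation is that both $K^o_{\boldsymbol{\vartheta}}$ depend on $\mathbf{y}(t),\mathbf{y}(t')$ only through the inner-product expressions $\mathbf{y}(t)^\top\mathbf{y}(t')$, $\mathbf{y}(t)^\top\mathbf{y}(t)$ and $\mathbf{y}(t')^\top\mathbf{y}(t')$, for each of which Proposition \ref{prop:linBlocks} provides the explicit approximation
\[
\mathbf{y}(s)^\top\mathbf{y}(s')\approx\int_{0}^{s}\!\int_{0}^{s'} K_{{\boldsymbol{\theta}},n_c}(s-\tau_1,s'-\tau_2)\,u(\tau_1)u(\tau_2)\,d\tau_1 d\tau_2 ,
\]
with error decaying like $1/\sqrt{\min(d,n_c)}$ by the Monte Carlo convergence remark that precedes the proposition.

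For the polynomial kernel case I would first substitute this approximation into $\sigma_f^2(\mathbf{y}(t)^\top\mathbf{y}(t')+c)^{\mathrm{deg}}$; since the map $z\mapsto\sigma_f^2(z+c)^{\mathrm{deg}}$ is a polynomial and hence Lipschitz on bounded sets, the error inherited from the inner-product approximation propagates to give \eqref{eq:deepkernelpoly}. For the squared-exponential case I would expand
\[
\|\mathbf{y}(t)-\mathbf{y}(t')\|^2 = \mathbf{y}(t)^\top\mathbf{y}(t) - 2\,\mathbf{y}(t)^\top\mathbf{y}(t') + \mathbf{y}(t')^\top\mathbf{y}(t'),
\]
apply Proposition \ref{prop:linBlocks} separately to each of the three inner products (with argument pairs $(t,t)$, $(t,t')$, $(t',t')$), substitute into the exponent, and then pass the approximation through $z\mapsto\sigma_f^2 e^{-z/(2\ell^2)}$, which is locally Lipschitz; this yields \eqref{eq:deepkernel}.

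The main technical point — and really the only place care is needed — is verifying that the arguments of the polynomial or exponential remain in a bounded region so that the local Lipschitz constant is uniform and the $\approx$ survives the composition. This is immediate once one notes that, for fixed bounded inputs $u$ on the finite horizon $[0,\max(t,t')]$ and under the stability implicit in the sampling distribution $\phi_{\boldsymbol{\theta}}$, the kernel $K_{{\boldsymbol{\theta}},n_c}$ and the resulting convolutions are bounded, and the Monte Carlo estimate $\mathbf{y}(s)^\top\mathbf{y}(s')$ concentrates around its limit. I expect this boundedness argument (together with keeping track of the $1/\sqrt{\min(d,n_c)}$ rate through the nonlinearity) to be the only nontrivial step; the rest is algebraic substitution.
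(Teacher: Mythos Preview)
Your proposal is correct and follows essentially the same approach as the paper's own proof: reduce to Proposition~\ref{prop:linBlocks} for the inner products, substitute directly for the polynomial kernel, and expand $\|\mathbf{y}(t)-\mathbf{y}(t')\|^2$ into three inner products for the squared-exponential case. If anything, you are more careful than the paper, which states the substitution without discussing the Lipschitz/boundedness step you identify.
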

The proof provided in Appendix \ref{app:proofs} specialises the construction given in \citep{cho2009kernel} to a Wiener architecture.
\begin{remark}
  Proposition \ref{prop:deepKernel} shows that an input signal $u(t)$, when passed through $d$ stable, SISO LTI state-space models -- each differing only in their sampled parameters $\beta^{(j)}_i, \gamma^{(j)}_i \sim \mathcal{N}(0, 1/\sqrt{n_c})$, independently for $i = 1, \dots, n_x$ and $j = 1, \dots, d$ -- produces a feature vector ${\bf y}(t)=\frac{1}{\sqrt{d}}(y_1(t),\dots,y_d(t))^\top$. When this feature vector is used within either a polynomial or a squared-exponential kernel, the resulting covariance function converges to a deep kernel in the limit as $d \rightarrow \infty$, thereby providing a principled interpretation of reservoir features in terms of reproducing kernel Hilbert spaces.  
\end{remark}
Note that the polynomial kernel is relevant here because, in Wiener's original proof of universality, the final layer involves a polynomial nonlinearity. However, the polynomial kernel can also be replaced with other nonlinear kernels that exhibit better extrapolation properties  (such as the squared-exponential one), that is, kernels whose induced function spaces generalise more reliably outside the range of the training data, rather than being dominated by rapidly growing polynomial terms.

In summary, because the classical Wiener architecture induces a deep kernel, the model can be trained on a given dataset by applying standard GP regression to the final GP layer. The computations involved in GP regression are analogous to those in linear regression with nonlinear basis functions. This effectively implements a classical RC architecture, where only the last layer is trained.
In particular, the training dataset is:
\[
\mathcal{D}_m = \{ u(k \Delta), \, \mathbf{y}(k \Delta), \, o(k \Delta) \}_{k=1}^m,
\] 
where $\mathbf{y}(k \Delta) \in \mathbb{R}^d$ is a feature vector obtained by passing the input signal $u(t)$ through $d$ SISO LTI state-space models. The output $o(t)$ is modelled as
\[
o(t_i) = f({\mathbf{y}}(t_i)) + \varepsilon_i, \quad \varepsilon_i \sim \mathcal{N}(0, \sigma^2), \quad t_i = i \Delta,
\]
where $\Delta$ is the sampling interval and $\sigma^2$ is the variance of the measurement noise. Analytical derivations for the computation of the Gaussian posterior of $f$ and the GP predictions are provided in Appendix~\ref{app:gp_posterior}.

\begin{remark}
By mapping a classical Wiener architecture to a deep kernel, we can characterise its input–output behavior within a probabilistic framework, linking deep dynamical systems to compositional kernel models. This will allow us, in the next section, to understand the differences in expressivity between the classical and quantum case.
\end{remark}

\subsection{Quantum  systems}
In the quantum Wiener architecture we propose, classical SISO LTI systems are replaced with quantum linear dynamical systems. Before introducing these systems in the next section, we first establish some preliminary notation. Consider an n-mode quantum system with annihilation
and creation operators $\ac_j,\ac_j^*$ for $j=1,2,\dots,n$, obeying the
boson commutation relations
$[\ac_j,\ac_k^*] =\delta_{jk}$, $[\ac_j,\ac_k] =[\ac_j^*,\ac_k^*] =0$, 
where $[a,b]=ab-ba$ and $\delta_{jk}$ is the Kronecker delta. The operators $\ac_j,\ac_j^*$ can equivalently be described in terms of the real quadrature components (e.g., position $\qc_j$ and momentum $\pc_j$):
\begin{align}
\label{eq:aprel}
\ac_j&=\dfrac{\qc_j + \iota \pc_j }{\sqrt{2}}, ~~\ac_j^* =\dfrac{\qc_j - \iota \pc_j }{\sqrt{2}},
\end{align}   
where $\iota$ is the complex-unit, with equivalent commutation relations: $[\qc_j,\pc_k] =\iota \delta_{jk},[\qc_j,\qc_k] =[\pc_j,\pc_k] =0$.
We introduce the vector
\begin{equation}
\boldsymbol{\zetac}^{(r)}=\begin{pmatrix}
\qc_1 &
\qc_2 &
\cdots &
\qc_n &
\pc_1&
\pc_2 &
\cdots &
\pc_n
\end{pmatrix}^\top,
\end{equation}
where the superscript $r$ denotes the real quadrature form. 
We can then rewrite the commutation relations as
$[ \zetac_i^{(r)}, \zetac_j^{(r)}] =\iota (J_n^{(r)})_{ij}$, where the $2n \times 2n$ is
\begin{align}
\label{eq:J}
J_n^{(r)}=\begin{pmatrix}
0_{n \times n} & I_n\\
-I_n & 0_{n \times n}
    \end{pmatrix}.
\end{align}
Note that, in quantum mechanics, any  linear transformation of the vector 
$\boldsymbol{\zetac}^{(r)}$, that is $\boldsymbol{\zetac}^{(r)'}=S  \boldsymbol{\zetac}^{(r)}$, must preserve the commutation relations. This  means that: $S  J_n^{(r)} S^{\top} = J_n^{(r)}$, which implies that the matrix $S$ must be \textit{sympletic}. Finally, observe that 
 $(\boldsymbol{\zetac}^{(r)}  \boldsymbol{\zetac}^{(r)\top})_{ij}=\tfrac{1}{2}\{\zetac_i^{(r)},\zetac_j^{(r)}\}+\tfrac{\iota}{2}(J_n^{(r)})_{ij}$, where $\{\zetac_i^{(r)},\zetac_j^{(r)}\}=\zetac_i^{(r)}\zetac_j^{(r)}+\zetac_j^{(r)}\zetac_i^{(r)}$ is the anti-commutator. {
  We recall that the  expectation value of any observable $\check{o}$  with respect to a density operator $\check{\rho}$ is defined as $\mathbb{E}(\check{o})= Tr(\check{\rho}\check{o})$.
We can then define the expectation of the vector of observables   $\boldsymbol{\zetac}^{(r)}$ as $\mathbb{E}(\boldsymbol{\zetac}^{(r)})=Tr(\check{\rho}\boldsymbol{\zetac}^{(r)} )$, applied element-wise to  the elements of $\boldsymbol{\zetac}^{(r)}$. Similarly, the covariance matrix is  defined element-wise as $
\mathbb{E}(\boldsymbol{\zetac}^{(r)}-\mathbb{E}(\boldsymbol{\zetac}^{(r)}))(\boldsymbol{\zetac}^{(r)}-\mathbb{E}(\boldsymbol{\zetac}^{(r)}))^\top=\tfrac{1}{2}Tr(\check{\rho}\{\boldsymbol{\zetac}^{(r)}-\mathbb{E}(\boldsymbol{\zetac}^{(r)}) ,\boldsymbol{\zetac}^{(r)}-\mathbb{E}(\boldsymbol{\zetac}^{(r)}) \})+\tfrac{\iota}{2}Tr(\check{\rho} J_n^{(r)})=Q^{(r)}+\tfrac{\iota}{2}J_n^{(r)}$. Therefore, this implies that
a real symmetric positive-definite $2n \times 2n$ matrix $Q^{(r)}$ is a bona fide (quantum) covariance matrix if
and only if the Hermitian matrix  $Q^{(r)}+\tfrac{\iota}{2}J_n^{(r)}$ is positive semidefinite \citep{simon1994quantum}. For instance, for $n=1$, we have that
{\scriptsize
$$
Q^{(r)}=
\begin{pmatrix}
\mathbb{E}(\qc_1-\mathbb{E}(\qc_1))^2& \tfrac{1}{2} \mathbb{E}(\{\qc_1-\mathbb{E}(\qc_1),\pc_1-\mathbb{E}(\pc_1)\}) \\
 \tfrac{1}{2}\mathbb{E}(\{\qc_1-\mathbb{E}(\qc_1),\pc_1-\mathbb{E}(\pc_1)\}) & \mathbb{E}(\pc_1-\mathbb{E}(\pc_1))^2\\
\end{pmatrix}.
$$}
Then the matrix $Q^{(r)}+\tfrac{\iota}{2}J_n^{(r)}$ is positive semidefinite if 
{
$$
\begin{aligned}
&det(Q^{(r)}+\tfrac{\iota}{2}J_n^{(r)})\\
&=
\mathbb{E}(\qc_1-\mathbb{E}(\qc_1))^2 \mathbb{E}(\pc_1-\mathbb{E}(\pc_1))^2\\
&- (\tfrac{1}{2}\mathbb{E}(\{\qc_1-\mathbb{E}(\qc_1),\pc_1-\mathbb{E}(\pc_1)\}))^2-\tfrac{1}{4}\geq 0,
\end{aligned}
$$}
which corresponds to the uncertainty principle. This is a first key difference between classical and quantum systems: the Hermitian matrix $Q^{(r)} + \tfrac{\iota}{2} J_n^{(r)}$ must be positive semidefinite; the positive semidefiniteness of $Q^{(r)}$ alone is not sufficient.}

 \subsubsection{Linear quantum dynamical systems}
 \label{sec:revLQS}
Now we will focus on linear quantum dynamical systems. 
 The Hamiltonian of a quantum linear system is at most quadratic in terms:
\[
H = \frac{1}{2} 
\boldsymbol{\zetac}^{(c)\dagger}
\begin{pmatrix}
M_1 & M_2 \\
M_2^* & M_1^*
\end{pmatrix}
\boldsymbol{\zetac}^{(c)} =\frac{1}{2} \boldsymbol{\zetac}^{(c)\dagger}
M
\boldsymbol{\zetac}^{(c)},
\]
where $\boldsymbol{\zetac}^{(c)}=\begin{pmatrix}
\ac_1 &
\ac_2 &
\cdots &
\ac_n &
\ac_1^*&
\ac_2^* &
\cdots &
\ac_n^*
\end{pmatrix}^\top$, 
which specifies the dynamics of the harmonic oscillators, with $M_1, M_2 \in \mathbb{C}^{n \times n}$ and $M = M^\dagger$.  
{ We assume that the harmonic oscillators are coupled to $s$ distinct external quantum boson fields.} The coupling between the harmonic oscillators and the input bosonic quantum fields is linear, taking the form
$\begin{pmatrix} N_1 & N_2 \end{pmatrix} \boldsymbol{\zetac}^{(c)}, \quad N_1, N_2 \in \mathbb{C}^{s \times n}$. Prior to interacting with the quantum system, the input fields may be processed by static devices, such as beamsplitters or phase shifters. This is modeled by a unitary scattering matrix \(S \in \mathbb{C}^{s \times s}\). In the Heisenberg picture, the joint dynamics of the harmonic oscillators and the quantum fields are described by a system of \textit{Quantum Stochastic Differential Equations} (QSDEs)  \citep{nurdin2017linear}. These can be equivalently represented as a \textit{Quantum State-Space} (QSS) model, expressed in the real-quadrature coordinates $\boldsymbol{\zetac}^{(r)} =(\qc_1,\ldots,\qc_n,\pc_1,\ldots,\pc_n)^{\top}$ as
 \citep{zhang2022linear}:
\begin{align}
\label{eq:realqsys10}
d \boldsymbol{\zetac}^{(r)}(t)&= A^{(r)} \boldsymbol{\zetac}^{(r)}(t)dt+L^{(r)}(\boldsymbol{u}(t)dt+d\boldsymbol{\wc}^{(r)}(t))\\
\label{eq:realqsys20}
d \boldsymbol{\check y}^{(r)}(t)&= C^{(r)} \boldsymbol{\zetac}^{(r)}(t)dt+D^{(r)} (\boldsymbol{u}(t)dt+d\boldsymbol{\wc}^{(r)}(t))
\end{align}
where
\begin{align}
\label{eq:Dr}
D^{(r)}&= \begin{pmatrix}
    \Re(S) & -\Im(S)\\
    \Im(S) & \Re(S)
\end{pmatrix},\\
\label{eq:Cr}
C^{(r)}&= \begin{pmatrix}
    \Re(N_1+N_2) & \Im(-N_1+N_2)\\
    \Im(N_1+N_2) & \Re(N_1-N_2)
\end{pmatrix},\\
\label{eq:Lr}
L^{(r)}&=  -C^{(r)\sharp}D^{(r)},\\
\nonumber
A^{(r)}&=\begin{pmatrix}
    \Im(M_1+M_2) & \Re(M_1-M_2)\\
       -\Re(M_1+M_2) & -\Im(-M_1+M_2)\\ 
\end{pmatrix} \\
\label{eq:Ar}
&-\frac{1}{2} C^{(r)\sharp}C^{(r)},
\end{align}
where $\Re$ denotes the real-part, $\Im$ is the imaginary part, the $\sharp$-adjoint of a matrix $X\in \mathbb{R}^{2s \times 2n}$ is defined as $X^\sharp=- J^{(r)}_nX^\dagger J^{(r)}_s$ with $J_n^{(r)}$ defined in \eqref{eq:J}, and { $\boldsymbol{u}(t)dt+d\boldsymbol{\wc}^{(r)}(t)$ is the vector of  the $s$ external input fields with} $\boldsymbol{\wc}^{(r)}(t)$ being a quantum Wiener process satisfying the It\={o} relationship:
$d\boldsymbol{\wc}^{(r)}(t)d\boldsymbol{\wc}^{(r)\top}(t)=Q^{(r)}+\tfrac{\iota}{2} J^{(r)}_{s}$.
It can be proved that the  system matrices $A^{(r)},L^{(r)},C^{(r)}$
satisfy the following physical realisability conditions:
\begin{align}
\label{eq:physicalreal1}
A^{(r)}J_n^{(r)}+J_n^{(r)}A^{(r)\top}+L^{(r)}J_s^{(r)}L^{(r)\top}&=0,\\
\label{eq:physicalreal2}
J_n^{(r)}L^{(r)}+C^{(r)\top}J_s^{(r)} D^{(r)}&=0,\\
\label{eq:physicalreal3}
D^{(r)}J_s^{(r)}D^{(r)\top}-J_s^{(r)}&=0.
\end{align}
Contrarily, we can also prove that for any system matrices $A^{(r)},L^{(r)},C^{(r)},D^{(r)}$ satisfying the physical realisability conditions, originates into a valid QSDE, that is the QSS can be physically realised by a genuine quantum-mechanical system. This also ensures that the fundamental commutation relations are preserved during temporal evolution. 
\begin{remark}
This is a key difference between classical linear dynamical systems and quantum ones. In the latter, the system matrices $A^{(r)},L^{(r)},C^{(r)},D^{(r)}$  cannot be chosen arbitrarily, which consequently imposes a limit on universality and must be taken into account when designing QRC.
\end{remark}

\begin{remark}
In \eqref{eq:realqsys10}--\eqref{eq:realqsys20}, we could alternatively include the input { $\boldsymbol{u}(t)$} as a classical signal, for example a photocurrent. However, we do not do so here because we want to consider the more general case in which the input is quantum, for instance, the amplitude of an input laser. This ensures that the quantum linear network is fully quantum, while only the readout layer will be  classical in our qWiener architecture.
\end{remark}

Finally, we recall that:

\begin{definition}
{\citep{nurdin2017linear,zhang2022linear}}
\label{def:comppas}
A   QSS is said to be completely passive if $M_2=0$, $N_2=0$,   
 $\Re(M_1)= \Re(M_1)^\top$, $\Im(M_1)=-\Im(M_1)^\top$.
\end{definition}

In quantum reservoir computing, understanding the realisability of quantum linear systems is crucial. A \textit{completely passive} system has a quadratic Hamiltonian and linear couplings, generating no energy and requiring no active elements and can be implemented using only passive optical components.

\subsubsection{Quantum Kalman filter}
The temporal evolution of an open quantum system under
continuous (weak) measurement is described by the quantum Kalman filter. By denoting the density operator of the measured system with $\check{\rho}_m(t)$ and redefining $\pi_t(\boldsymbol{\zetac}^{(r)})=Tr(\check{\rho}_m(t)\boldsymbol{\zetac}^{(r)})$. Note that, before measurement, $\check{\rho}(0)=\check{\rho}_m(0)$. 
We consider a homodyne detection of the first quadrature.
We  can rewrite the stochastic evolution of the observed system \eqref{eq:realqsys10}--\eqref{eq:realqsys20}   as  the quantum Kalman filtering equation \citep{zhang2022linear}:

\begin{align}
 \label{eq:KB1}
d\pi_t(\boldsymbol{\zetac}^{(r)})&= A^{(r)} \pi_t(\boldsymbol{\zetac}^{(r)} )dt+ L^{(r)}  {\bf u}(t) dt + G_td\boldsymbol{\nu}_q(t),\\
\nonumber
d{\bf y}_{qm}(t) &= C_q^{(r)} \pi_t(\boldsymbol{\zetac}^{(r)}) dt + D_q^{(r)} {\bf u}(t) dt \\
\label{eq:KB2}
&+\left(D_q^{(r)}D_q^{(r)\top}\right)d\boldsymbol{\nu}_q(t),
\end{align}
where $C_q^{(r)}=\begin{pmatrix} I_s & 0_s\end{pmatrix}C^{(r)}$, $D_q^{(r)}=D^{(r)}\begin{pmatrix} I_s & 0_s\end{pmatrix}^\top$, $G_t=V_t C_q^{(r)\top} + L^{(r)}Q^{(r)}D_q^{(r)\top}$ and the covariance matrix $V_t$ satisfies the Riccati differential equation:
{\small
\begin{align}
\nonumber
\dot{V}_t &=A^{(r)}V_t+ V_tA^{(r)\top} + L^{(r)}Q^{(r)}L^{(r)\top}\\
\label{eq:KB3}
&-G_t \left(D_q^{(r)}D_q^{(r)\top}\right)^{-1}G_t^{\top},
\end{align} }
 and 
\begin{align*}
\boldsymbol{\nu}_q(t)= &\left(D_q^{(r)}D_q^{(r)\top}\right)^{-1}\Big(\\
&{\bf y}_{qm}(t) - \int_{0}^t (C_q^{(r)} \pi_\tau(\boldsymbol{\zetac}^{(r)})  + D_q^{(r)} {\bf u}(\tau)  ) d\tau \Big)
\end{align*}
is the so-called  {\em innovation process} of the quantum Kalman filter. Note that $\boldsymbol{\nu}_q(t)$ is a classical standard Wiener process, $\mathbb{E}[\boldsymbol{\nu}_q(t)\boldsymbol{\nu}_q(t')^{\top}]=\min\{t,t'\}I_{2s}$. 

We recall the following definitions. For the system \eqref{eq:sys1}-\eqref{eq:sys2}, the matrix pair $(A ,B)$ is said to be \textit{stabilizable} if there exists some matrix $R$ such that $A-BR$ has all eigenvalues with strictly negative real-parts.
The matrix pair $(A ,C)$ is said to be \textit{detectable} if there exists some matrix $R$ such that $A-RC$ has all eigenvalues with strictly negative real-parts.

Given \eqref{eq:KB1}--\eqref{eq:KB2}, if the pair $(A^{(r)},C_q^{(r)})$ is detectable and $(A^{(r)},L^{(r)}Q^{1/2(r)})$ is stabilizable,  the quantum Kalman filter  converges to a unique steady-state \citep[Sec.\ 8.5]{simon2006optimal}, whose covariance $V_{\infty}$ is given by the solution of 
{\small
\begin{align}
\nonumber
0&=A^{(r)}V_{\infty}+ V_{\infty}A^{(r)\top} + L^{(r)}\tilde{Q}^{(r)}L^{(r)\top}\\
\label{eq:KB3ss}
&-G_{\infty}\left(D_q^{(r)}D_q^{(r)\top}\right)^{-1}G_{\infty}^{\top}.
\end{align} }

\begin{remark}~
\begin{itemize}
 \item
 It is important to notice that, in \eqref{eq:KB1}-\eqref{eq:KB2}, the impulse response between the input ${\bf u}(t)$ and the output ${\bf y}_{qm}$ is simply the restriction to the first quadrature of the impulse response of the system \eqref{eq:realqsys10}--\eqref{eq:realqsys20}. Therefore, by filtering out the contribution of the noise $\boldsymbol{\nu}_q$ into ${\bf y}_{qm}$ (induced by the continuous weak measurement), we reduce to the deterministic part of the system \eqref{eq:realqsys10}--\eqref{eq:realqsys20}.
 \item
 It is important to notice the particular form of the Kalman gain $G_{\infty}=V_{\infty} C_q^{(r)\top} + L^{(r)}Q^{(r)} D_q^{(r)\top}$, where the second term is due to the common noise  in the two equations in \eqref{eq:realqsys10}--\eqref{eq:realqsys20} (see for instance \citep[Sec.\ 8.4]{simon2006optimal}).
This means that the Kalman gain can be zero even when the term $V_{\infty} C_q^{(r)\top}$ is not zero.
\end{itemize}
\end{remark}
We illustrate the second remark above by proving an important  case in which $G_{\infty}=0$.

\begin{theorem}
\label{th:1}
Assume $Q^{(r)}=\tfrac{1}{2}I_{2s}$,  which is the covariance of the vacuum, $D^{(r)}=I_{2s}$, $n=s$, the matrix $A^{(r)}$ has eigenvalues with strictly negative real-parts, the pair $(A^{(r)},C_q^{(r)})$ is detectable and $(A^{(r)},L^{(r)}Q^{1/2(r)})$ is stabilizable and assume that the QSS \eqref{eq:realqsys10}--\eqref{eq:realqsys20} is completely passive. Then  the asymptotic Kalman gain $G_{\infty}=0$.
\end{theorem}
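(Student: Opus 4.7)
The plan is to guess that the steady-state covariance is the vacuum-level value $V_\infty = \tfrac{1}{2} I_{2s}$ and then verify that this choice both solves the algebraic Riccati equation (\ref{eq:KB3ss}) and annihilates $G_\infty$. Everything rests on a short algebraic identity available only in the completely passive case: substituting $M_2 = 0$ and $N_2 = 0$ into (\ref{eq:Cr}) gives $C^{(r)}$ the block form
\begin{equation*}
C^{(r)} = \begin{pmatrix} \Re(N_1) & -\Im(N_1) \\ \Im(N_1) & \Re(N_1) \end{pmatrix},
\end{equation*}
and a direct computation of the sharp-adjoint $C^{(r)\sharp} = -J_n^{(r)} C^{(r)\top} J_s^{(r)}$ on matrices of this form yields the identity $C^{(r)\sharp} = C^{(r)\top}$. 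Combined with $D^{(r)} = I_{2s}$ and definition (\ref{eq:Lr}), this gives $L^{(r)} = -C^{(r)\top}$, and hence $L^{(r)} D_q^{(r)\top} = -C_q^{(r)\top}$.

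With $Q^{(r)} = \tfrac{1}{2} I_{2s}$, plugging the candidate $V_\infty = \tfrac{1}{2} I_{2s}$ into the definition of $G_\infty$ immediately gives
\begin{equation*}
G_\infty = \tfrac{1}{2} C_q^{(r)\top} + \tfrac{1}{2} L^{(r)} D_q^{(r)\top} = 0,
\end{equation*}
after which (\ref{eq:KB3ss}) collapses to the Lyapunov identity $A^{(r)} + A^{(r)\top} + L^{(r)} L^{(r)\top} = 0$. I would verify this directly from the passive block structure: expanding (\ref{eq:Ar}) under $M_2 = 0$ with $\Re(M_1) = \Re(M_1)^\top$ and $\Im(M_1) = -\Im(M_1)^\top$, the Hamiltonian contribution to $A^{(r)}$ is antisymmetric and cancels in $A^{(r)} + A^{(r)\top}$, leaving $-C^{(r)\sharp} C^{(r)} + L^{(r)} L^{(r)\top}$; this vanishes because $L^{(r)} L^{(r)\top} = C^{(r)\sharp} (C^{(r)\sharp})^\top = C^{(r)\sharp} C^{(r)}$, using again the identity from the first step. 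Uniqueness then follows from standard filter-Riccati theory: under the stated detectability and stabilizability assumptions the stabilizing positive semidefinite solution is unique, and since our candidate makes the closed-loop filter matrix equal to the already Hurwitz $A^{(r)}$, it is that unique solution. Therefore $G_\infty = 0$.

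The main obstacle is the block-level bookkeeping needed to establish the sharp-adjoint identity and then the Lyapunov cancellation in real quadrature coordinates, where the interplay of $J^{(r)}$, transpose, and the real/imaginary block layout is error-prone. A cleaner conceptual route, which I would use as a sanity check, is to work in complex coordinates $\boldsymbol{\zetac}^{(c)}$: complete passivity there means the unconditional Lindblad evolution preserves the vacuum state, so $V_\infty = \tfrac{1}{2} I$ is a manifest fixed point, and $G_\infty = 0$ reflects the physical fact that homodyning a passive cavity network driven only by vacuum returns pure shot noise, statistically uncorrelated with the internal quadratures.
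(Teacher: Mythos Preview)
Your proposal is correct and follows essentially the same skeleton as the paper's proof: guess $V_\infty=\tfrac12 I$, show this gives $G_\infty=0$, and then verify that the algebraic Riccati equation is satisfied. Two small differences are worth noting. First, to obtain $G_\infty=0$ the paper appeals to the physical-realisability constraint \eqref{eq:physicalreal2}, whereas you establish the sharper identity $C^{(r)\sharp}=C^{(r)\top}$ (hence $L^{(r)}=-C^{(r)\top}$) directly from the completely passive block form of $C^{(r)}$; your route makes explicit exactly where passivity enters. Second, to verify that $V_\infty=\tfrac12 I$ satisfies the Riccati equation, the paper invokes the external result of \citet{koga2012dissipation} that a strictly stable completely passive QSS relaxes to vacuum, while you carry out the Lyapunov cancellation $A^{(r)}+A^{(r)\top}+L^{(r)}L^{(r)\top}=0$ by hand using the antisymmetry of the Hamiltonian block and the identity from the first step. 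Your argument is thus more self-contained; the paper's is shorter but leans on the cited lemma. Your final remark on uniqueness of the stabilising Riccati solution under the detectability/stabilisability hypotheses is the correct way to close the argument, and it matches the paper's implicit reasoning.
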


This case is important because it implies that, for large $t$, there is no noise in the dynamic equation~\eqref{eq:KB1}, since $G_t \approx  G_{\infty} = 0$. In other words, the weak measurement does not disturb the dynamics of the QSS. We will exploit this result in the qWiener architecture in the next section.

\section{Main results}
We derive our QRC framework based on a quantum Wiener architecture.
This consists of a layer of $d$ QSS models, each continuously monitored via homodyne detection. The $d$ measured outputs are then fed into a classical GP model that constitutes the final memoryless nonlinear layer of the Wiener architecture.

First, we consider a network of $d$ one-dimensional quantum harmonic oscillators connected together via \textit{concatenation product} \citep{gough2009series}.   This will allow us to understand the difference between classical and quantum linear dynamcial systems in terms of impulse response.

Second, we will show that we can build a quantum linear dynamical system with any pre-defined impulse response, proving their universality.

\subsection{One-dimensional quantum harmonic oscillator}
\label{sec:harm1}
Consider the generic QSS model in \eqref{eq:realqsys10}--\eqref{eq:realqsys20}. Assume  that $n=1$, i.e., a single degree of freedom, so that the state reduces to $\boldsymbol{\zetac}^{(r)} =(\qc_1,\pc_1)^{\top}$. We can then prove the following result.

\begin{lemma}
\label{lem:h_harmonic}
The state space matrices of a $n=1$ degree-of-freedom linear    passive QSS  with  $s$ inputs are
 \begin{align}
A^{(r)}
        &=\begin{pmatrix}
         -\lambda &\Re(M_1)\\-\Re(M_1) &-\lambda
        \end{pmatrix},\\
        C^{(r)}
    &=\begin{pmatrix}
         \Re(N_1) & - \Im(N_1)\\\Im(N_1) &  \Re(N_1)
        \end{pmatrix},\\
L^{(r)}
        &=-\begin{pmatrix}
         \Re(N_1)^\top &  \Im(N_1)^\top\\-\Im(N_1)^\top &  \Re(N_1)^\top
        \end{pmatrix}\begin{pmatrix}
    \Re(S) & -\Im(S)\\
    \Im(S) & \Re(S)
\end{pmatrix},\\
D^{(r)}
        &=\begin{pmatrix}
    \Re(S) & -\Im(S)\\
    \Im(S) & \Re(S)
\end{pmatrix},
 \end{align}
 where $\lambda=\frac{1}{2}(\Re(N_1)^\top \Re(N_1)+\Im(N_1)^\top \Im(N_1) )$  and the impulse response is
 \begin{align}
{\bf h}(t)=\Big(& e^{-\lambda t}\begin{pmatrix}
         -\cos(\omega t) & -\sin(\omega t)\\ \sin(\omega t)&  -\cos(\omega t)
        \end{pmatrix} \otimes \Lambda\\
        &+I_{2s}\delta(t) \Big)\begin{pmatrix}
    \Re(S) & -\Im(S)\\
    \Im(S) & \Re(S)
\end{pmatrix},
 \end{align}
 with  $\Lambda=\Re(N_1) \Re(N_1)^\top+ \Im(N_1)\Im(N_1)^\top$, $\omega=\Re(M_1)$ and $\Re(S)^2+\Im(S)^2=1$.
\end{lemma}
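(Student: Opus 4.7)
The plan is to specialise the general formulas \eqref{eq:Dr}--\eqref{eq:Ar} to the passive, single-mode setting and then compute the impulse response $h(t) = C^{(r)} e^{A^{(r)} t} L^{(r)} + D^{(r)} \delta(t)$. For $n=1$, $M_1$ is a complex scalar, so the condition $\Im(M_1) = -\Im(M_1)^\top$ from Definition \ref{def:comppas} forces $\Im(M_1) = 0$; combined with $M_2 = 0$ and $N_2 = 0$, the blocks in \eqref{eq:Dr}--\eqref{eq:Cr} collapse directly to the stated forms of $D^{(r)}$ and $C^{(r)}$, and the first summand of $A^{(r)}$ in \eqref{eq:Ar} reduces to a pure rotation with $\omega := \Re(M_1)$. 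The claim $\Re(S)^2 + \Im(S)^2 = 1$ is nothing but unitarity of $S$ rewritten in the real-quadrature block representation of $D^{(r)}$, and is equivalent to the physical realisability identity \eqref{eq:physicalreal3}.

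Next I would evaluate the $\sharp$-adjoint $C^{(r)\sharp} = -J_n^{(r)} C^{(r)\top} J_s^{(r)}$ using the block expressions of $J_n^{(r)}$ and $J_s^{(r)}$ from \eqref{eq:J}; a direct $2 \times 2$ block multiplication yields
\[
-C^{(r)\sharp} = \begin{pmatrix} -\Re(N_1)^\top & -\Im(N_1)^\top \\ \Im(N_1)^\top & -\Re(N_1)^\top \end{pmatrix},
\]
and multiplying by $D^{(r)}$ on the right reproduces the stated $L^{(r)}$. For the damping contribution in $A^{(r)}$, I would compute $C^{(r)\sharp} C^{(r)}$ block by block: the diagonal blocks collapse to $\Re(N_1)^\top \Re(N_1) + \Im(N_1)^\top \Im(N_1) = 2\lambda$, while the off-diagonal blocks produce $\Re(N_1)^\top \Im(N_1) - \Im(N_1)^\top \Re(N_1)$, which is a scalar minus its own transpose and therefore vanishes. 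Hence $-\tfrac{1}{2} C^{(r)\sharp} C^{(r)} = -\lambda I_2$, and the stated $A^{(r)}$ follows.

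For the impulse response, the key observation is that $A^{(r)} = -\lambda I_2 + \omega J$ with $J = \bigl(\begin{smallmatrix} 0 & 1 \\ -1 & 0 \end{smallmatrix}\bigr)$; since the summands commute and $J^2 = -I_2$, the matrix exponential factorises as $e^{A^{(r)} t} = e^{-\lambda t}\bigl(\cos(\omega t) I_2 + \sin(\omega t) J\bigr)$. Substituting into $C^{(r)} e^{A^{(r)} t} L^{(r)}$ and using the block forms above, the product splits into a $\cos(\omega t)$ term arising from $C^{(r)}(-C^{(r)\sharp})$ and a $\sin(\omega t)$ term arising from $C^{(r)} J (-C^{(r)\sharp})$; expanding each as a $2\times 2$ arrangement of $s\times s$ outer products, the symmetric combinations collapse to $\Lambda = \Re(N_1)\Re(N_1)^\top + \Im(N_1)\Im(N_1)^\top$ and the block sign pattern reproduces the rotation matrix $\bigl(\begin{smallmatrix} -\cos(\omega t) & -\sin(\omega t) \\ \sin(\omega t) & -\cos(\omega t) \end{smallmatrix}\bigr)$. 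Repackaging as a Kronecker product, adding $D^{(r)}\delta(t)$, and factoring $D^{(r)}$ on the right yields the stated $\mathbf{h}(t)$.

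The main obstacle is the bookkeeping in combining the $\cos$ and $\sin$ contributions into a single clean Kronecker form: the outer-product terms of the form $\Re(N_1)\Im(N_1)^\top$ and $\Im(N_1)\Re(N_1)^\top$ appear with opposite signs in the two terms and must be tracked carefully block-by-block so that only the symmetric combination survives. A convenient device is to work throughout with the $2\times 2$ block decomposition in which each block is an $s \times s$ matrix and to cross-check the end result against the physical realisability identities \eqref{eq:physicalreal1}--\eqref{eq:physicalreal3}, which the impulse response of any genuinely realisable passive QSS must respect.
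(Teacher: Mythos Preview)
Your proposal is correct and follows essentially the same route as the paper: specialise the passive conditions to kill $M_2,N_2,\Im(M_1)$, read off $C^{(r)},D^{(r)},L^{(r)}$ from \eqref{eq:Dr}--\eqref{eq:Lr}, compute $-\tfrac12 C^{(r)\sharp}C^{(r)}=-\lambda I_2$ (the paper does this block-by-block just as you sketch), split $A^{(r)}$ into the commuting pair $-\lambda I_2$ and a rotation to get the exponential (the paper records this separately as Lemma~\ref{lem:mat}, using the unitary $\Omega_2$ rather than your $J^2=-I_2$ shortcut), and then expand $C^{(r)}e^{A^{(r)}t}L^{(r)}$ into four $s\times s$ blocks $f_1,\dots,f_4$ before repackaging as the Kronecker product with $\Lambda$. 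The ``bookkeeping obstacle'' you anticipate with the cross terms $\Re(N_1)\Im(N_1)^\top$ versus $\Im(N_1)\Re(N_1)^\top$ is exactly where the paper's explicit $f_i$ computation sits, so your plan aligns with the paper's execution.
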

The impulse response is a matrix ${\bf h}(t) \in \mathbb{R}^{2s \times 2s}$ as the dimension of the input and output is $2s$.
We  then specialise the result to the case where the dimension of the input  quantum field is $s=1$.

\begin{corollary}
\label{co:h_harmonic}
Assume that the dimension $s=1$, $\Re(M_1)=-\omega$,   $\Re(N_1)=\alpha$, $\Im(N_1)=\beta$ and $S=s_1+\iota s_2$, with $s_1^2+s_2^2=1$, where $\alpha,\beta,\omega,s_1,s_2 \in \mathbb{R}$, then
\begin{align}
\label{eq:Acpb}
A^{(r)} &=
\begin{pmatrix}
-\frac{\alpha^2+\beta^2}{2} & -\omega\\
\omega & -\frac{\alpha^2+\beta^2}{2} & \\
\end{pmatrix},\\
\label{eq:Lcpb}
L^{(r)} &=
\begin{pmatrix}
-\alpha & \beta  \vspace{1mm}\\
 -\beta  & -\alpha\\
\end{pmatrix}\begin{pmatrix}
s_1 & -s_2  \vspace{1mm}\\
 s_2 & s_1\\
\end{pmatrix},\\
\label{eq:Ccpb}
C^{(r)} &= \begin{pmatrix}
\alpha & -\beta  \vspace{1mm}\\
 \beta  & \alpha\\
\end{pmatrix},\\
\label{eq:Dcpb}
D^{(r)} &= \begin{pmatrix}
s_1 & -s_2  \vspace{1mm}\\
 s_2 & s_1\\
\end{pmatrix},
\end{align}
and
\begin{align}
\nonumber
{\bf h}(t) =  \Big(&\left(\alpha ^2+\beta ^2\right) e^{-\frac{1}{2} t
   \left(\alpha ^2+\beta ^2\right)} \left(
\begin{array}{cc}
-\cos ( \omega t ) &
\sin ( \omega t ) \\
-\sin ( \omega t ) &
 -\cos ( \omega t )\end{array}
\right) \\
\label{eq:hcpb}
&+I_2\delta(t) \Big)\begin{pmatrix}
s_1 & -s_2  \vspace{1mm}\\
 s_2 & s_1\\
\end{pmatrix}.
\end{align}
\end{corollary}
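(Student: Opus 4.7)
The plan is to obtain Corollary \ref{co:h_harmonic} by direct specialization of Lemma \ref{lem:h_harmonic} to $s = n = 1$ with the prescribed parameter values; no new machinery is required, only careful algebraic substitution. First I would compute the damping rate: with $\Re(N_1) = \alpha$ and $\Im(N_1) = \beta$ both scalars, the lemma's formula gives $\lambda = \tfrac{1}{2}(\alpha^2 + \beta^2)$. Substituting $\Re(M_1) = -\omega$ into the general expression for $A^{(r)}$ yields \eqref{eq:Acpb}, while $C^{(r)}$ and $D^{(r)}$ in \eqref{eq:Ccpb}--\eqref{eq:Dcpb} follow immediately from the scalar substitutions $\Re(N_1) = \alpha$, $\Im(N_1) = \beta$, $\Re(S) = s_1$, $\Im(S) = s_2$. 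For $L^{(r)}$ I would carry out the product of the two specialized $2 \times 2$ real matrices indicated by the lemma; the outcome factors exactly as \eqref{eq:Lcpb}.

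For the impulse response \eqref{eq:hcpb}, the key simplification is that $\Lambda = \Re(N_1) \Re(N_1)^\top + \Im(N_1) \Im(N_1)^\top = \alpha^2 + \beta^2$ collapses to a scalar, so the Kronecker product $\otimes \Lambda$ in the lemma's expression reduces to ordinary multiplication by $(\alpha^2 + \beta^2)$. Since the lemma uses $\omega = \Re(M_1)$ whereas the corollary sets $\Re(M_1) = -\omega$, I would then apply $\cos(-\omega t) = \cos(\omega t)$ and $\sin(-\omega t) = -\sin(\omega t)$ to the $2 \times 2$ sinusoidal block, producing the signs displayed in \eqref{eq:hcpb}. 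Right-multiplying by the $D^{(r)}$ factor from \eqref{eq:Dcpb} completes the expression. As a sanity check one can verify the same result directly from ${\bf h}(t) = C^{(r)} e^{A^{(r)} t} L^{(r)} + D^{(r)} \delta(t)$, since $A^{(r)}$ has the block form $-\lambda I_2 - \omega J_1^{(r)}$, giving $e^{A^{(r)} t} = e^{-\lambda t}\bigl(\cos(\omega t) I_2 - \sin(\omega t) J_1^{(r)}\bigr)$, and noting that $C^{(r)} L^{(r)}$ produces the scalar factor $\alpha^2 + \beta^2$ that dresses the rotation block.

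The main obstacle is purely bookkeeping: tracking the sign conventions when translating between the lemma's $\omega$ (equal to $\Re(M_1)$) and the corollary's $\omega$ (equal to $-\Re(M_1)$), and verifying that the Kronecker collapse is interpreted correctly so that each of the four entries of the trigonometric block carries the sign shown in \eqref{eq:hcpb}. Once these substitutions are performed carefully, the physical realisability conditions \eqref{eq:physicalreal1}--\eqref{eq:physicalreal3} are inherited automatically from the lemma, since the constraint $s_1^2 + s_2^2 = 1$ ensures that $D^{(r)}$ is orthogonal and complete passivity is preserved under the specialization.
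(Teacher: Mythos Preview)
Your proposal is correct and matches the paper's own proof, which consists of a single sentence stating that the result follows from Lemma~\ref{lem:h_harmonic} by direct substitution of the parameters. You have in fact supplied considerably more detail than the paper does, including the sign tracking for the $\omega \leftrightarrow -\omega$ convention and the observation that $\Lambda$ collapses to a scalar; the optional sanity check via $C^{(r)} e^{A^{(r)} t} L^{(r)} + D^{(r)}\delta(t)$ is a nice addition but not needed.
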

This is an impulse response for a system with a two-dimensional input $\mathbf{u}(t)$ and output $\mathbf{y}(t)$.
Note that, without loss of generality, we can choose $\beta=0$.

To realise our quantum Wiener architecture, composed of $n_c$ one-dimensional harmonic oscillators, we construct the system by concatenating the harmonic oscillators with state matrices defined in \eqref{eq:Acpb}--\eqref{eq:Dcpb}, using the concatenation product \citep{gough2009series}. { The concatenation product simply assembles the different components together, without making any connections between them. Consider $n_c$ harmonic oscillators with parameters
 $\{\alpha_i,\omega_i,s_{i1},s_{i2}\}_{i=1}^{n_c}$, their concatenation product is obtained by setting $M_1=\text{diag}(\omega_1,\dots,\omega_{n_c})$, $N_1=\text{diag}(\alpha_1,\dots,\alpha_{n_c})$ and\\ $S=\text{diag}\left(s_{11}+\iota s_{12},\dots,s_{n_c1}+\iota s_{n_c2}\right)$.}

\begin{proposition}
\label{prop:concatenation}
 Consider the QSS obtained by the  concatenation product of $n_c$ harmonic oscillators with parameters
 $\{\alpha_i,\omega_i,s_{i1},s_{i2}\}_{i=1}^{n_c}$. Assume that we measure the first quadratures $\qc_1, \ldots, \qc_{n_c}$ of the state, and then sum the measured outputs. Denote this sum by $\tilde{y}_{qm}(t)$.
 Moreover, assume that the $2n_c$-dimensional input ${\bf u}(t)=\begin{pmatrix} \mathbf{1}_{n_c} & 0_{n_c}\end{pmatrix}^\top u(t)$ for some scalar input $u(t)$, then the SISO input-output impulse response (between $u(t)$ and $\tilde{y}_{qm}(t)$)  is
\begin{equation}
\label{eq:hquantum}
\tilde{h}(t)=\sum_{i=1}^{n_c} \alpha_i^2 e^{-\frac{\alpha_i^2t}{2} } \left(-
s_{i1}\cos ( \omega_i t ) +s_{i2} \sin ( \omega_i t )\right) +s_{i1}\delta(t).
\end{equation}
\end{proposition}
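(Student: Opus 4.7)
The plan is to reduce the problem to a per-oscillator analysis, exploiting the fact that the concatenation product produces a completely decoupled system in which oscillator $i$ evolves independently of oscillator $j$ for $i\neq j$.

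First I would assemble the state-space matrices of the concatenated network. With the ordering $\boldsymbol{\zetac}^{(r)}=(\qc_1,\dots,\qc_{n_c},\pc_1,\dots,\pc_{n_c})^{\top}$ and the definitions $M_1=\text{diag}(\omega_i)$, $N_1=\text{diag}(\alpha_i)$, $N_2=0$, $M_2=0$, $S=\text{diag}(s_{i1}+\iota s_{i2})$, the formulas \eqref{eq:Dr}--\eqref{eq:Ar} produce $A^{(r)}$, $L^{(r)}$, $C^{(r)}$, $D^{(r)}$ whose four $n_c\times n_c$ blocks are all diagonal matrices. Permuting rows and columns into the $(q_i,p_i)$-per-oscillator ordering reveals that the system is literally block-diagonal with $2\times 2$ blocks equal to the $A^{(r)}_i,L^{(r)}_i,C^{(r)}_i,D^{(r)}_i$ of Corollary \ref{co:h_harmonic} (with $\beta=0$). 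Consequently the full impulse response matrix is block-diagonal as well, with the $i$-th block equal to $\mathbf{h}_i(t)$ from \eqref{eq:hcpb}.

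Next I would specialise to the prescribed input and measurement. The input $\mathbf{u}(t)=(\mathbf{1}_{n_c}^{\top},0_{n_c}^{\top})^{\top}u(t)$ delivers, to oscillator $i$, the two-component input $(u(t),0)^{\top}$, i.e.\ only the $q$-quadrature is driven. Measuring the first quadrature $\qc_i$ of each oscillator extracts the first row of the corresponding block $\mathbf{h}_i(t)$. Thus the SISO impulse response from $u(t)$ to the $i$-th measured quadrature $y_{q,i}(t)$ is the $(1,1)$ entry of $\mathbf{h}_i(t)$. Explicitly, multiplying the first row of the pre-$D^{(r)}_i$ matrix $\alpha_i^2 e^{-\alpha_i^2 t/2}(-\cos(\omega_i t),\sin(\omega_i t))$ by the first column of $D^{(r)}_i=\bigl(\begin{smallmatrix}s_{i1}&-s_{i2}\\ s_{i2}&s_{i1}\end{smallmatrix}\bigr)$, and likewise for the $\delta(t)$ part, yields
\[
h_i(t)=\alpha_i^{2}e^{-\alpha_i^{2}t/2}\bigl(-s_{i1}\cos(\omega_i t)+s_{i2}\sin(\omega_i t)\bigr)+s_{i1}\delta(t).
\]

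Finally, because the oscillators are decoupled, summation of the measured outputs gives $\tilde{y}_{qm}(t)=\sum_i y_{q,i}(t)=\int_0^t \sum_i h_i(t-\tau)u(\tau)\,d\tau$, so that $\tilde{h}(t)=\sum_{i=1}^{n_c}h_i(t)$, which is exactly \eqref{eq:hquantum}. The only delicate step is the bookkeeping between the two ways of organising the state (stacking by quadrature versus stacking by oscillator): one must verify that the diagonal structure of $M_1,N_1,S$ makes the quadrature-grouped matrices commute with the natural permutation that groups by oscillator, so that the decoupling — and hence the reduction to Corollary \ref{co:h_harmonic} — is genuine. Once that permutation is handled, the remaining computations are direct.
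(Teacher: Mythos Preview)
Your proposal is correct and follows essentially the same idea as the paper's proof: both exploit the fact that the diagonal choices of $M_1,N_1,S$ make the concatenated system decouple into independent single-oscillator blocks, so the SISO impulse response is the sum of the per-oscillator $(1,1)$ entries of $\mathbf{h}_i(t)$ from Corollary~\ref{co:h_harmonic}. The only presentational difference is that the paper carries out the computation explicitly for $n_c=2$ in the quadrature-grouped ordering and then appeals to the symmetry of the concatenation product, whereas you invoke the permutation to the oscillator-grouped ordering and reduce directly to Corollary~\ref{co:h_harmonic} for general $n_c$; the content is the same.
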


\begin{remark}
By comparing \eqref{eq:hquantum} with \eqref{eq:h}, we can notice two major differences in the quantum case: (i) the multiplying scalar $\alpha_i^2$; (ii) the variables $s_{i1},s_{i2}$ are constrained to have unit-norm. These differences follows from the physical realisability constraints \eqref{eq:physicalreal1}--\eqref{eq:physicalreal3}. These constraints limit the expressivity of quantum linear networks for reservoir computing, as their parameters cannot be freely chosen.
\end{remark}
{ The quantum system in Proposition \ref{prop:concatenation} is completely passive and, therefore, it can be realised using only passive optical components like phase shifters, beam splitters
and mirrors.}  Since it is completely passive, from Theorem \ref{th:1} it follows that the dynamics of the measured system is deterministic and the noise only affects the measurement equation \eqref{eq:KB2}, which implies that:
\begin{equation}
\label{eq:ynoise}
\tilde{y}_{qm}(t)=\int_{0}^{t} \tilde{h}(t-\tau) u(\tau)d\tau +\mathbf{1}_{n_c}^\top D^{(r)}D^{(r)\top} \mathbf{1}_{n_c}\boldsymbol{\nu}(t),
\end{equation}
where $\tilde{h}(t)$ is defined \eqref{eq:hquantum}.
This is important because it implies that the weak measurement does not disturb the dynamics of the QSS. Filtering out the static noise in \eqref{eq:ynoise} is quite straightforward.

\subsubsection{Limit Kernel}
\label{sec:limitkernelquantum}
The  results in Section \ref{sec:kernel_methods} exploit the Gaussianity and independence of the coefficients $\{\beta_i,\gamma_i\}_{i=1}^{n_c}$ (given $\{\alpha_i,\omega_i\}_{i=1}^{n_c}$) to show that $\tilde{h}(t)$ is GP distributed.

However, note that, due to the unit-norm constraint $s_{i1}^2+s_{i2}^2=1$, the coefficients $\{s_{i1},s_{i2}\}_{i=1}^{n_c}$ in \eqref{eq:hquantum} cannot be  Gaussian distributed. For instance, if we assume that $s_{i1},s_{i2}$ are independently Gaussian distributed, then conditioned on $s_{i1}^2+s_{i2}^2=1$, their distribution is Von-Mises $(s_{i1},s_{i2})^\top \sim \text{VM}\left(\tfrac{\mathbf{1}_2}{\sqrt{2}},(1/\kappa) I_2\right)$ (see Appendix \ref{app:VM} for details about the Von-Mises distribution).

We will now consider another way to derive  results similar to those in in Section \ref{sec:kernel_methods}. By exploiting the central limit theorem (similarly to what was done in \citep{williams1998computation,neal2012bayesian} for neural networks), we will show that the corresponding stochastic process converges to a GP in the
limit as $n_c \rightarrow \infty$.
\begin{theorem}
\label{th:clt}
For $i=1,\dots,n_c$, consider the impulse response \eqref{eq:hquantum} and assume that $\alpha^2_i,\omega_{i} \sim \phi_{\boldsymbol{\theta}}(\alpha^2,\omega)$ and $(s_{i1},s_{i2})^\top \sim \text{VM}\left(\tfrac{\mathbf{1}_2}{\sqrt{2}},(1/\kappa) I_2\right)$ independently for $i=1,\dots,n_c$.
Then, as $n_c$ tends to infinity, we have that
\begin{equation}
\sqrt{n_c}\left(\tilde{h}(t)-\mu_h(t)\right) \stackrel{d}{\longrightarrow} \mathcal{GP}\left(0,K_{\boldsymbol{\theta}}(t,t')\right),
\end{equation}
where
\begin{align}
\label{eq:h_alpha_mu}
&\mu_h(t)=  \frac{r_1}{\sqrt{2}}\int_{-\infty}^{\infty}\int_{0}^{\infty} \alpha^2 e^{-\frac{\alpha^2t}{2} } \cos ( \omega t ) \phi_{\boldsymbol{\theta}}(\alpha^2,\omega) d\alpha^2 d\omega,\\
\nonumber
&K_{\boldsymbol{\theta}}(t,t')\\
\nonumber
&= \frac{1 }{2} \int_{-\infty}^{\infty}\int_{0}^{\infty} \alpha^4 e^{-\frac{\alpha^2(t+t')}{2} }\cos ( \omega (t-t') )  \phi_{\boldsymbol{\theta}}(\alpha^2,\omega) d\alpha^2 d\omega\\
\label{eq:h_alpha_K}
&-\mu_h(t)\mu_h(t') ,
\end{align}
and $r_1=\mathcal{I}_1({\displaystyle \kappa })/\mathcal{I}_0({\displaystyle \kappa })$ with $\mathcal{I}_n({\displaystyle \kappa })$ being the modified Bessel function of the first kind of order $n$. Assuming that $\phi_{\boldsymbol{\theta}}(\alpha^2,\omega)=\frac{\tfrac{\alpha^2}{2}}{\pi(\omega^2+\tfrac{\alpha^4}{4})}\text{Unif}(\alpha^2; \alpha_m,\alpha_M)$, then we have
\begin{align}
\label{eq:h_alpha_muint}
&\mu_h(t)= - \frac{r_1}{\sqrt{2}}\frac{(1+ a_m t) e^{- a_m t
   }-e^{- a_M t }
   (1+ a_M t)}{t ^2 (a_M-a_m)},\\ &\\
\nonumber
&K_{\boldsymbol{\theta}}(t,t')=\\
\nonumber
& \frac{1 }{2}  \frac{ (1+
   a_m \tau +\frac{  a_m^2 \tau^2}{2}
   ) e^{- a_m
   \tau}-e^{- a_M
   \tau} (1+ a_M \tau +\frac{a_M^2 \tau^2}{2} )}{2 \tau^3 (a_M-a_m)}\\
  \label{eq:h_alpha_Kint}
   &-\mu_h(t)\mu_h(t') ,
\end{align}
where $\tau=\max(t,t')$.
\end{theorem}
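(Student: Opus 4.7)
The plan is to write $\tilde h(t)=\sum_{i=1}^{n_c} X_i(t)$ with each summand
\[
X_i(t)=\alpha_i^2 e^{-\alpha_i^2 t/2}\bigl(-s_{i1}\cos(\omega_i t)+s_{i2}\sin(\omega_i t)\bigr)+s_{i1}\delta(t)
\]
being i.i.d., and to invoke the multivariate CLT (via the Cramér--Wold device) on finite-dimensional marginals. Lindeberg's condition is automatic because the $s_{ij}$ are bounded on the unit circle and the $(\alpha^2,\omega)$-factors have finite second moments under $\phi_{\boldsymbol\theta}$ (with compact $\alpha^2$-support in the explicit case), so the re-centred, $\sqrt{n_c}$-scaled sum converges jointly Gaussian with the mean and covariance inherited from $X_1$; this delivers the stated GP convergence in the sense of finite-dimensional distributions.

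To compute the mean I would use independence of $(\alpha_i,\omega_i)$ and $(s_{i1},s_{i2})$, so expectations factor. Standard Von-Mises identities with mean direction $\mathbf{1}_2/\sqrt{2}$ give $\mathbb{E}[s_1]=\mathbb{E}[s_2]=r_1/\sqrt{2}$ with $r_1=\mathcal{I}_1(\kappa)/\mathcal{I}_0(\kappa)$; then the built-in $\omega$-symmetry $\phi_{\boldsymbol\theta}(\alpha^2,\omega)=\phi_{\boldsymbol\theta}(\alpha^2,-\omega)$ of any generalised power spectral density kills the sine piece and leaves \eqref{eq:h_alpha_mu}. For the covariance I would expand $X_1(t)X_1(t')$: the constraint $s_1^2+s_2^2=1$ together with symmetry of the VM mean direction forces $\mathbb{E}[s_1^2]=\mathbb{E}[s_2^2]=1/2$, while $\mathbb{E}[s_1 s_2]$ multiplies only $\sin(\omega(t+t'))$-type terms that again vanish under the $\omega$-symmetry. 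The identity $\cos(\omega t)\cos(\omega t')+\sin(\omega t)\sin(\omega t')=\cos(\omega(t-t'))$ then collapses the survivors to \eqref{eq:h_alpha_K}, with $-\mu_h(t)\mu_h(t')$ emerging as the standard centering correction.

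For the closed forms \eqref{eq:h_alpha_muint}--\eqref{eq:h_alpha_Kint} under the specific choice $\phi_{\boldsymbol\theta}(\alpha^2,\omega)=\tfrac{\alpha^2/2}{\pi(\omega^2+\alpha^4/4)}\,\mathrm{Unif}(\alpha^2;\alpha_m,\alpha_M)$, the crucial step is the Lorentzian Fourier pair
\[
\int_{-\infty}^{\infty}\frac{\cos(\omega t)}{\omega^2+a^2}\,\frac{a}{\pi}\,d\omega \;=\; e^{-a|t|},
\]
applied with $a=\alpha^2/2$. This eliminates the $\omega$-integration and leaves a single univariate integral $\tfrac{1}{\alpha_M-\alpha_m}\int_{\alpha_m}^{\alpha_M} x^k e^{-x\tau}\,dx$ for $k\in\{1,2\}$, where the exponents combine via $\tfrac{t+t'}{2}+\tfrac{|t-t'|}{2}=\max(t,t')$ so that $\tau=t$ in the mean and $\tau=\max(t,t')$ in the kernel. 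Both integrals are then evaluated by two rounds of integration by parts.

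The main obstacle I anticipate is the distributional feed-through $s_{i1}\delta(t)$, which is not displayed in the stated $\mu_h,K_{\boldsymbol\theta}$. I would separate it from the regular-function part and handle it on its own: after convolution with an input signal, as in \eqref{eq:ynoise}, it contributes only an instantaneous scalar, and the CLT for the one-dimensional sum $(1/\sqrt{n_c})\sum_i (s_{i1}-r_1/\sqrt{2})$ is immediate. A secondary subtlety worth flagging is that process-level $\stackrel{d}{\longrightarrow}\mathcal{GP}$ strictly requires tightness beyond finite-dimensional marginal convergence; for the kernel-matching role this theorem plays in characterising the reservoir's expressivity, finite-dimensional convergence is the operative notion and the argument above is sufficient.
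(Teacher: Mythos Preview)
Your proposal is correct and follows the same overall route as the paper: write $\tilde h$ as an i.i.d.\ sum, compute the per-summand mean and second moment using Von--Mises identities, kill the odd-in-$\omega$ pieces via the symmetry $\phi_{\boldsymbol\theta}(\alpha^2,\omega)=\phi_{\boldsymbol\theta}(\alpha^2,-\omega)$, and for the explicit $\phi_{\boldsymbol\theta}$ apply the Lorentzian Fourier pair and the identity $\tfrac{t+t'}{2}+\tfrac{|t-t'|}{2}=\max(t,t')$ before integrating in $\alpha^2$.

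The one place your argument diverges is the second-moment computation, and your version is cleaner. The paper invokes the full Von--Mises second-moment matrix $E[\mathbf{s}\mathbf{s}^\top]=\tfrac{1-r_2}{2}I_2+r_2\,\boldsymbol m\boldsymbol m^\top$, carries the $r_2$-dependent cross term through as a sine contribution, and only then kills it by $\omega$-parity. You instead use the diagonal symmetry of the VM with mean direction $\mathbf{1}_2/\sqrt{2}$ together with $s_1^2+s_2^2=1$ to get $E[s_1^2]=E[s_2^2]=\tfrac12$ directly, so $r_2$ never appears; the only surviving cross term is $E[s_1s_2]\sin(\omega(t+t'))$, which you correctly identify and dispatch. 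This is a strict simplification. Your flags on the feed-through $s_{i1}\delta(t)$ and on tightness versus finite-dimensional convergence are also apt: the paper's proof silently drops the $\delta$-term from the displayed $\mu_h,K_{\boldsymbol\theta}$ and does not discuss process-level convergence, so you are being more careful on both points, not less.
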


\begin{remark}
Theorem \ref{th:clt} extends Proposition \ref{prop:1} to the quantum case.
It is important to emphasize the difference between the kernel in \eqref{eq:h_alpha_Kint} for the quantum case and the standard integrated tuned-correlated kernel in \eqref{eq:TC}. The former includes the fractional polynomial terms 
$(1 + a_m \tau + \tfrac{a_m^2 \tau^2}{2})/\tau^3,
$ (similarly for $a_M$), while the latter only contains the term $1/\tau$. Again this difference follows from the physical realisability constraints \eqref{eq:physicalreal1}--\eqref{eq:physicalreal3}, which introduce the term $\alpha^2$ in \eqref{eq:hquantum} and, consequently, $\alpha^4$ in \eqref{eq:h_alpha_K}.
For large values of $\tau$, the two kernels converge. For small $\tau$, the  polynomial term $1 + a_m \tau + \tfrac{a_m^2 \tau^2}{2}$ behaves similarly to the polynomial term that appears in the impulse response of a system in Jordan form (specifically, in the case of a LTI system with a complex-conjugate pair of eigenvalues of multiplicity three). Therefore, paradoxically, the physical realisability constraints \eqref{eq:physicalreal1}--\eqref{eq:physicalreal3} can increase the expressivity of the RC when the distribution of randomised parameters is suitably chosen.
\end{remark}
The difference between the kernel  \eqref{eq:TC}  and \eqref{eq:h_alpha_Kint} is the main difference between the classical and the quantum Wiener architectures. 
Indeed, as we perform a weak measurement on the first quadrature component of each one of the one-dimensional harmonic oscillators, we obtain the signal:
\begin{equation}
\tilde{y}_{qm}(t) = \int_{0}^{t} \tilde{h}(t-\tau)u(\tau)d\tau  
+ \mathbf{1}_{n_c}^\top D^{(r)}D^{(r)\top}\mathbf{1}_{n_c}\,\boldsymbol{\nu}(t).
\end{equation}
We can filter out the noise to recover the deterministic component 
$\int_{0}^{t} \tilde{h}(t-\tau)u(\tau)d\tau$. 
After the weak measurement, the qWiener architecture becomes classical and, therefore, we can apply the same results from Propositions~\ref{prop:linBlocks}--\ref{prop:deepKernel} 
to derive the architecture (shown in Figure~\ref{fig:2}) and the corresponding deep kernel for the qWiener model. 
The deep kernel has the same structure as in Proposition~\ref{prop:deepKernel}, but 
$K_{{\boldsymbol{\theta}},n_c}(t-\tau_1,t'-\tau_2)$ corresponds to the kernel in~\eqref{eq:h_alpha_Kint} 
instead of that in~\eqref{eq:TC}.

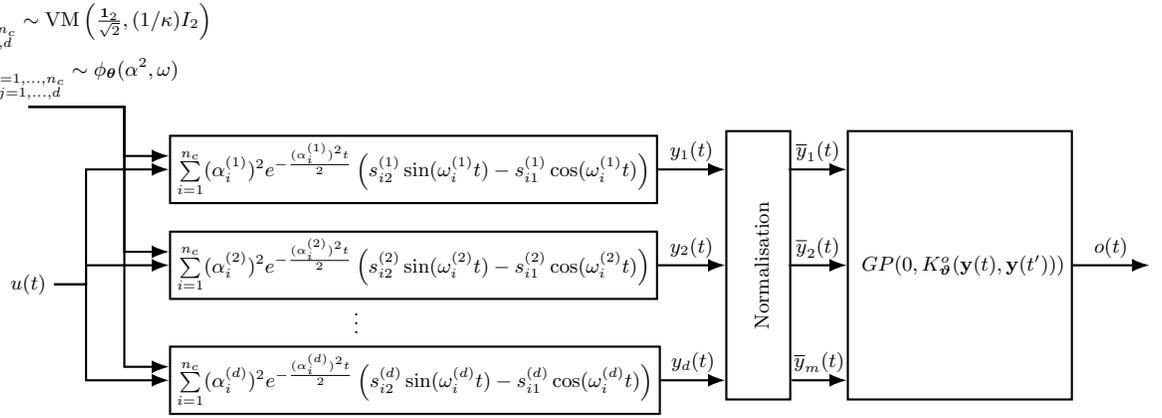
\begin{figure*}
\begin{tikzpicture}[node distance=1.2cm and 1.8cm, thick, >=Latex,scale=0.85, transform shape]
\begin{scope}[xshift=-10cm]
\node (input) at (0,0) {$u(t)$};
\node[above=2.5cm of input, align=center] (freqinput)
    {$\left\{ s^{(j)}_{i1},s^{(j)}_{12} \right\}_{\substack{i=1,\dots,n_c \\ j=1,\dots,d}} \sim \text{VM}\left(\tfrac{\mathbf{1}_2}{\sqrt{2}},(1/\kappa) I_2\right)$ \\
    $\left\{ \alpha_{i}^{2(j)}, \omega_{i}^{(j)}\right\}_{\substack{i=1,\dots,n_c \\ j=1,\dots,d}}
 \sim \phi_{\boldsymbol{\theta}}(\alpha^2,\omega)$};
\node[draw, rectangle, minimum width=4.6cm, minimum height=0.9cm, right=of input, yshift=1.8cm, align=center] (g1)
    {$\sum\limits_{i=1}^{n_c} (\alpha_i^{(1)})^2e^{-\frac{(\alpha_i^{(1)})^2t}{2} } \left(  s^{(1)}_{i2} \sin(\omega^{(1)}_i t) -  s^{(1)}_{i1} \cos(\omega^{(1)}_i t) \right)$};
\node[draw, rectangle, minimum width=4.6cm, minimum height=0.9cm, right=of input, yshift=0.3cm, align=center] (g2)
    {$\sum\limits_{i=1}^{n_c} (\alpha_i^{(2)})^2e^{-\frac{(\alpha_i^{(2)})^2t}{2} } \left(  s^{(2)}_{i2} \sin(\omega^{(2)}_i t) -  s^{(2)}_{i1} \cos(\omega^{(2)}_i t) \right)$};
\node[draw=none, right=of input, yshift=-0.5cm] (dots) {~~\hspace{2cm}~~~~~$\vdots$};
\node[draw, rectangle, minimum width=4.6cm, minimum height=0.9cm, right=of input, yshift=-1.5cm, align=center] (gm)
    {$\sum\limits_{i=1}^{n_c} (\alpha_i^{(d)})^2e^{-\frac{(\alpha_i^{(d)})^2t}{2} } \left(  s^{(d)}_{i2} \sin(\omega^{(d)}_i t) -  s^{(d)}_{i1} \cos(\omega^{(d)}_i t) \right)$};
\draw[->] (input.east) -- ++(0.5,0) |- (g1.west);
\draw[->] (input.east) -- ++(0.5,0) |- (g2.west);
\draw[->] (input.east) -- ++(0.5,0) |- (gm.west);
\draw[->] (freqinput.south) -- ++(1.5,0) |- ([yshift=6pt]g1.west);
\draw[->] (freqinput.south) -- ++(1.5,0) |- ([yshift=6pt]g2.west);
\draw[->] (freqinput.south) -- ++(1.5,0) |- ([yshift=6pt]gm.west);
\node[draw, rectangle, minimum width=1.0cm, minimum height=4.2cm, right=10.5cm of input, yshift=0.3cm] (normalis) {{ \rotatebox{90}{Normalisation}}};
\node[draw, rectangle, minimum width=1.4cm, minimum height=4.2cm, right=12.4cm of input, yshift=0.3cm] (combiner) {{ $GP(0,K^o_{\boldsymbol{\vartheta}}(\mathbf{y}(t), \mathbf{y}(t')) )$}};
\draw[->] (g1.east) -- (normalis.west |- g1.east) node[midway, above] {$y_1(t)$};
\draw[->] (g2.east) -- (normalis.west |- g2.east) node[midway, above] {$y_2(t)$};
\draw[->] (gm.east) -- (normalis.west |- gm.east) node[midway, above] {$y_d(t)$};

  \coordinate (out1) at ($(g1.east) + (2.05,0)$);
  \coordinate (out2) at ($(g2.east) + (2.05,0)$);
  \coordinate (out3) at ($(gm.east) + (2.05,0)$);

\draw[->] (out1.east) -- (combiner.west |- out1.east) node[midway, above] {$\overline{y}_1(t)$};

\draw[->] (out2.east) -- (combiner.west |- out2.east) node[midway, above] {$\overline{y}_2(t)$};
\draw[->] (out3.east) -- (combiner.west |- out3.east) node[midway, above] {$\overline{y}_m(t)$};

\draw[->] (combiner.east) -- ++(1.2,0) node[midway, above] {$o(t)$};
\end{scope}
\end{tikzpicture}
\label{fig:2}
\caption{Quantum Wiener architecture obtained by concatenating $d$ quantum harmonic oscillators.}
\end{figure*}
In Figure \ref{fig:2}, the normalisation layer performs the operation $\overline{y}_i(t)=\tfrac{\sqrt{n_c}}{\sqrt{d}}(y_i(t)-\hat{\mu}_y(t))$, where the mean $\hat{\mu}_y(t)=\frac{1}{d}\sum_{i=1}^d y_i(t)$ is approximatively  equal to 
 \begin{equation}
 \label{eq:muy}
   \mu_y(t)  = \int_{0}^t \mu_h(t-\tau)u(\tau)d\tau.
 \end{equation}

\subsection{Universality of quantum Wiener architectures}
\label{sec:univ}
To prove that \textit{qWiener} architectures can realise general SISO LTI blocks, we need to show that a QSS can implement any impulse response of the form $h(t) = Ce^{At}B + D\delta(t)$,
for arbitrary matrices $A, B, C, D$. Note that, without loss of generality, we can assume $D=0$ since, in an RC architecture, we can always provide $u(t)$ as an input in the last layer.

It is important to observe that we cannot directly set $ A^{(r)} = A $, $L^{(r)}=B$, and $C^{(r)} = C$, because physical realisability constraints \eqref{eq:physicalreal1}--\eqref{eq:physicalreal2} do not allow arbitrary choices of the matrices $A^{(r)},L^{(r)},C^{(r)}$. However, these constraints only impose conditions on the matrix elements relative to the state variables  $\qc_i,\pc_i$, and not, for example, between $\qc_i$ and $\qc_j$.
Therefore, in the following, we will show that one way to realise any desired impulse response is to embed its realisation in the subsystem associated with the components of the state $\qc_1,\dots,\qc_n$ (or $\pc_1,\dots,\pc_n$).

\begin{lemma}
\label{lem:qABCD}
 Consider the quantum linear system:
\begin{align}
\label{eq:Aruniv}
A^{(r)}&= \begin{pmatrix}
         A & 0_{n \times n}\\
         0_{n \times n} & Z
        \end{pmatrix}\\
        \label{eq:Cruniv}
C^{(r)}&= \begin{pmatrix}
         C^{(r)}_1 & 0_{1 \times n}\\
         0_{1 \times n} & C^{(r)}_2\\
        \end{pmatrix}\\
\label{eq:Lruniv}
L^{(r)}&=J_{n}^{(r)} C^{(r)\top} J_{s}^{(r)} \\
\label{eq:Druniv}
D^{(r)}&=I_{2s}
 \end{align}
 If  $Z=-A^\top-C^{(r)\top}_1 C^{(r)}_2$, then the system is physically realisable, that is, it satisfies \eqref{eq:physicalreal1}--\eqref{eq:physicalreal3}, and the impulse response is
 \begin{align*}
&{\bf h}(t)=\begin{pmatrix}
         -C^{(r)}_1  e^{At} C^{(r)\top}_2 & 0\\
         0 & -C^{(r)}_2  e^{Zt} C^{(r)\top}_1\\
        \end{pmatrix}+I_{2s}\delta(t).
 \end{align*}
\end{lemma}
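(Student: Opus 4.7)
The plan is to directly verify the three physical realisability conditions \eqref{eq:physicalreal1}--\eqref{eq:physicalreal3} for the proposed matrices, and then compute the impulse response by exploiting the block-diagonal structure of $A^{(r)}$ and $C^{(r)}$.

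First, condition \eqref{eq:physicalreal3} is immediate: since $D^{(r)}=I_{2s}$, it reduces to $J_s^{(r)} - J_s^{(r)} = 0$. Condition \eqref{eq:physicalreal2} holds almost by construction, because the ansatz $L^{(r)} = J_n^{(r)} C^{(r)\top} J_s^{(r)}$ gives
\begin{equation*}
J_n^{(r)} L^{(r)} + C^{(r)\top} J_s^{(r)} D^{(r)} = (J_n^{(r)})^2 C^{(r)\top} J_s^{(r)} + C^{(r)\top} J_s^{(r)} = 0,
\end{equation*}
since $(J_n^{(r)})^2 = -I_{2n}$. The only substantive check is \eqref{eq:physicalreal1}. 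Using $J_n^{(r)} = \left(\begin{smallmatrix} 0 & I_n \\ -I_n & 0 \end{smallmatrix}\right)$, a direct block computation shows that $A^{(r)} J_n^{(r)} + J_n^{(r)} A^{(r)\top}$ is off-diagonal with upper block $A + Z^\top$ and lower block $-(A^\top + Z)$. Evaluating $L^{(r)} = J_n^{(r)} C^{(r)\top} J_s^{(r)}$ yields $L^{(r)} = \mathrm{diag}(-C^{(r)\top}_2,\, -C^{(r)\top}_1)$, from which $L^{(r)} J_s^{(r)} L^{(r)\top}$ is again off-diagonal, with upper block $C^{(r)\top}_2 C^{(r)}_1$ and lower block $-C^{(r)\top}_1 C^{(r)}_2$. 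Adding the two contributions and equating to zero gives a single algebraic condition, namely $Z = -A^\top - C^{(r)\top}_1 C^{(r)}_2$, which is exactly the hypothesis of the lemma.

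For the impulse response, block-diagonality of $A^{(r)}$ gives $e^{A^{(r)} t} = \mathrm{diag}(e^{At}, e^{Zt})$, so composing with $C^{(r)} = \mathrm{diag}(C^{(r)}_1, C^{(r)}_2)$ and the $L^{(r)}$ computed above produces the block-diagonal matrix with entries $-C^{(r)}_1 e^{At} C^{(r)\top}_2$ and $-C^{(r)}_2 e^{Zt} C^{(r)\top}_1$. Adding $D^{(r)} \delta(t) = I_{2s}\delta(t)$ recovers the stated formula.

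The main technical obstacle is the sign-bookkeeping induced by the symplectic forms $J_n^{(r)}, J_s^{(r)}$ and the transposes: once the matrices are organised into $n \times n$ blocks, the verification is routine, but it is easy to drop a minus sign. A secondary conceptual point worth flagging in the proof is that one should remark that the position-sector block realises the prescribed classical impulse response $-C^{(r)}_1 e^{At} C^{(r)\top}_2$, while the momentum-sector block carries a dual dynamics on $Z$ that is forced by the canonical commutation relations; this is precisely the mechanism by which the qWiener architecture will be shown, in the subsequent universality argument, to embed arbitrary SISO LTI responses inside a physically realisable QSS.
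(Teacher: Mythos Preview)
Your proposal is correct and follows essentially the same approach as the paper: both verify \eqref{eq:physicalreal1}--\eqref{eq:physicalreal3} by direct block computation (exploiting $(J_n^{(r)})^2=-I_{2n}$ for \eqref{eq:physicalreal2}, and showing that \eqref{eq:physicalreal1} reduces to the single algebraic constraint $Z=-A^\top-C^{(r)\top}_1 C^{(r)}_2$), and then obtain the impulse response from the block-diagonal form $L^{(r)}=\mathrm{diag}(-C^{(r)\top}_2,-C^{(r)\top}_1)$ together with $e^{A^{(r)}t}=\mathrm{diag}(e^{At},e^{Zt})$. The only difference is ordering: the paper computes $L^{(r)}$ and the impulse response first and checks realisability afterwards, whereas you do the reverse; the content is identical.
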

From the above lemma, it can be noticed that, by setting $C^{(r)}_1 =-C$ and $C^{(r)}_2=B^\top$, the restriction of the impulse response to the first component of the input and output leads to the impulse response $Ce^{At}B+\delta(t)$. Therefore, we have shown that any SISO transfer function can be realised. As noted earlier, in RC the feedthrough term $\delta(t)$ can be disregarded, since the input $u(t)$ can always be fed into or removed from the input of the readout layer.

Observe that, in RC, the matrix $A$ is assumed to have all eigenvalues with strictly negative real parts. However, since the matrix $Z = -A^\top + C^\top B^\top$ includes the term $ -A^\top$, the matrix $Z$ is not, in general, guaranteed to have eigenvalues with strictly negative real parts. This implies that the resulting QSS could be unstable in general. This is problematic, as the corresponding QRC could then require an unbounded amount of energy. In what follows, we show that, by suitably choosing the matrices $C^{(r)}_1,C^{(r)}_2$, we can ensure that this does not occur.

\begin{theorem}
\label{th:qABCD}
 Assume that $C^{(r)}_1=\begin{pmatrix}R^\top & -C^\top\end{pmatrix}^\top,C^{(r)}_2=\begin{pmatrix}B & 0_{n \times 1}\end{pmatrix}^\top$
 then  $Z=-A^\top-R^\top B^\top=(-A-BR)^\top$, where $R \in \mathbb{R}^{1 \times n}$. The  impulse response is:
 \begin{align*}
&{\bf h}(t)=\begin{pmatrix}
         -R e^{At} B &0 & 0& 0\\
         C e^{At} B &0 & 0& 0\\
         0 & 0 & -B^\top  e^{Zt} R^\top & B^\top  e^{Zt} C^\top \\
         0 & 0 &0 &0\end{pmatrix}+I_{4}\delta(t).
 \end{align*}
 Assuming that  the matrix $\begin{pmatrix}B,AB,A^2B,\dots,A^{n-1}B\end{pmatrix}$ has rank $n$, then we can always find $R$ such that
$Z$ has eigenvalues with strictly negative real-part, so that the QSS \eqref{eq:Aruniv}--\eqref{eq:Druniv} is stable. 
\end{theorem}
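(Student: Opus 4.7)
The plan has two clean algebraic parts and one substantive control-theoretic part. First, I would verify the formula for $Z$: starting from the general identity $Z = -A^\top - C^{(r)\top}_1 C^{(r)}_2$ established in Lemma~\ref{lem:qABCD}, substituting $C^{(r)}_1 = \begin{pmatrix} R^\top & -C^\top \end{pmatrix}^\top$ and $C^{(r)}_2 = \begin{pmatrix} B & 0_{n\times 1} \end{pmatrix}^\top$ gives
\[
C^{(r)\top}_1 C^{(r)}_2 = \begin{pmatrix} R^\top & -C^\top \end{pmatrix}\begin{pmatrix} B^\top \\ 0 \end{pmatrix} = R^\top B^\top = (BR)^\top,
\]
and hence $Z = -A^\top - (BR)^\top = -(A+BR)^\top$, exactly as claimed.

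Next, I would derive the impulse response by direct expansion of the block-diagonal formula $\mathbf{h}(t) = \text{diag}\bigl(-C^{(r)}_1 e^{At} C^{(r)\top}_2,\; -C^{(r)}_2 e^{Zt} C^{(r)\top}_1\bigr) + I_{4}\delta(t)$ supplied by Lemma~\ref{lem:qABCD}. Each diagonal $2\times 2$ block is an outer-product sandwich whose two zero entries come for free: the zero column of $C^{(r)\top}_2$ kills the second column of the first block, and the zero row of $C^{(r)}_2$ kills the second row of the second block. The surviving four scalars are $-Re^{At}B$, $Ce^{At}B$, $-B^\top e^{Zt}R^\top$ and $B^\top e^{Zt}C^\top$, landing precisely in the positions displayed in the theorem statement.

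The substantive step is the stability claim. Because $A^{(r)} = \text{diag}(A,Z)$ is block-diagonal, its spectrum is the union of those of $A$ and $Z$; $A$ is already Hurwitz by hypothesis, so it is enough to pick $R$ making $Z$ Hurwitz. Since transposition preserves eigenvalues and negation reverses signs, the spectrum of $Z = -(A+BR)^\top$ is the negation of the spectrum of $A+BR$, so $Z$ is Hurwitz \emph{iff} every eigenvalue of $A+BR$ has strictly positive real part. The rank condition $\mathrm{rank}\bigl(B,AB,\ldots,A^{n-1}B\bigr)=n$ is precisely Kalman's controllability condition for the single-input pair $(A,B)$, and the classical pole-placement theorem then guarantees existence of a row $R \in \mathbb{R}^{1\times n}$ assigning the eigenvalues of $A+BR$ to any prescribed self-conjugate multiset --- in particular, any $n$ distinct positive real numbers, or equivalently the mirror image of the original spectrum of $A$ across the imaginary axis. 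For such an $R$, $Z$ is Hurwitz and so is the full $A^{(r)}$.

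I expect the only real obstacle to be careful bookkeeping with orientations and transposes: $R$ is a $1\times n$ row, so that $R^\top B^\top = (BR)^\top$ is a genuine $n\times n$ matrix, and the four scalar blocks in the impulse response must be placed in the correct entries of the $4\times 4$ matrix. Conceptually, nothing beyond Lemma~\ref{lem:qABCD} and the standard single-input pole-placement theorem is required; the physical-realisability constraints \eqref{eq:physicalreal1}--\eqref{eq:physicalreal3} have already been absorbed into the block construction of Lemma~\ref{lem:qABCD}, reducing stability to a purely classical state-feedback problem on $(A,B)$.
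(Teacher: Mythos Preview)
Your proposal is correct and follows essentially the same route as the paper: the algebraic parts are obtained by direct substitution into Lemma~\ref{lem:qABCD}, and the stability claim is reduced to a classical state-feedback argument via the Kalman controllability condition. The only cosmetic difference is that the paper phrases the control step as ``$(-A,B)$ is controllable because its controllability matrix has the same rank as that of $(A,B)$, hence $-A-BR$ can be made Hurwitz,'' whereas you invoke pole placement on $(A,B)$ to push the spectrum of $A+BR$ into the open right half-plane; these are equivalent formulations of the same fact.
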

The restriction of the impulse response to the second component of the output and the first component of the input is $Ce^{At}B+\delta(t)$.
In summary, we have shown that a QSS can realise any transfer function. 
The rank assumption in Theorem~\ref{th:qABCD} is standard in control theory. The matrix $R$ can for instance be determined by solving a Linear-Quadratic Gaussian control (LQG) \citep{green2012linear} problem with arbitrary weighting matrices.

It is important to note the following: { 
\begin{corollary}
 \label{co:active}
 The QSS described by~\eqref{eq:Aruniv}--\eqref{eq:Druniv} is not completely passive.
\end{corollary}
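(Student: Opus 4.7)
The plan is to prove the Corollary by contradiction: assume the QSS described by \eqref{eq:Aruniv}--\eqref{eq:Druniv} is completely passive, and derive an incompatibility with the block-diagonal structure that Lemma \ref{lem:qABCD} imposes on $C^{(r)}$ and $A^{(r)}$.

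First I would invoke Definition \ref{def:comppas}, which requires $N_2 = 0$, $M_2 = 0$, $\Re(M_1) = \Re(M_1)^\top$, and $\Im(M_1) = -\Im(M_1)^\top$. Substituting $N_2 = 0$ into \eqref{eq:Cr} gives the complex-structured form
\begin{equation*}
C^{(r)} = \begin{pmatrix} \Re(N_1) & -\Im(N_1) \\ \Im(N_1) & \Re(N_1) \end{pmatrix}.
\end{equation*}
Comparing this with the block-diagonal form \eqref{eq:Cruniv} forces the off-diagonal blocks $\pm\Im(N_1)$ to vanish (so $\Im(N_1) = 0$) and the two diagonal blocks to coincide (so $C^{(r)}_1 = C^{(r)}_2 = \Re(N_1)$). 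This already excludes the choice made in Theorem \ref{th:qABCD}, where $C^{(r)}_1 = (R^\top,-C^\top)^\top$ and $C^{(r)}_2 = (B,0)^\top$ coincide only in the trivial case $C = 0$, $R^\top = B$, yielding the null impulse response $Ce^{At}B = 0$.

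To obtain a contradiction within the setting of Lemma \ref{lem:qABCD} alone (without relying on the explicit $C^{(r)}_1, C^{(r)}_2$ of Theorem \ref{th:qABCD}), I would continue the analysis to $A^{(r)}$. A direct block computation using $\Im(N_1) = 0$ and the definition $X^\sharp = -J_n^{(r)} X^\dagger J_s^{(r)}$ shows that $C^{(r)\sharp}C^{(r)}$ reduces to the block-diagonal matrix $\mathrm{diag}\bigl(\Re(N_1)^\top\Re(N_1),\,\Re(N_1)^\top\Re(N_1)\bigr)$. Substituting into \eqref{eq:Ar} together with the Hermiticity constraints on $M_1$ yields
\begin{equation*}
A^{(r)} = \begin{pmatrix} \Im(M_1) - \tfrac{1}{2}\Re(N_1)^\top\Re(N_1) & \Re(M_1) \\ -\Re(M_1) & \Im(M_1) - \tfrac{1}{2}\Re(N_1)^\top\Re(N_1) \end{pmatrix}.
\end{equation*}
Matching this against the block-diagonal form \eqref{eq:Aruniv} forces $\Re(M_1) = 0$ and $A = Z$. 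Substituting $Z = -A^\top - C^{(r)\top}_1 C^{(r)}_2 = -A^\top - \Re(N_1)^\top\Re(N_1)$ then gives $A + A^\top = -\Re(N_1)^\top\Re(N_1) \preceq 0$, so the symmetric part of $A$ is forced to be negative semidefinite. Since the universality construction demands that $A$ be an arbitrary stable target matrix, which need not have a negative semidefinite symmetric part, we obtain the sought contradiction.

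The step I expect to require the most care is the block computation of $C^{(r)\sharp}C^{(r)}$ via the $\sharp$-adjoint, together with the bookkeeping confirming that $A^{(r)}$ in the completely passive case has equal diagonal blocks and off-diagonal blocks equal up to sign. Once this structural form is established, pattern-matching against \eqref{eq:Aruniv} and inserting the explicit expression for $Z$ collapses the remainder of the argument to the one-line sign condition on $A + A^\top$.
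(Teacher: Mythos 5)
Your first argument is essentially the paper's own proof: the paper likewise matches the block-diagonal $C^{(r)}$ of the universality construction against the general form \eqref{eq:Cr} and notes that realising it forces $\Re(N_2)\neq 0$, which is exactly your observation that $N_2=0$ would force $\Im(N_1)=0$ and $C^{(r)}_1=C^{(r)}_2$, incompatible with the choices of Theorem \ref{th:qABCD} except in the trivial case of a null impulse response. One caveat on your supplementary argument: it does not give a contradiction \emph{within the setting of Lemma \ref{lem:qABCD} alone}, because completely passive systems with that block structure do exist -- your own computation shows that choosing $C^{(r)}_1=C^{(r)}_2$, $\Re(M_1)=0$ and $A=Z=\Im(M_1)-\tfrac{1}{2}C^{(r)\top}_1 C^{(r)}_1$ with $\Im(M_1)$ skew-symmetric is consistent with $Z=-A^\top-C^{(r)\top}_1C^{(r)}_2$; the condition $A+A^\top\preceq 0$ is only violated for \emph{some} target matrices $A$, so that route proves non-passivity only generically, not for every instance of the construction. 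The corollary as used in the paper therefore rests on your first argument (the specific $C^{(r)}_1\neq C^{(r)}_2$ of Theorem \ref{th:qABCD}), which is precisely the paper's route.
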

This means that this system requires an external driving (e.g., an external pump beam) to implement.} Since the QSS is active, when a weak measurement is performed on it, the system evolves according to~\eqref{eq:KB1}--\eqref{eq:KB2}. 
In this case, the measurement back-action introduces process noise into its dynamics, which needs to be properly accounted for.

To deal with the noise in the measured system \eqref{eq:KB1}--\eqref{eq:KB2}, we can employ variational inference in conjunction with stochastic optimization for the readout layer. At each optimisation iteration, the input 
$u(t)$ is propagated through the quantum linear networks under weak measurements, generating noisy inputs for the readout layer. The objective function optimised in the variational approximation (known as the Evidence Lower Bound (ELBO)), is estimated from noisy samples, but this estimate remains unbiased with respect to the true ELBO. Consequently, a stochastic optimizer such as Adam \citep{kingma2015adam} will converge to the true ELBO (or to a local minimum of it due to the nonlinearity of the optimisation).

\section{Connection to previous work}
\label{sec:prev}
The work \citep{nokkala2021gaussian} proposes a network of interacting quantum harmonic oscillators acting as the reservoir for RC, with
spring-like interaction strengths $g_{ij}$. The Hamiltonian
of such a system is described in
terms of the Laplacian matrix $V=V^\top$ having elements $V_{ij}=
\delta_{ij}\sum_k g_{ik} - (1 - \delta_{ij})g_{ij}$, where $\delta_{ij}$ is the Kronecker delta. The resulting Hamiltonian is
$$
\frac{\boldsymbol{\qc}^\top (G_\omega+V) \boldsymbol{\qc} }{2}+\frac{\boldsymbol{\pc}^\top \boldsymbol{\pc}}{2},
$$
where $\boldsymbol{\pc}=(\pc_1,\dots,\pc_n)^\top$,  $\boldsymbol{\qc}=(\qc_1,\dots,\qc_n)^\top$ and $G_\omega=\text{diag}(\omega_1,\dots,\omega_n)$ with $\omega_i>0$.
This leads to the differential equations
\begin{align}
\label{eq:gau1}
 \frac{d \boldsymbol{\qc}}{d t} &=\boldsymbol{\pc},\\
 \label{eq:gau2}
 \frac{d \boldsymbol{\pc}}{d t}&=-(G_\omega+V) \boldsymbol{\qc}.
\end{align}
In \citep{nokkala2021gaussian}, the input is injected into the network by resetting at each timestep the state of one of the oscillators. Instead of modelling the input by ancilla resets as in \citep{nokkala2021gaussian}, we use an equivalent linear QSS representation where the input couples through an external field as discussed in \ref{sec:revLQS}.
Assume without loss of generality, we use the first harmonic oscillator $\qc_1,\pc_1$ to inject the input.
By defining the matrices $M_1,M_2,N_1$ as follows,
\begin{align}
 M_1&=\frac{1}{2}(G_\omega+I_n+V),\\
M_2&=\frac{1}{2}(G_\omega-I_n+V),\\
N_1&=\left(a_1,0,\dots,0\right),
\end{align}
and $N_2=0$, we obtain the QSS matrices:
\begin{align}
 A^{(r)}&= \left(
\begin{array}{cc}
 -\frac{a_1^2}{2}\mathbf{e}_1\mathbf{e}^\top_1 & I_n\\
-G_\omega-V &  -\frac{a_1^2}{2}\mathbf{e}_1\mathbf{e}^\top_1
\end{array}
\right)\\
C^{(r)}&= a_1\left(
\begin{array}{cccccc}
 \mathbf{e}_1 & \mathbf{0}_n  \\
\mathbf{0}_n &  \mathbf{e}_1\\
\end{array}
\right)^\top\\
 D^{(r)}&=\left(
\begin{array}{cccccc}
 s_1\mathbf{e}_1 &  -s_2\mathbf{e}_1  \\
s_2\mathbf{e}_1&   s_1\mathbf{e}_1 \\
\end{array}
\right)\\
 L^{(r)}&= C^{(r)\sharp} D^{(r)}
\end{align}
where $\mathbf{e}_1$ is the first element of the canonical basis of $\mathbb{R}^n$ and $s_1^2+s_2^2=1$. Note that, for $a_1=0$ (no input), the matrix $ A^{(r)}$ coincides with  the state matrix defined by \eqref{eq:gau1}--\eqref{eq:gau2}. { It is interesting to note that,  this proposed quantum linear system has a very particular structure for its dynamics and we do not know if this structure can represent any linear dynamics, which is necessary for universality of the corresponding qWiener architecture. Conversely, our approach achieves  universality as proven for the system in Section \ref{sec:univ}. Additionally, in the numerical experiments, \cite{nokkala2021gaussian} sets $\omega = 0.25$ and selects random interaction strengths $g$ uniformly distributed in the interval $[0.01, 0.19]$ and argues that this breaks symmetries and favours a spectral radius $\rho(A)<1$. \cite{nokkala2021gaussian} does not provide an analysis of the statistical properties of the resulting reservoir to justify the choice of the uniform distribution. In the present work, we go beyond such a heuristic for the case of the qWiener model, e.g. by explicitly expressing the limit kernel  in Section \ref{sec:limitkernelquantum} and, therefore, justifying the choice of the distribution $\phi_{\boldsymbol{\theta}}(\alpha^2,\omega)=(\tfrac{\alpha^2}{2}/(\pi(\omega^2+\tfrac{\alpha^4}{4}))\text{Unif}(\alpha^2; \alpha_m,\alpha_M)$.

Finally, the work \citep{nokkala2021gaussian} discusses additional nonlinear capabilities obtained by varying the input encoding of the coherent ancillary state, from amplitude to phase. This type of model -- comprising a nonlinear transformation of the input, a bank of SISO LTI systems, and a static nonlinearity in the readout layer -- is known as a Hammerstein–Wiener model. In the following numerical experiment, we assume that the amplitude of the ancillary state is equal to the input itself, that is, no nonlinear transformation is applied at this stage. Note that,  nonlinear transformations of the input can be incorporated both in the model of \citep{nokkala2021gaussian} and in our proposed models. However, for  the same choice of nonlinearities at the input and readout layers, only our architecture in Section \ref{sec:univ} is proven to be universal.}

\begin{figure*}
\includegraphics[width=13cm]{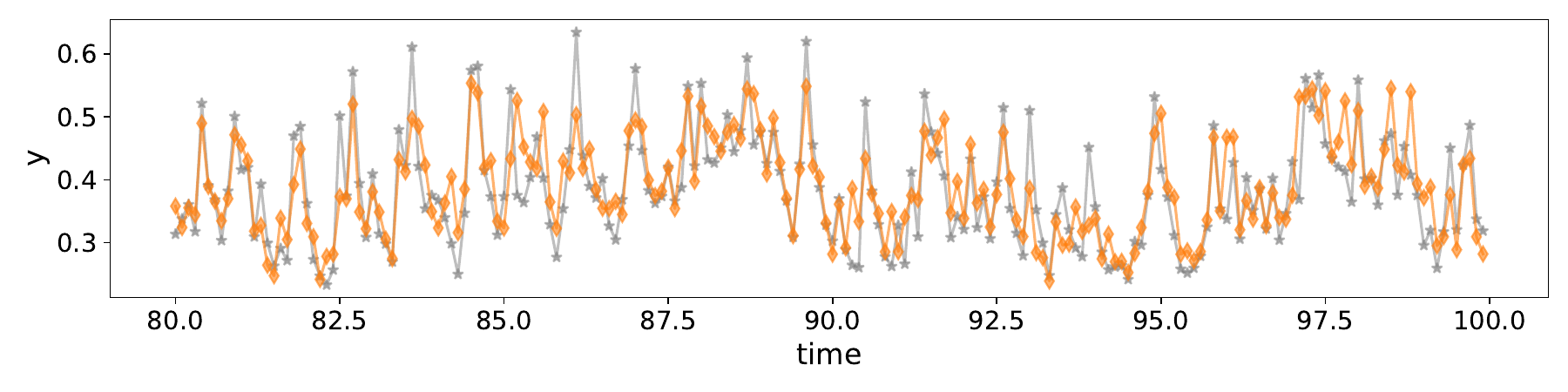}
\includegraphics[width=13cm]{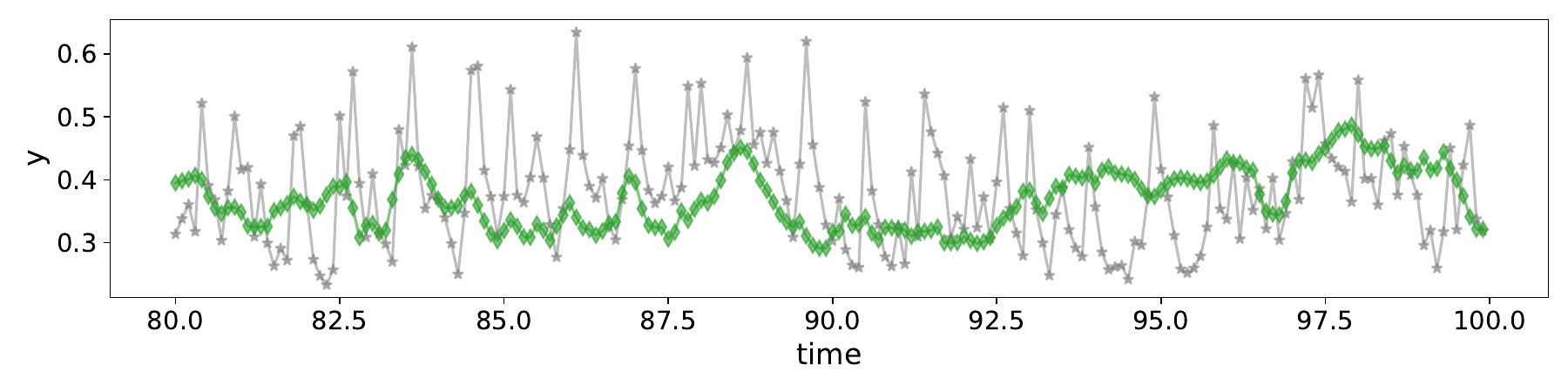}
\includegraphics[width=13cm]{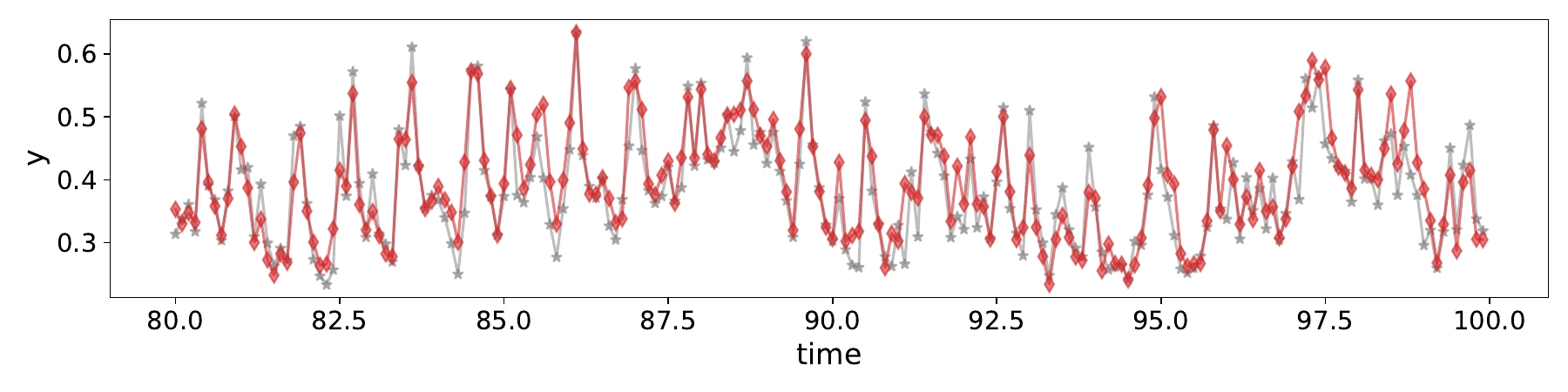}
\includegraphics[width=13cm]{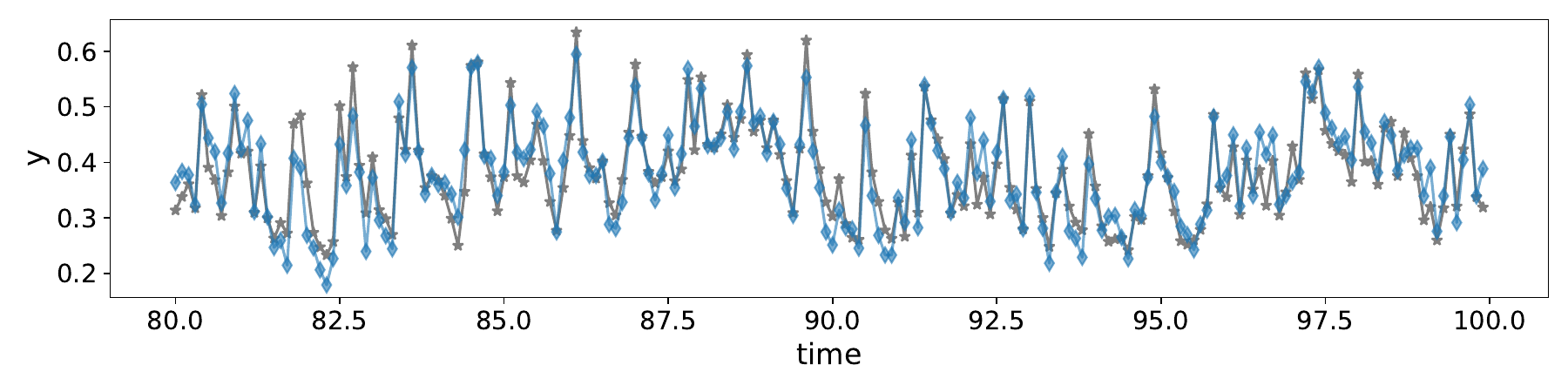}
\caption{Prediction on a test set for the NARMA10 benchmark: true-signal in gray;  HqW($24$) in orange; LqW($24$) in green; PadeqW($14$) in red; ESN($200$) in blue.}
\label{fig:traj}
\end{figure*}

\section{Numerical experiments}
In this section, we compare the Laplacian-based architecture proposed by \citep{nokkala2021gaussian} and illustrated in Section \ref{sec:prev}, which we refer to as \textit{LqW}, with the qWiener architecture obtained by concatenating harmonic oscillators, as presented in Section \ref{sec:harm1}, which we refer to as \textit{HqW}. Differently from \cite{nokkala2021gaussian}, for both LqW and HqW we use a squared-exponential kernel in the readout layer (instead of a polynomial nonlinearity). This choice eliminates the need to select the order of the polynomial and also provides better extrapolation properties ({ the results obtained using a polynomial nonlinearity in the readout layer are reported in Appendix \ref{app:addexp} for comparison).}
For LqW, in all the experiments, we select the parameters of the quantum linear network as $\omega=0.25$, $g_{ij} \sim \text{Unif}(0.01,0.19)$ and $a_1=5$. We tried different combination of the parameters but the performance did not change significantly. For HqW, in all experiments, we selected $a_m=0.01$ and $a_M=20$.
For comparison, as a baseline, we also include  a classical echo-state network (\textit{ESN}) \cite{jaeger2001echo} consisting of $200$ neurons (spectral radius $0.95$), that is a classical RC based on a recurrent neural
network. As benchmark problems, we consider the following three dynamical systems \citep{wringe2025reservoir}.

\textit{Parity check:} The parity check task is a nonlinear temporal benchmark used to assess a model's ability to retain and combine past input symbols in a nonlinear way.
Given a binary input sequence $u_t \in \{0, 1\}$, the target output is
$$
y_t = \mod\left( \sum_{l=0}^{\tau} u_{t-l},2\right),
$$
that is, the parity (even or odd) of the last $\tau$ input bits. As input, we use a pseudo-random binary sequences (PRBS) (with probability $p=0.5$). Note that, the parity task  was also considered in \citep{nokkala2021gaussian}. 

\textit{NARMA$10$:} The nonlinear autoregressive moving average  task is widely used to test nonlinear dynamical modeling and the balance between memory and nonlinearity. For NARMA10, the target output is:
$$
y_{t} = 0.3 y_{t-1} + 0.05 y_{t-1} \sum_{i=0}^{9} y_{t-1-i} + 1.5 u_{t-10} u_{t-1} + 0.1,
$$
where $u_t$ is  drawn from a uniform distribution in $[0, 0.5]$.
The task requires both long memory and nonlinear interactions between past inputs and outputs.

\textit{Delay:}
The delay task quantifies how much information about past inputs a model can retain.
Given an input $u_t$, the target output is a delayed version of it:
$$
y_t = u_{t-\tau}.
$$
As input, we use a PRRBS sequence (with probability $p=0.5$).

In all experiments, we selected a sampling step $\Delta=0.01$. Furthermore, we fixed $Q^{(r)}=\tfrac{1}{2}I_{2s}$,  which is the covariance of the vacuum, and, therefore, to maintain a high signal-to-noise ratio, we amplified the input signal by a factor of $500$.

Table \ref{tab:1} reports the median of the average Root Mean Square Error (RMSE) computed over 10 Monte Carlo repetitions. In each Monte Carlo repetition, a time series of length $1000$ is generated for the benchmarks described above, using different realisations of the random input sequence $u_t$. The first $800$ points are used for training, while the remaining $200$ points are used to compute the average RMSE. The table shows the median RMSE across the 10 repetitions for three models: (i) the classical RC ESN; (ii) the QRC LqW; (iii) our QRC HqW. In the top-row, the number between brackets after the name of the model is relative to the number of harmonic oscillators in the quantum linear network.

It can be seen that our method, HqW, outperforms LqW across all benchmarks when using the same number of harmonic oscillators. Moreover, the RMSE of HqW is typically half that of LqW, or even lower.
Contrary to what one would expect from a RC architecture, the performance of LqW does not generally improve as the number of harmonic oscillators increases; in fact, it often deteriorates, because the additional oscillators alter the system dynamics in a way that degrades performance. In contrast, a key advantage of our method is that we have proven that its performance improves as the number of oscillators increases, since the covariance converges more closely to the limiting kernel, thus increasing expressivity.
Our method also generally outperforms the ESN (with 200 neurons). This is not surprising, as a standard ESN cannot reliably solve tasks that require large values of $\tau$ and, in particular, struggles with the parity task \citep{barbosa2021symmetry}.

\begin{table*}[!tbp]
{\small
\begin{tabular}{|c|c||c|c||c|c||c|c|}
\hline
\textbf{RMSE}& ESN  & \textit{LqW}($8$)& \textit{HqW}($8$) & \textit{LqW}($16$)& \textit{HqW}($16$) & \textit{LqW}($24$) & \textit{HqW}($24$)\\
\hline   
parity($\tau=2$) & 0.49 & 0.005  & \textbf{0.0031} & 0.011& \textbf{0.0027}& 0.013 &\textbf{0.0020}\\
parity($\tau=4$) & 0.49 & 0.033   &  {\bf0.014}& 0.083 & {\bf0.013}& 0.33 & {\bf0.012}\\
NARMA10 & 0.041  & 0.105 & \textbf{0.069} &  0.113 &  \textbf{0.062} & 0.117  &\textbf{0.052}\\
delay($\tau=2$)&  0.026  & 0.013 & \textbf{0.0055} & 0.013 &  \textbf{0.0038} & 0.014 & \textbf{0.0038}\\
delay($\tau=4$) & 0.095  & 0.49  & \textbf{0.016} & 0.49 & \textbf{0.014} &  0.49 &\textbf{0.012}\\
\hline
\end{tabular}
}
\caption{Average RMSE for (i) the classical RC ESN; (ii) the QRC LqW; (iii) our QRC HqW. The number between brackets after the name of the model is relative to the number of harmonic oscillators in the quantum linear network.}
\label{tab:1}
\end{table*}

 Figure \ref{fig:traj}   shows the prediction  on one of the test sets of the Monte Carlo simulation for the NARMA10 benchmark: the true-signal is in grayc  HqW($24$) in orange, LqW($24$) in green, and  ESN($200$) in blue.

\subsection{Exploiting universality}
In this section, we demonstrate how to exploit the universality of the qWiener architecture (proven in Section~\ref{sec:univ}) to implement a QRC with long-term memory.
  In RC, a key determinant of performance is the \emph{memory capacity}, 
that is, the ability of the system to retain information about past inputs.  
Consider the delay task
$$y(t) = u(t-\tau),
$$
which requires perfect recall of the input \(\tau\) time units in the past.  
This can be written in convolution form as
$$
y(t) = \int_{-\infty}^{\infty} \delta(t - s - \tau)\, u(s)\, ds,
$$
where the impulse response of an ideal delay is the shifted Dirac delta  
\(h(t) = \delta(t-\tau)\).  

\begin{figure*}
\includegraphics[width=13cm]{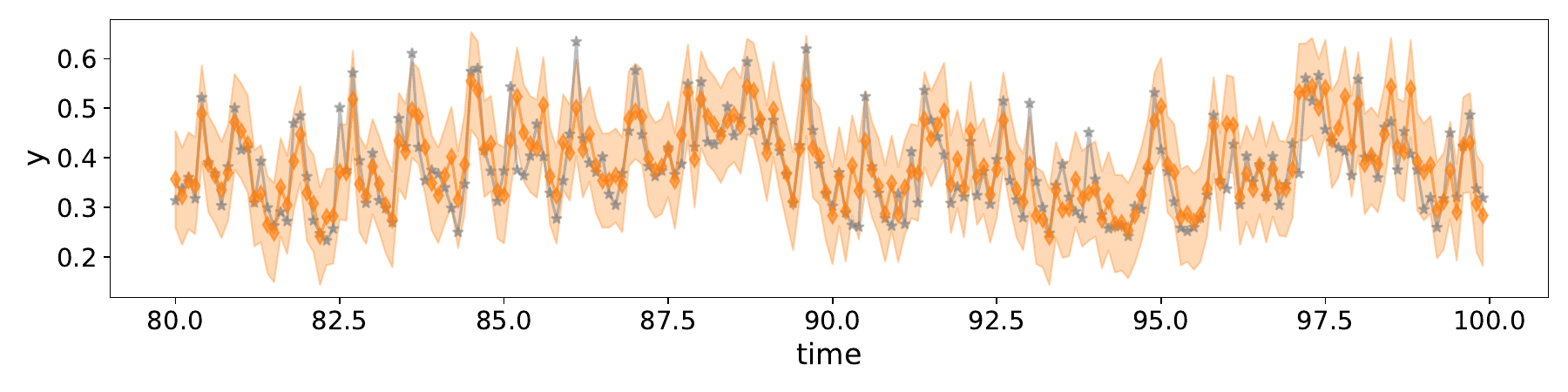}
\caption{Prediction on a test set for the NARMA10 benchmark for HqW($24$): the mean is the orange line and the orange-shaded region is the 95\% credible interval.}
\label{fig:traj1}
\end{figure*}

The Laplace transform of this impulse response is $H(s) = e^{-s\tau}$. Its transfer function \(e^{-s\tau}\) is \emph{not} a rational 
function.  
Thus, an ideal delay is an infinite-dimensional system, and cannot be realised by any 
finite-dimensional LTI  dynamical system. The solution proposed in \citep{voelker2019legendre} is to employ a Pad{\'e} approximation to find a \emph{rational} approximation of 
the delay operator by replacing
$$
e^{-s\tau} \approx \frac{P_m(s)}{Q_n(s)},
$$
where $P_m$ and $Q_n$ are polynomials of degrees $m$ and $n$, respectively (with $m\leq n$).  
This rational approximation can be implemented exactly by an LTI system. Moreover, as a RC, this design realises a computational basis corresponding to orthogonal Legendre polynomials \citep{voelker2019legendre}. This is important because using orthogonal basis functions provides a principled alternative for constructing RCs, compared to the traditional approach of using randomised basis functions (which we used for our HqW).

In this section, we fix $n = 8$ and $m = 7$, and use the controllable 
canonical form to derive the matrices $A$, $B$, and $C$ of a classical LTI system. We then apply Lemma \ref{lem:qABCD} and Theorem \ref{th:qABCD} to derive the QSS that implements the impulse response  Pad{\'e}  approximation of the delay.

We repeat these steps for different delays $\tau=k\Delta$ with $k=1,2,\dots,14$. This allows us to obtain a qWiener architecture with 14 linear networks, which we denote as PadeqW($14$). The performance of this QRC are reported in the below table together with the performance of HqW($24$) from the previous table for comparison:
\begin{table}[h]
{\small
\begin{tabular}{|c|c|c|}
\hline
\textbf{RMSE}& \textit{HqW}($24$)&  \textit{PadeqW}($14$)\\
\hline   
parity($\tau=4$) & 0.012&  \textbf{0.0078}\\
NARMA10 & 0.052 & \textbf{0.044}  \\
delay($\tau=4$) &0.012 & \textbf{0.0025} \\
\hline
\end{tabular}
}
\caption{Average RMSE for  our QRC HqW(24) versus  our QRC PadeqW(14).}
\end{table}
This shows that the system achieves practically zero error on the parity ($\tau=4$) and delay ($\tau=4$) tasks, as expected since its memory is $\tau=14$. It also attains a very low RMSE on the NARMA10 task.  
 Figure \ref{fig:traj1}   shows the prediction  on  the same test set of Figure \ref{fig:traj1} for PadeqW($14$), showing the improved performance.

\subsection{Implementation}
The source code to run the experiments in the manuscript was implemented in Python and  is available at \url{https://github.com/benavoli/qWiener}.

\section{Conclusions}
In this work, we have derived a quantum version of the Wiener architecture (qWiener) for Reservoir Computing (RC). 
This architecture consists of a quantum linear network followed by a weak measurement and a classical static nonlinear readout layer. 
We have proved that, although the quantum dynamics impose constraints on the parameters of the quantum linear network, the universality of the architecture can still be guaranteed. 
 { Additionally, we have shown that the qWiener architecture can be obtained by concatenating simple harmonic oscillators and using only passive optical components such as phase shifters, beam splitters, and mirrors. To prove universality of the architecture, however, we use an alternative structure which requires active optical components.}
Finally, through numerical experiments, we have verified that our quantum-RC achieves very high performance on standard benchmarks for RC.

Future work -- { In this paper, we have established the universality of the qWiener architecture. However, it remains an open question what advantages it might offer (beyond the inherent high dimensionality of quantum systems) over the classical Wiener architecture, which is also universal. One possible research direction is to explore quantum-RC architectures consisting of additional layers (deep), but where the weak measurement occurs only before the readout layer. 
This would allow a deeper exploitation of the quantum features of the system and extend the classical deep Wiener architectures proposed in \citep{forgione2021dynonet} to the quantum case.}
Finally, we have not explicitly emphasised that our proposed architecture is Bayesian, due to the Gaussian Processes used in the final readout layer. This provides, at no additional implementation cost, a principled quantification of uncertainty, as shown in Figure \ref{fig:traj1}. We believe this aspect is particularly important for reservoir computing architectures, which can often exhibit larger prediction errors due to their random initialisation. As future work, we plan to compare our uncertainty quantification with that obtained from fully trained classical networks, such as those considered in \citep{de2021use,benavoli2025dynogp}.

\appendix

\section{Computation of the Gaussian Process posterior}
\label{app:gp_posterior}
Assuming the observation noises $\varepsilon_i$ are independent, the likelihood of the observed data given the latent function values is:
\begin{equation}
\label{eq:GPlike}
p(\mathcal{D}_m \mid f(Y)) = \mathcal{N}(\mathbf{o}; f(Y), \sigma^2 I_m),
\end{equation}
where $\mathbf{o} = [o(t_1), \dots, o(t_m)]^\top$, $Y = [\mathbf{y}^\top(t_1), \dots, \mathbf{y}^\top(t_m)]^\top$, $t_i=i\Delta$ and $I_m$ is the identity matrix of size $m$.

\textit{Prior:} We place a Gaussian process prior on the function $f \sim \mathcal{GP}(0, K^o_{\boldsymbol{\vartheta}})$, where $K^o_{\boldsymbol{\vartheta}}$ is a kernel function defined over the feature vectors $\mathbf{y}(t)$. This implies:
\begin{equation}
\label{eq:GPprior}
p(f(Y) \mid \boldsymbol{\vartheta}) = \mathcal{N}(f(Y); 0, K^o_{\boldsymbol{\vartheta}}(Y,Y)),
\end{equation}
where $(K^o_{\boldsymbol{\vartheta}}(Y,Y))_{ij} = K^o_{\boldsymbol{\vartheta}}(\mathbf{y}(t_i), \mathbf{y}(t_j))$.

\textit{Posterior:} Combining the likelihood \eqref{eq:GPlike} with the prior \eqref{eq:GPprior}, the posterior distribution over the function values is:
\[
p(f(Y) \mid \mathcal{D}_m, \boldsymbol{\theta}) = \mathcal{N}(f(Y); \mu_p(Y), K_p(Y,Y)),
\]
with posterior mean and covariance given by:
\begin{align}
\mu_p(Y) &= K^o_{\boldsymbol{\vartheta}}(Y,Y) \left( K^o_{\boldsymbol{\vartheta}}(Y,Y) + \sigma^2 I_m \right)^{-1} \mathbf{o}, \\
\nonumber
K_p(Y,Y) &= K^o_{\boldsymbol{\vartheta}}(Y,Y) \\
&- K^o_{\boldsymbol{\vartheta}}(Y,Y) \left( K^o_{\boldsymbol{\vartheta}}(Y,Y) + \sigma^2 I_T \right)^{-1} K^o_{\boldsymbol{\vartheta}}(Y,Y).
\end{align}
This operation is equivalent to linear regression (with nonlinear basis functions), but performed in the dual space of the kernel \citep[Chap\.2]{rasmussen2010gaussian}.

\textit{Predictive posterior:} For a new input $\mathbf{y}(t_r)$, the predictive distribution is also Gaussian:
\[
p(f(\mathbf{y}(t_r)) \mid \mathcal{D}_m, \boldsymbol{\vartheta}) = \mathcal{N}(\mu_p(\mathbf{y}(t_r)), K_p(\mathbf{y}(t_r), \mathbf{y}(t'_r))),
\]
with
\begin{align}
&\mu_p(\mathbf{y}(t_r)) = K^o_{\boldsymbol{\vartheta}}(\mathbf{y}(t_r), Y) \left( K^o_{\boldsymbol{\vartheta}}(Y,Y) + \sigma^2 I_m \right)^{-1} \mathbf{o}, \\
\nonumber
 &K_p(\mathbf{y}(t_r), \mathbf{y}(t_r))= K^o_{\boldsymbol{\vartheta}}(\mathbf{y}(t_r), \mathbf{y}(t_r)) \\
&- K^o_{\boldsymbol{\vartheta}}(\mathbf{y}(t_r), Y) \left( K^o_{\boldsymbol{\vartheta}}(Y,Y) + \sigma^2 I_m \right)^{-1} K^o_{\boldsymbol{\vartheta}}(Y, \mathbf{y}(t_r)).
\end{align}

\textit{Hyperparameter selection:} The hyperparameters of the kernel $\boldsymbol{\vartheta}$ and noise variance $\sigma^2$ are the only unknowns. They are
typically estimated by maximising the marginal likelihood:
\[
p(\mathcal{D}_m \mid \boldsymbol{\vartheta}, \sigma^2) = \mathcal{N}(\mathbf{o}; 0, K^o_{\boldsymbol{\vartheta}}(Y,Y) + \sigma^2 I_m).
\]

\section{Von Mises--Fisher distribution}
\label{app:VM}
Assume $s_{1},s_{2} \sim N({\bf m},1/\kappa I_2)$, with ${\bf m} \in \mathbb{R}^2$  and $||{\bf m} ||=1$. Then, conditioned on $\sqrt{s^2_{1}+s^2_{2}}=1$, the vector $\mathbf {s}=(s_{1},s_{2})^\top$ is  von Mises-Fisher distributed with probability density function
\begin{equation}
 {\displaystyle f(\mathbf {s} ;{\boldsymbol {m}},\kappa )=C_{2}(\kappa )\exp \left({\kappa {\boldsymbol {m}}^\top\mathbf {s} }\right),}
\end{equation}
where
\begin{equation}
{\displaystyle C_{2}(\kappa )={\frac {1}{2\pi \mathcal{I}_{0}(\kappa )}},}
\end{equation}
with  $\mathcal{I}_n({\displaystyle \kappa })$ being the modified Bessel function of the first kind of order $n$. It then holds
\begin{align}
\label{eq:VMmean}
 E(\mathbf {s})&=r_1\mathbf {m},\\
 \label{eq:mean2nd}
  E(\mathbf {s}\mathbf {s}^\top)&=\frac{1 - r_{2}}{2}\, I_{2} + r_{2}\,\boldsymbol{m}\boldsymbol{m}^\top, \\
  \label{eq:mean2ndcenter}
Cov(\mathbf {s})&=\frac{1 - r_{2}}{2}\, I_{2} + \bigl(r_{2} - r_{1}^{2}\bigr)\boldsymbol{m}\boldsymbol{m}^\top,
\end{align}
where $r_n=\mathcal{I}_{n}(\kappa)/\mathcal{I}_{0}(\kappa)$.

\section{Proofs}
\label{app:proofs}
This section presents the proofs of the main results of the paper, along with the intermediate lemmas required for their derivation.

\paragraph{Proof of Proposition \ref{prop:linBlocks}}
The output of one LTI block is
$$
\begin{aligned}
y_j(t)&=\int_{0}^t h_j(t-\tau)u(\tau)d\tau\\
&=\int_{0}^t \sum_{i=1}^{n_c} e^{-\alpha_i  (t-\tau)} \big( \beta^{(j)}_i \cos(\omega_i (t-\tau)) \\
&~~~~~~~~- \gamma^{(j)}_i \sin(\omega_i (t-\tau)) \big)u(\tau)d\tau,
\end{aligned}
$$
and, therefore,
{\scriptsize
$$
\begin{aligned}
&{\bf y}^\top(t){\bf y}(t')=\frac{1}{d}\sum_{j=1}^{d} \int_{0}^t \int_{0}^{t'}  \\
& \Big(\sum_{i=1}^{n_c}  e^{-\alpha_i  (t-\tau)} \big( \beta^{(j)}_i \cos(\omega_i (t-\tau)) - \gamma^{(j)}_i \sin(\omega_i (t-\tau)) \big)u(\tau)d\tau\Big)\\
&\Big(\sum_{i=1}^{n_c}  e^{-\alpha_i  (t'-\tau')} \big( \beta^{(j)}_i \cos(\omega_i (t'-\tau')) - \gamma^{(j)}_i \sin(\omega_i (t'-\tau')) \big)u(\tau')d\tau'\Big)
\end{aligned}
$$}
We recall that $\beta^{(j)}_i,\gamma^{(j)}_i \sim N(0,1/\sqrt{n_c})$ independently for $i=1,\dots,n_c$ and $j=1,\dots,d$. Therefore, for large $d$ and $n_c$, we have that
{\footnotesize
$$
\begin{aligned}
&{\bf y}^\top(t){\bf y}(t')\approx \frac{1}{d}\sum_{j=1}^{d} \int_{0}^t \int_{0}^{t'}  \\
& \Big(\sum_{i=1}^{n_c}  e^{-\alpha_i  (t-\tau)-\alpha_i  (t'-\tau')}  \big( \beta^{2(j)}_i \cos(\omega_i (t-\tau)) \cos(\omega_i (t'-\tau'))\\
&+ \gamma^{2(j)}_i \sin(\omega_i (t-\tau)) \sin(\omega_i (t'-\tau'))\big)u(\tau)u(\tau')d\tau d\tau'\Big)\\
&\approx \frac{1}{n_c}\sum_{i=1}^{n_c} \int_{0}^t \int_{0}^{t'}  \\
&\Big(e^{-\alpha_i  (t-\tau)-\alpha_i  (t'-\tau')}  \big(   \cos(\omega_i (t-\tau)) \cos(\omega_i (t'-\tau'))\\
&+  \sin(\omega_i (t-\tau)) \sin(\omega_i (t'-\tau'))\big)u(\tau)u(\tau')d\tau d\tau'\Big)\\
&\approx \frac{1}{n_c}\sum_{i=1}^{n_c}  \int_{0}^t \int_{0}^{t'}  \\
&\Big(e^{-\alpha_i  (t-\tau)-\alpha_i  (t'-\tau')}    \cos(\omega_i (t-\tau-(t'-\tau')))\\
&u(\tau)u(\tau')d\tau d\tau'\Big)\\
&\approx \int_{0}^t \int_{0}^{t'} \int_{0}^{\infty} \int_{-\infty}^{\infty}   \\
&\Big(e^{-\alpha  (t-\tau - (t'-\tau'))}   cos(\omega(t-\tau-(t'-\tau')))\\
&\phi_{\boldsymbol{\theta}}(\alpha,\omega)d\alpha d\omega u(\tau)u(\tau')d\tau d\tau'\Big)\\
&= \int_{0}^t \int_{0}^{t'} K_{{\boldsymbol{\theta}},n_c}(t-\tau,t'-\tau') u(\tau)u(\tau')d\tau d\tau'\Big).\\
\end{aligned}
$$}
\hfill\qed

\paragraph{Proof of Proposition \ref{prop:deepKernel}}
The result is derived from Proposition \ref{prop:linBlocks}.
For the polynomial kernel, we take advantage of the fact that, for large $d$ and $n_c$, the term $\mathbf{y}(t)^\top \mathbf{y}(t') $ is approximately equal to \eqref{eq:yy}.
For the square-exponential kernel, note that
$\|\mathbf{y}(t) - \mathbf{y}(t')\|^2=\mathbf{y}^\top(t)\mathbf{y}(t) - 2\mathbf{y}^\top(t)\mathbf{y}(t')+\mathbf{y}^\top(t')\mathbf{y}(t')$ and the rest follows similarly to the derivation for the polynomial kernel.
\hfill\qed

\paragraph{Proof of Theorem \ref{th:1}}
Note that $Q^{(r)}=\tfrac{1}{2}I_{2s}$
and assume that $V_{\infty} =Q^{(r)}$, then
$$
\begin{aligned}
G_{\infty}&=V_{\infty} C_q^{(r)\top} + L^{(r)}Q^{(r)} D_q^{(r)\top}\\
&=\frac{1}{2}(C_q^{(r)\top} + L^{(r)} D_q^{(r)\top})\\
&=\frac{1}{2}\begin{pmatrix} I_s & 0_s\end{pmatrix}(C^{(r)\top} + L^{(r)} D^{(r)\top})\begin{pmatrix} I_s & 0_s\end{pmatrix}^\top\\
&=\frac{1}{2}\begin{pmatrix} I_s & 0_s\end{pmatrix}(C^{(r)\top} + L^{(r)})\begin{pmatrix} I_s & 0_s\end{pmatrix}^\top\\
&=0
\end{aligned}
$$
where the last equality follows by the physical realisability constraint \eqref{eq:physicalreal2}. Note in fact that, under the assumption $s=n$, we have that
$$
\begin{aligned}
J_n^{(r)}L^{(r)}+C^{(r)\top}J_s^{(r)} &=J_n^{(r)}(J_n^{(r)}C^{(r)\top}J_n^{(r)})+C^{(r)\top}J_s^{(r)}\\
&=-C^{(r)\top}J_n^{(r)}+C^{(r)\top}J_s^{(r)}=0.
\end{aligned}
$$
It remains to show that for $V_{\infty}=\tfrac{1}{2}I_{2s}$ the following equality holds $0=A^{(r)}V_{\infty}+ V_{\infty}A^{(r)\top} + L^{(r)}\tilde{Q}^{(r)}L^{(r)\top}$.
This results follows by
\begin{lemma}[\citep{koga2012dissipation}]
Assume $Q^{(r)}=\tfrac{1}{2}I_{2s}$, that is, $Q^{(r)}$ is the covariance of the vacuum. If a completely passive  QSS is strictly
stable, then it must evolve into a vacuum state, that is $V_{\infty}=\tfrac{1}{2}I_n$.
\end{lemma}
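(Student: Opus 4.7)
The plan is to verify directly that the vacuum covariance $V = \tfrac{1}{2} I_{2n}$ solves the unconditional (no-measurement) steady-state Lyapunov equation
\begin{equation*}
A^{(r)} V + V A^{(r)\top} + L^{(r)} Q^{(r)} L^{(r)\top} = 0,
\end{equation*}
which governs the asymptotic covariance of the open quantum linear system when driven by vacuum noise ($Q^{(r)} = \tfrac{1}{2} I_{2s}$). Since $A^{(r)}$ is strictly stable, this equation has a unique solution, so any verified ansatz equals $V_\infty$. Substituting $V = \tfrac{1}{2} I_{2n}$ and $Q^{(r)} = \tfrac{1}{2} I_{2s}$ reduces the task to establishing
\begin{equation*}
A^{(r)} + A^{(r)\top} + L^{(r)} L^{(r)\top} = 0,
\end{equation*}
i.e.\ a Lyapunov-type identity with the identity symplectic form replaced by the Euclidean one.

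The technical core is to derive this identity from the complete-passivity conditions of Definition \ref{def:comppas} together with the physical realisability constraints \eqref{eq:physicalreal1}--\eqref{eq:physicalreal3}. Two complementary routes are available. The cleanest is to work in the complex coordinates $\boldsymbol{\zetac}^{(c)}$: complete passivity ($M_2 = 0$, $N_2 = 0$, with $\Re(M_1)$ symmetric and $\Im(M_1)$ antisymmetric) forces the Heisenberg equation for the annihilation vector $\boldsymbol{\check a}$ to decouple from that of $\boldsymbol{\check a}^{*}$, taking the number-preserving form $d\boldsymbol{\check a} = (-\iota M_1 - \tfrac{1}{2} N_1^{\dagger} N_1)\boldsymbol{\check a}\, dt + \text{(vacuum driving)}$. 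Consequently the normally-ordered second moment $\langle \boldsymbol{\check a}^{*} \boldsymbol{\check a}^{\top}\rangle$ is driven by a Lyapunov equation whose only stable fixed point is zero (the driving noise from the vacuum input contributes nothing to this moment under number conservation), while $\langle \boldsymbol{\check a} \boldsymbol{\check a}^{\top}\rangle$ stays identically zero since its driving noise vanishes. Translating these moments through the quadrature relation \eqref{eq:aprel} recovers $V_\infty = \tfrac{1}{2} I_{2n}$.

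An alternative purely real-quadrature argument proceeds by showing that complete passivity makes the Hamiltonian contribution to $A^{(r)}$ (the first matrix in \eqref{eq:Ar}) antisymmetric; hence its symmetric part vanishes and $A^{(r)} + A^{(r)\top} = -(C^{(r)\sharp} C^{(r)} + C^{(r)\top} C^{(r)\sharp\top})$. Combining $L^{(r)} = -C^{(r)\sharp} D^{(r)}$ with $D^{(r)} D^{(r)\top} = I_{2s}$ (which follows from \eqref{eq:Dr}) gives $L^{(r)} L^{(r)\top} = C^{(r)\sharp} C^{(r)\sharp\top}$, and the identity $C^{(r)\sharp} = -J_n^{(r)} C^{(r)\top} J_s^{(r)}$ together with the skew-symmetry of the $J$ matrices then closes the computation, using the fact that under complete passivity $C^{(r)}$ has the block structure $\begin{pmatrix} X & -Y \\ Y & X \end{pmatrix}$ inherited from $N_2 = 0$.

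The main obstacle is the real-quadrature bookkeeping: verifying that complete passivity makes the Hamiltonian contribution to $A^{(r)}$ exactly antisymmetric requires carefully using both the symmetry of $\Re(M_1)$ and the antisymmetry of $\Im(M_1)$ in the block matrix \eqref{eq:Ar}. The complex-coordinate route avoids this at the cost of needing to justify that $\langle \boldsymbol{\check a} \boldsymbol{\check a}^{\top}\rangle_\infty = 0$, i.e.\ that pair-correlations do not build up under the passive vacuum-driven dynamics; this follows because the forcing term for this moment is proportional to $\langle d\boldsymbol{\check w} \, d\boldsymbol{\check w}^{\top}\rangle$ which vanishes in the vacuum, combined with strict stability of the decoupled Lyapunov equation for this moment.
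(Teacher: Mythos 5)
Your proposal is correct in substance, but it differs from the paper in an important structural way: the paper does not prove this lemma at all --- it is imported verbatim from \citep{koga2012dissipation} and used as a black box inside the proof of Theorem~\ref{th:1}. What you supply is therefore a self-contained verification that the paper omits. Both of your routes are sound. The complex-coordinate route is the standard one (and essentially the one in the cited reference): passivity decouples $\boldsymbol{\check a}$ from $\boldsymbol{\check a}^{*}$, and the vacuum It\={o} table kills the driving terms for both $\langle \boldsymbol{\check a}^{*}\boldsymbol{\check a}^{\top}\rangle$ and $\langle \boldsymbol{\check a}\boldsymbol{\check a}^{\top}\rangle$, so both decay to zero under strict stability. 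Your real-quadrature route also closes, and in fact more simply than you suggest: with $N_2=0$ the matrix $C^{(r)}$ has the block form $\bigl(\begin{smallmatrix} X & -Y\\ Y & X\end{smallmatrix}\bigr)$, which gives the intertwining $J_s^{(r)}C^{(r)} = C^{(r)}J_n^{(r)}$ and hence $C^{(r)\sharp} = -J_n^{(r)}C^{(r)\top}J_s^{(r)} = C^{(r)\top}$; combined with $D^{(r)}D^{(r)\top}=I_{2s}$ this reduces every term to $C^{(r)\top}C^{(r)}$ and the identity $A^{(r)}+A^{(r)\top}+L^{(r)}L^{(r)\top}=0$ follows immediately. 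One small slip to fix: you write $A^{(r)}+A^{(r)\top} = -\bigl(C^{(r)\sharp}C^{(r)} + C^{(r)\top}C^{(r)\sharp\top}\bigr)$, but the damping term in \eqref{eq:Ar} carries a factor $\tfrac{1}{2}$, so the correct relation is $A^{(r)}+A^{(r)\top} = -\tfrac{1}{2}\bigl(C^{(r)\sharp}C^{(r)} + C^{(r)\top}C^{(r)\sharp\top}\bigr)$; without that factor the final cancellation against $L^{(r)}L^{(r)\top}=C^{(r)\top}C^{(r)}$ would fail. Also note that strictly the lemma asserts convergence to the vacuum \emph{state}, so for completeness you should add that the mean decays to zero by stability and that Gaussianity is preserved, so the covariance $\tfrac{1}{2}I_{2n}$ indeed identifies the vacuum; for the purposes of Theorem~\ref{th:1} only the covariance is used, so this is a cosmetic addition.
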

\hfill\qed

\paragraph{Useful lemma}
\begin{lemma}
\label{lem:mat}
Consider the matrix
$$
A = \begin{pmatrix}
-\frac{\alpha^2+\beta^2}{2} & -\omega\\
\omega & -\frac{\alpha^2+\beta^2}{2}  \\
\end{pmatrix},
$$
its matrix exponential is
\begin{equation}
\label{eq:matexpexpr}
e^{At} = e^{-(\alpha^2+\beta^2)t}\begin{pmatrix}
        \cos(\omega t) & - \sin(\omega t) \\
         \sin(\omega t) & \sin(\omega t)
        \end{pmatrix}.
\end{equation}
\end{lemma}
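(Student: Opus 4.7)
The plan is to split $A$ into two commuting parts and exponentiate each separately. Specifically, I would write
\begin{equation*}
A = -\tfrac{\alpha^2+\beta^2}{2}\, I_2 + \omega\, J, \qquad J := \begin{pmatrix} 0 & -1 \\ 1 & 0 \end{pmatrix}.
\end{equation*}
Since $I_2$ commutes with every matrix, the two summands commute, and hence $e^{At} = e^{-\tfrac{\alpha^2+\beta^2}{2} t\, I_2}\cdot e^{\omega t\, J}$.

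The first factor is immediate: $e^{-\tfrac{\alpha^2+\beta^2}{2}t\, I_2} = e^{-\tfrac{\alpha^2+\beta^2}{2}t}\, I_2$. For the second factor, I would use the key identity $J^2 = -I_2$, which lets me sort the power series for $e^{\omega t J}$ into even and odd terms:
\begin{equation*}
e^{\omega t J} = \sum_{k=0}^{\infty} \frac{(\omega t)^{2k}}{(2k)!} J^{2k} + \sum_{k=0}^{\infty} \frac{(\omega t)^{2k+1}}{(2k+1)!} J^{2k+1} = \cos(\omega t)\, I_2 + \sin(\omega t)\, J.
\end{equation*}
Combining the two factors yields the claimed closed form (up to what appear to be minor typos in the statement, namely a factor of $1/2$ in the scalar exponent and a $\cos(\omega t)$ rather than $\sin(\omega t)$ in the $(2,2)$ entry of the matrix).

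There is no real obstacle here; the argument is a textbook computation of the exponential of a $2\times 2$ matrix of the form (scalar)$\cdot I +$ (rotation generator). As a cross-check, I would verify that the proposed $e^{At}$ satisfies $\frac{d}{dt} e^{At} = A e^{At}$ and equals $I_2$ at $t=0$, which pins it down uniquely by the standard ODE existence-uniqueness argument for the matrix initial value problem $\dot X = AX$, $X(0)=I_2$.
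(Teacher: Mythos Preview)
Your proof is correct and follows essentially the same approach as the paper: split $A$ into the diagonal scalar part and the antisymmetric rotation generator, use commutativity to factor $e^{At}$, and exponentiate each piece. The only cosmetic difference is that the paper obtains $e^{\omega t J}$ by diagonalising $J$ via the unitary $\Omega_2$ rather than by your power-series argument using $J^2=-I_2$; your observation about the typos in the scalar exponent and the $(2,2)$ entry is also correct.
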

\begin{proof}
Rewrite
$$
\begin{aligned}
&\begin{pmatrix}
-\frac{\alpha^2+\beta^2}{2} & -\omega\\
\omega & -\frac{\alpha^2+\beta^2}{2}  \\
\end{pmatrix}=\begin{pmatrix}
-\frac{\alpha^2+\beta^2}{2} & 0\\
0 & -\frac{\alpha^2+\beta^2}{2}  \\
\end{pmatrix}+\begin{pmatrix}
0 & -\omega\\
\omega& 0  \\
\end{pmatrix}\\
&=X+Y
        \end{aligned}
$$

Since the two matrices are such that $XY=YX$ (they commute), from the properties of the matrix exponential, we have that $e^{(X+Y)t}=e^{Xt}e^{Yt}$, where
$$
\begin{aligned}
e^{Xt}&=e^{-\frac{\alpha^2+\beta^2}{2}t}I_2\\
e^{Yt}&=\Omega_2^\dagger \begin{pmatrix}
         e^{\iota \omega t} &0\\
         0 &  e^{-\iota \omega t}
        \end{pmatrix}\Omega_2\\
        &=\Omega_2^\dagger \begin{pmatrix}
        \cos(\omega t)+\iota \sin(\omega t) &0\\
         0 &  \cos(\omega t)-\iota \sin(\omega t)
        \end{pmatrix}\Omega_2
\\
&=\begin{pmatrix}
        \cos(\omega t) & - \sin(\omega t) \\
         \sin(\omega t) & \sin(\omega t)
        \end{pmatrix},
\end{aligned}
$$
where
\begin{equation}
\Omega_n = \dfrac{1}{\sqrt{2}}\begin{pmatrix}
    I_n & ~~\iota I_n \\
    I_n & -\iota I_n \\
\end{pmatrix},
\end{equation}
\end{proof}

\paragraph{Proof of Lemma \ref{lem:h_harmonic}}

 Since $n=1$, and the system is completely passive then $M_2=0$, $N_2=0$,  $\Re(M_1)= \Re(M_1)^\top$, $\Im(M_1)=-\Im(M_1)^\top=0$. Furthermore, since $n=1$, we have that $M_1,S \in \mathbb{C}$ and $N_1 \in \mathbb{C}^{1 \times s}$. Therefore, from \eqref{eq:Ar}, we can derive that
  {\small
 $$
 \begin{aligned}
\begin{pmatrix}
    \Im(M_1+M_2) & \Re(M_1-M_2)\\
       -\Re(M_1+M_2) & -\Im(-M_1+M_2)\\ 
\end{pmatrix}&=\begin{pmatrix}
         0 &\Re(M_1) \\-\Re(M_1) & 0
        \end{pmatrix}
 \end{aligned}
 $$}
 and
 {\scriptsize
  $$
 \begin{aligned}
&-\frac{1}{2} C^{(r)\sharp}C^{(r)}=\\
&\frac{1}{2}\begin{pmatrix}
         0 & 1\\-1 & 0
        \end{pmatrix}\begin{pmatrix}
         \Re(N_1)^\top &  \Im(N_1)^\top\\- \Im(N_1)^\top &  \Re(N_1)^\top
        \end{pmatrix} J_s^{(r)}\begin{pmatrix}
         \Re(N_1) & - \Im(N_1)\\\Im(N_1) &  \Re(N_1)
        \end{pmatrix}\\
&=\frac{1}{2}\begin{pmatrix}
         0 & 1\\-1 & 0
        \end{pmatrix}\begin{pmatrix}
         \Re(N_1)^\top &  \Im(N_1)^\top\\- \Im(N_1)^\top &  \Re(N_1)^\top
        \end{pmatrix}\begin{pmatrix}
         0 & I_s\\-I_s &0
        \end{pmatrix}\begin{pmatrix}
         \Re(N_1) & - \Im(N_1)\\\Im(N_1) &  \Re(N_1)
        \end{pmatrix}\\
        &=\frac{1}{2}\begin{pmatrix}
         0 & 1\\-1 & 0
        \end{pmatrix}\begin{pmatrix}
         \Re(N_1)^\top &  \Im(N_1)^\top\\- \Im(N_1)^\top &  \Re(N_1)^\top
        \end{pmatrix}\begin{pmatrix}
         \Im(N_1) & \Re(N_1)\\- \Re(N_1) &  \Im(N_1)
        \end{pmatrix}\\
        &=\begin{pmatrix}
        -\lambda &0\\0 &-\lambda
        \end{pmatrix}
 \end{aligned}
 $$}
 where $\lambda=\frac{1}{2} \Im(N_1)^\top \Im(N_1) +\frac{1}{2}\Re(N_1)^\top \Re(N_1)$.  Therefore, we have
   $$
 \begin{aligned}
{A}^{(r)}
        &=\begin{pmatrix}
         -\lambda &\Re(M_1)\\-\Re(M_1) & -\lambda 
        \end{pmatrix}
 \end{aligned}
$$
The matrices ${C}^{(r)},L^{(r)}$ are
   $$
 \begin{aligned}
{C}^{(r)}
        &=\begin{pmatrix}
         \Re(N_1) & - \Im(N_1)\\\Im(N_1) &  \Re(N_1)
        \end{pmatrix}\\
L^{(r)}
        &=\begin{pmatrix}
         -\Re(N_1)^\top & - \Im(N_1)^\top\\\Im(N_1)^\top &  -\Re(N_1)^\top
        \end{pmatrix} \begin{pmatrix}
    \Re(S) & -\Im(S)\\
    \Im(S) & \Re(S)
\end{pmatrix}
 \end{aligned}
$$
Therefore, by exploiting \eqref{eq:matexpexpr}, the impulse response  is
 {\scriptsize
   $$
 \begin{aligned}
&{C}^{(r)}e^{{A}^{(r)}t}{L}^{(r)}\\
        &=-e^{-\lambda t}\begin{pmatrix}
         \Re(N_1) & - \Im(N_1)\\\Im(N_1) &  \Re(N_1)
        \end{pmatrix} \begin{pmatrix}
         \cos(\omega t) & \sin(\omega t)\\- \sin(\omega t)&  \cos(\omega t)
        \end{pmatrix}\\
        &
        \begin{pmatrix}
         \Re(N_1)^\top &  \Im(N_1)^\top\\-\Im(N_1)^\top &  \Re(N_1)^\top
        \end{pmatrix} \begin{pmatrix}
    \Re(S) & -\Im(S)\\
    \Im(S) & \Re(S)
\end{pmatrix}\\
        &=-e^{-\lambda t} \begin{pmatrix}
          \Re(N_1)\cos(\omega t)+ \Im(N_1)\sin(\omega t) & \Re(N_1)\sin(\omega t)- \Im(N_1)\cos(\omega t)\\  \Im(N_1)\cos(\omega t)- \Re(N_1)\sin(\omega t)&  \Re(N_1)\cos(\omega t)+ \Im(N_1)\sin(\omega t)
        \end{pmatrix}\\
        &
        \begin{pmatrix}
         \Re(N_1)^\top &  \Im(N_1)^\top\\-\Im(N_1)^\top &  \Re(N_1)^\top
        \end{pmatrix} \begin{pmatrix}
    \Re(S) & -\Im(S)\\
    \Im(S) & \Re(S)
\end{pmatrix}\\
        &=-e^{-\lambda t} \begin{pmatrix}
          f_1& f_2\\  f_3 & f_4
        \end{pmatrix}\begin{pmatrix}
    \Re(S) & -\Im(S)\\
    \Im(S) & \Re(S)
\end{pmatrix}\\
 \end{aligned}
$$}
where  $\omega=\Re(M_1)$ and
{\scriptsize
   $$
 \begin{aligned}
f_1 &=\Re(N_1) \Re(N_1)^\top\cos(\omega t)+ \Im(N_1)\Re(N_1)^\top\sin(\omega t) \\
&- \Re(N_1)\Im(N_1)^\top\sin(\omega t)+ \Im(N_1)\Im(N_1)^\top\cos(\omega t)\\
&=(\Re(N_1) \Re(N_1)^\top+ \Im(N_1)\Im(N_1)^\top)\cos(\omega t)\\
f_2 &=\Re(N_1) \Im(N_1)^\top\cos(\omega t)+ \Im(N_1)\Im(N_1)^\top\sin(\omega t)\\
&+ \Re(N_1)\Re(N_1)^\top\sin(\omega t)- \Im(N_1)\Re(N_1)^\top\cos(\omega t)\\
&= (\Re(N_1)\Re(N_1)^\top+ \Im(N_1)\Im(N_1)^\top)\sin(\omega t)\\
f_3 &=\Im(N_1)\Re(N_1)^\top\cos(\omega t)- \Re(N_1)\Re(N_1)^\top\sin(\omega t)  \\&-\Re(N_1)\Im(N_1)^\top\cos(\omega t)- \Im(N_1)\Im(N_1)^\top\sin(\omega t)\\
&=-(\Re(N_1)\Re(N_1)^\top+ \Im(N_1)\Im(N_1)^\top)\sin(\omega t)\\
f_4 &=\Im(N_1)\Im(N_1)^\top\cos(\omega t)- \Re(N_1)\Im(N_1)^\top\sin(\omega t)\\
&+  \Re(N_1)\Re(N_1)^\top\cos(\omega t)+\Im(N_1)\Re(N_1)^\top\sin(\omega t)\\
&=(\Re(N_1) \Re(N_1)^\top+ \Im(N_1)\Im(N_1)^\top)\cos(\omega t)\\
 \end{aligned}
$$}
Finally, we have that
{\scriptsize
   $$
 \begin{aligned}
&{C}^{(r)}e^{{A}^{(r)}t}{L}^{(r)}
        &=-e^{-\lambda t} \left( \begin{pmatrix}
         \cos(\omega t) & \sin(\omega t)\\- \sin(\omega t)&  \cos(\omega t)
        \end{pmatrix} \otimes \Lambda\right)\begin{pmatrix}
    \Re(S) & -\Im(S)\\
    \Im(S) & \Re(S)
\end{pmatrix}\\
 \end{aligned}
$$}
where $\Lambda=(\Re(N_1) \Re(N_1)^\top+ \Im(N_1)\Im(N_1)^\top)$.
\hfill\qed

\paragraph{Proof of Corollary \ref{co:h_harmonic}}
The result follows from Lemma \ref{lem:h_harmonic} by direct substitution of the parameters.
\hfill\qed

\paragraph{Proof of Proposition \ref{prop:concatenation}}
W.l.o.g.\ we can prove the results for $n_c=2$ due to the symmetries in the concatenation product.
\begin{align}
\nonumber
  C^{(r)}
    &=\begin{pmatrix}
    \Re(N_1+N_2) & \Im(-N_1+N_2)\\
    \Im(N_1+N_2) & \Re(N_1-N_2)
\end{pmatrix}=\begin{pmatrix}
   \alpha_1 & 0 & 0 & 0\\
   0& \alpha_2 & 0 & 0\\
   0 & 0 &    \alpha_1 & 0 \\
   0 & 0 &   0 &  \alpha_2 \\
\end{pmatrix}
 \end{align}
and, therefore,
{\scriptsize
\begin{align}
\nonumber
&C^{(r)\sharp}C^{(r)}\\
\nonumber
    &=-\begin{pmatrix}
   0_2 &I_2\\
   -I_2 & 0_2
\end{pmatrix}\begin{pmatrix}
   \alpha_1 & 0 & 0 & 0\\
   0& \alpha_2 & 0 & 0\\
   0 & 0 &    \alpha_1 & 0 \\
   0 & 0 &   0 &  \alpha_2 \\
\end{pmatrix}\begin{pmatrix}
   0_2 &I_2\\
   -I_2 & 0_2
\end{pmatrix}\begin{pmatrix}
   \alpha_1 & 0 & 0 & 0\\
   0& \alpha_2 & 0 & 0\\
   0 & 0 &    \alpha_1 & 0 \\
   0 & 0 &   0 &  \alpha_2 \\
\end{pmatrix}\\
\nonumber
&=-\begin{pmatrix}
   0 & 0 & \alpha_1 & 0\\
   0& 0 &0 & \alpha_2\\
        -\alpha_1 & 0 &0 & 0 \\
   0 &  -\alpha_2 &0 & 0 \\
\end{pmatrix}\begin{pmatrix}
   0 & 0 & \alpha_1 & 0\\
   0& 0 &0 & \alpha_2\\
        -\alpha_1 & 0 &0 & 0 \\
   0 &  -\alpha_2 &0 & 0 \\
\end{pmatrix}
\nonumber
\\&=-\begin{pmatrix}
   -\alpha^2_1 & 0 & 0 & 0\\
   0& -\alpha^2_2 & 0 & 0\\
   0 & 0 &    -\alpha^2_1 & 0 \\
   0 & 0 &   0 &  \alpha^2_2 \\
\end{pmatrix}
 \end{align}}
The matrix $A^{(r)}$ is
{\scriptsize
\begin{align}
\nonumber
A^{(r)}&=\begin{pmatrix}
    \Im(M_1+M_2) & \Re(M_1-M_2)\\
       -\Re(M_1+M_2) & -\Im(-M_1+M_2)\\ 
\end{pmatrix} -\frac{1}{2} C^{(r)\sharp}C^{(r)},\\
\nonumber
&=\begin{pmatrix}
   -\frac{\alpha^2_1}{2} & 0 & -\omega_1  & 0\\
   0& -\frac{\alpha^2_2}{2} & 0 & -\omega_2\\
   \omega_1  & 0 &    -\frac{\alpha^2_1}{2} & 0 \\
   0 & \omega_2 &   0 &  -\frac{\alpha^2_2}{2} \\
\end{pmatrix}
\end{align}}
and
\begin{align}
\nonumber
L^{(r)}&=-  C^{(r)\sharp}D^{(r)}=-\begin{pmatrix}
   \alpha_1 & 0 & 0 & 0\\
   0& \alpha_2 & 0 & 0\\
   0 & 0 &    \alpha_1 & 0 \\
   0 & 0 &   0 &  \alpha_2 \\
\end{pmatrix}D^{(r)},\\
\nonumber
D^{(r)}&= \begin{pmatrix}
    s_{11} &0 & -s_{12} & 0\\
     0& s_{21} &0 &  -s_{22}\\
     s_{12} & 0 &  s_{11} &0\\
      0&  s_{22} & 0& s_{21}
\end{pmatrix}.
\end{align}
Therefore, by using a derivation similar to the one in Lemma \ref{lem:mat}, we have that
{\scriptsize
$$
\begin{aligned}
&e^{A^{(r)}t}+\\
&\left(
\begin{smallmatrix}
 e^{-\frac{1}{2} \alpha _1^2 t} \cos (\omega_1t) & 0 &
   -e^{-\frac{1}{2} \alpha _1^2 t} \sin (\omega_1t) &
   0 \\
 0 & e^{-\frac{1}{2} \alpha _2^2 t} \cos (\omega_2t) &
   0 & -e^{-\frac{1}{2} \alpha _2^2 t} \sin (t \text{$\omega
   $2}) \\
 e^{-\frac{1}{2} \alpha _1^2 t} \sin (\omega_1t) & 0 &
   e^{-\frac{1}{2} \alpha _1^2 t} \cos (\omega_1t) & 0
   \\
 0 & e^{-\frac{1}{2} \alpha _2^2 t} \sin (\omega_2t) &
   0 & e^{-\frac{1}{2} \alpha _2^2 t} \cos (\omega_2t)
   \\
\end{smallmatrix}
\right).
\end{aligned}
$$}
The concatenation product simply assembles the different components together, without making any connections between them.  Therefore, if we  observe the first quadrature for each component, and assume the input
${\bf u}(t)=\begin{pmatrix}1 &1  & 0  & 0\end{pmatrix}^\top u(t)$,  then the impulse response becomes:
{\scriptsize
\begin{align}
\nonumber
&{\bf h}(t)\\
\nonumber
&=  \left(-\begin{pmatrix}
\alpha_1 & 0 & 0 &0\\
0 & \alpha_2 &0 &0\\
\end{pmatrix}e^{A^{(r)}t}\begin{pmatrix}
   \alpha_1 & 0 & 0 & 0\\
   0& \alpha_2 & 0 & 0\\
   0 & 0 &    \alpha_1 & 0 \\
   0 & 0 &   0 &  \alpha_2 \\
\end{pmatrix}+\begin{pmatrix}
1 & 0 & 0 &0\\
0 & 1 &0 &0\\
\end{pmatrix}\delta(t)\right)\\
&~~~~~\begin{pmatrix}
    s_{11} &0 & -s_{12} & 0\\
     0& s_{21} &0 &  -s_{22}\\
     s_{12} & 0 &  s_{11} &0\\
      0&  s_{22} & 0& s_{21}
\end{pmatrix}\begin{pmatrix}
1 \\ 1 \\ 0 \\0\\
\end{pmatrix}\\
\nonumber
&= \Bigg(-\begin{pmatrix}
\alpha_1 & 0 & 0 &0\\
0 & \alpha_2 &0 &0\\
\end{pmatrix}e^{A^{(r)}t}\begin{pmatrix}
   \alpha_1 & 0 & 0 & 0\\
   0& \alpha_2 & 0 & 0\\
   0 & 0 &    \alpha_1 & 0 \\
   0 & 0 &   0 &  \alpha_2 \\
\end{pmatrix}\\
&~~~~~+\begin{pmatrix}
1 & 0 & 0 &0\\
0 & 1 &0 &0\\
\end{pmatrix}\delta(t)\Bigg)\begin{pmatrix}
    s_{11}\\
     s_{21} \\
     s_{12}\\
      s_{22}
\end{pmatrix}\\
\nonumber
&= \Bigg(-\left(\begin{smallmatrix}
 \alpha^2_1e^{-\frac{1}{2} \alpha _1^2 t} & 0\\
 0 & \alpha^2_2 e^{-\frac{1}{2} \alpha _2^2 t}  \\
\end{smallmatrix}\right)\left(
\begin{smallmatrix}
  \cos (\omega_1t) & 0 &
   - \sin (\omega_1t) &
   0 \\
 0 &  \cos (\omega_2t) &
   0 & - \sin (\omega_2t) \\
\end{smallmatrix}
\right)\\
\nonumber
&~~~~~+\begin{pmatrix}
1 & 0 & 0 &0\\
0 & 1 &0 &0\\
\end{pmatrix}\delta(t)\Bigg)\begin{pmatrix}
    s_{11}\\
     s_{21} \\
     s_{12}\\
      s_{22}
\end{pmatrix}\\
\nonumber
&=\left(
\begin{array}{c}
 -s_{11}\alpha^2_1e^{-\frac{1}{2} \alpha _1^2 t} \cos (\omega_1t) +s_{12}\alpha^2_1e^{-\frac{1}{2} \alpha _1^2 t} \sin (\omega_1t)\\
 -s_{21} \alpha^2_2 e^{-\frac{1}{2} \alpha _2^2 t} \cos (\omega_2t)+s_{22}\alpha^2_2 e^{-\frac{1}{2} \alpha _2^2 t} \sin (\omega_2t) \\
\end{array}
\right)+\left(
\begin{array}{c}
s_{11}\\
s_{21}
\end{array}
\right)\delta(t)
\end{align}}
By summing the components, we obtain \eqref{eq:hquantum}.
\hfill\qed

\paragraph{Proof of Theorem \ref{th:clt}}
Consider $\tilde{h}(t)$ in \eqref{eq:hquantum}, then,
from \eqref{eq:VMmean},  we have that
\begin{align}
\nonumber
\mu_h(t)&=-E\left(\alpha^2 e^{-\frac{\alpha^2t}{2} } \left(
s_{1}\cos ( \omega t ) -s_{2} \sin ( \omega t )\right)\right)\\
\nonumber
&=-\frac{r_1}{\sqrt{2}}\int \alpha^2 e^{-\frac{\alpha^2t}{2} } \left(\cos ( \omega t ) - \sin ( \omega t )\right)\phi_{\boldsymbol{\theta}}(\alpha,\omega) d\alpha d\omega,\\
\label{eq:proofEh}
&=-\frac{r_1}{\sqrt{2}}\int \alpha^2 e^{-\frac{\alpha^2t}{2} } \cos ( \omega t ) \phi_{\boldsymbol{\theta}}(\alpha,\omega) d\alpha d\omega,
\end{align}
where we have exploited that $\phi_{\boldsymbol{\theta}}(\alpha,\omega) =\phi_{\boldsymbol{\theta}}(\alpha,-\omega)$ and $\sin(x)=-\sin(-x)$.
For the second moments, we first compute the expectation w.r.t.\   ${\bf s}=(s_{1},s_{2})^\top$:
{\scriptsize
 \begin{align}
\nonumber
&E\left(\alpha^2 e^{-\frac{\alpha^2t}{2} } \left(
s_{1}\cos ( \omega t ) -s_{2} \sin ( \omega t )\right)\right)\hspace{-0.9mm}\left(\alpha^2 e^{-\frac{\alpha^2t'}{2} } \left(
s_{1}\cos ( \omega t' ) -s_{2} \sin ( \omega t' )\right)\right)\\
\nonumber
&= \alpha^4 e^{-\frac{\alpha^2(t+t')}{2} } E\left({\bf s}^\top{\bf c}_{\omega t}{\bf c}_{\omega t'}^\top{\bf s}\right),\\
\nonumber
&=\alpha^4 e^{-\frac{\alpha^2(t+t')}{2} }Tr\left({\bf c}_{\omega t}{\bf c}_{\omega t'}^\top E({\bf s}{\bf s}^\top)\right),\\
\nonumber
&=\alpha^4 e^{-\frac{\alpha^2(t+t')}{2} }Tr\left({\bf c}_{\omega t}{\bf c}_{\omega t'}^\top \left(\frac{1 - r_{2}}{2}\, I_{2} + \frac{r_{2}}{2}\,{\bf 1}_2{\bf 1}_2^\top\right)\right)\\
\nonumber
&=\alpha^4 e^{-\frac{\alpha^2(t+t')}{2} }\Big(\tfrac{1 - r_{2}}{2}\,\left(\cos ( \omega t )\cos ( \omega t' )+ \sin ( \omega t )\sin ( \omega t' )\right) \\
&+ \tfrac{r_{2}}{2}\,\left(\cos ( \omega t ) - \sin ( \omega t )\right)\left(\cos ( \omega t' ) - \sin ( \omega t' )\right)\Big)\\
\nonumber
&=\alpha^4 e^{-\frac{\alpha^2(t+t')}{2} }\left(\tfrac{1 - r_{2}}{2}\,\cos ( \omega (t-t') )  + \tfrac{r_{2}}{2}\left(\cos ( \omega (t-t') ) -\sin ( \omega (t-t') ) \right) \right)\\
\nonumber
&=\alpha^4 e^{-\frac{\alpha^2(t+t')}{2} }\frac{1 }{2}\left(\cos ( \omega (t-t') )  -r_2\sin ( \omega (t-t') ) \right)
\end{align}}
where ${\bf c}_{\omega t}=\left(\cos ( \omega t ), - \sin ( \omega t )\right)^\top$.
By computing the expectation w.r.t.
{\footnotesize
 \begin{align}
 \nonumber
&\int \alpha^4 e^{-\frac{\alpha^2(t+t')}{2} }\frac{1 }{2}\left(\cos ( \omega (t-t') )  -r_2\sin ( \omega (t-t') ) \right)\phi_{\boldsymbol{\theta}}(\alpha,\omega) d\alpha d\omega\\
&=\int \alpha^4 e^{-\frac{\alpha^2(t+t')}{2} }\frac{1 }{2}\cos ( \omega (t-t') )  \phi_{\boldsymbol{\theta}}(\alpha,\omega) d\alpha d\omega
\end{align}}
where we have exploited that $\phi_{\boldsymbol{\theta}}(\alpha,\omega) =\phi_{\boldsymbol{\theta}}(\alpha,-\omega)$ and $\sin(x)=-\sin(-x)$.
For the covariance function, we have that
{\scriptsize
$$
\begin{aligned}
 K_{\boldsymbol{\theta}}(t,t')
 &=\int \alpha^4 e^{-\frac{\alpha^2(t+t')}{2} }\frac{1 }{2}\cos ( \omega (t-t') )  \phi_{\boldsymbol{\theta}}(\alpha,\omega) d\alpha d\omega -\mu_h(t)\mu_h(t') \\
\end{aligned}
$$}
Assuming that
$\phi_{\boldsymbol{\theta}}(\alpha^2,\omega)=\frac{\tfrac{\alpha^2}{2}}{\pi(\omega^2+\tfrac{\alpha^2
4}{4})}\text{Unif}(\alpha^2; \alpha_m,\alpha_M)$ and exploiting the result:
\begin{equation}
\int \cos(\omega \tau) \frac{1}{\pi} \frac{\beta}{\omega^2+\beta^{2}} d\omega=e^{-\beta| \tau| }
\end{equation}
then, for $\beta=\alpha^2/2$,
\begin{equation}
\begin{aligned}
\mu_h(t)&= \int   \alpha^2 e^{-\tfrac{\alpha^2}{2} t}  e^{-\tfrac{\alpha^2}{2}| t| } \text{Unif}(\alpha^2; \alpha_m,\alpha_M) d\alpha^2 \\
&=\frac{(1+ a_m t) e^{- a_m t
   }-e^{- a_M t }
   (1+ a_M t)}{t ^2 (a_M-a_m)}
   \end{aligned}
\end{equation}
and
\begin{equation}
\begin{aligned}
 &\int   \alpha^4 e^{-\tfrac{\alpha^2}{2} (t+s)}  e^{-\tfrac{\alpha^2}{2}| t-s| } \text{Unif}(\alpha^2; \alpha_m,\alpha_M)  d\alpha^2 \\
 &=  \frac{ (1+
   a_m \tau +\frac{1}{2}
   a_m^2 \tau^2  ) e^{- a_m
   \tau}-e^{- a_M
   \tau} (1+ a_M \tau +\frac{1}{2} a_M^2 \tau^2)}{2 \tau^3 (a_M-a_m)}
   \end{aligned}
\end{equation}
where $\tau=\max(s,t)$.
\hfill\qed

\paragraph{Proof of Lemma \ref{lem:qABCD}}
We need to compute $C^{(r)}e^{A^{(r)}t}L^{(r)}$. Note that
\begin{align*}
e^{A^{(r)}t}&= \begin{pmatrix}
         e^{At} & 0\\
         0 & e^{Zt}
        \end{pmatrix}\\
        \nonumber
L^{(r)}&=J_{n}^{(r)} C^{(r)\top} J_{s}^{(r)}\\
&=\begin{pmatrix}
         0 & I_n\\
         -I_n & 0
        \end{pmatrix}  \begin{pmatrix}
         C^{(r)}_1 & 0_{1 \times n}\\
         0_{1 \times n} & C^{(r)}_2\\
        \end{pmatrix}^\top\begin{pmatrix}
         0 & I_s\\
         -I_s & 0
         \end{pmatrix}\\
         &=\begin{pmatrix}
         -C^{(r)\top}_2 & 0_{n \times 1}\\
         0_{n \times 1} & -C^{(r)\top}_1\\
        \end{pmatrix}
 \end{align*}
 Therefore,  we have that
\begin{align*}
&C^{(r)}e^{A^{(r)}t}L^{(r)}=\\
&\begin{pmatrix}
         C^{(r)}_1 & 0_{1 \times n}\\
        0_{1 \times n} & C^{(r)}_2\\
        \end{pmatrix}\begin{pmatrix}
         e^{At} & 0\\
         0 & e^{Zt}
        \end{pmatrix}\begin{pmatrix}
         -C^{(r)\top}_2 & 0_{n \times 1}\\
         0_{n \times 1} & -C^{(r)\top}_1\\
        \end{pmatrix}\\
&=\begin{pmatrix}
         -C^{(r)}_1  e^{At} C^{(r)\top}_2 & 0\\
         0 & -C^{(r)}_2  e^{Zt} C^{(r)\top}_1\\
        \end{pmatrix}
 \end{align*}
It remains to prove that the QSS is physically realisable. The physical realisability conditions is:

\begin{align*}
&0=A^{(r)}J_n^{(r)}+J_n^{(r)}A^{(r)\top}+L^{(r)}J_s^{(r)}L^{(r)\top}\\
&=\begin{pmatrix}
         A & 0\\
         0 &Z
        \end{pmatrix}\begin{pmatrix}
         0 & I_n\\
         -I_n & 0
        \end{pmatrix}\\
        &+\begin{pmatrix}
         0 & I_n\\
         -I_n & 0
        \end{pmatrix}\begin{pmatrix}
         A^\top & 0\\
         0 & Z^\top
        \end{pmatrix}\\
        &+L^{(r)} \begin{pmatrix}
         0 & I_2\\
         -I_2 & 0
         \end{pmatrix}L^{(r)\top}\\
&=
         \begin{pmatrix}
          0 & A+Z^\top\\
        -Z-A^\top&  0
        \end{pmatrix}\\
        &+\begin{pmatrix}
        0 & -C^{(r)\top}_2 \\
         C^{(r)\top}_1 & 0\\
        \end{pmatrix} \begin{pmatrix}
         -C^{(r)\top}_2 & 0\\
         0 & -C^{(r)\top}_1\\
        \end{pmatrix}^\top\\
&= \begin{pmatrix}
          0 &   A+Z^\top+C^{(r)\top}_2 C^{(r)}_1  \\
         -Z-A^\top-C^{(r)\top}_1 C^{(r)}_2  &  0
        \end{pmatrix}
\end{align*}
which implies that $Z=-A^\top-C^{(r)\top}_1 C^{(r)}_2$. 

The  condition $J_n^{(r)}L^{(r)}+C^{(r)\top}J_s^{(r)} D^{(r)}=0$ is satisfied as  $D^{(r)}=I_4$
and $L^{(r)}=J_{n}^{(r)} C^{(r)\top} J_{s}^{(r)}$ and so
$$
J_n^{(r)}L^{(r)}+C^{(r)\top}J_s^{(r)} D^{(r)} =-C^{(r)\top} J_{s}^{(r)}+C^{(r)\top}J_s^{(r)}=0.
$$
Finally, the following condition is satisfied as well: $D^{(r)}J_s^{(r)}D^{(r)\top}-J_s^{(r)}=0$.

\hfill\qed

\paragraph{Proof of Theorem \ref{th:qABCD}}
The first part of the proof follows directly from Lemma \ref{lem:qABCD}. About the second part, it follows from a standard result in control theory for LTI systems, which states that if the  matrix $\begin{pmatrix}B,AB,A^2B,\dots,A^{n-1}B\end{pmatrix}$ has rank $n$, there exists a matrix $R$ such that the closed-loop matrix $A-BR$ is stable \citep[Ch.5]{Astrom2004}. A similar result holds for $-A-BR$, since $\begin{pmatrix}B,-AB,(-A)^2B,\dots,(-A)^{n-1}B\end{pmatrix}$ has also rank $n$.
\hfill\qed

\paragraph{Proof of Corollary \ref{co:active}}
We note that the $C^{(r)}$ matrix is
$$
C^{(r)}=\begin{pmatrix}
         R & 0_{1 \times n}\\
         -C & 0_{1 \times n}\\
        0_{1 \times n} & B^\top\\
        0_{1 \times n} & 0_{1 \times n}\\
        \end{pmatrix}
$$
We aim to write it as 
$$
C^{(r)}= \begin{pmatrix}
    \Re(N_1+N_2) & \Im(-N_1+N_2)\\
    \Im(N_1+N_2) & \Re(N_1-N_2)
\end{pmatrix}
$$
with $N_1,N_2 \in \mathbb{C}^{2 \times n}$. It can easily be seen that this requires that $\Re(N_2)\neq 0_{2 \times n}$. Therefore, from Definition \ref{def:comppas}, it follows that the QSS described by~\eqref{eq:Aruniv}--\eqref{eq:Druniv} is not completely passive.

\section{Additional numerical experiments}
\label{app:addexp}
The following table reports the comparison between $LqW(24)$ and
$HqW(24)$ using  a polynomial nonlinearity (with degree $3$) in the readout layer instead of the square-exponential kernel (which we instead used  in the experiments in the last two columns of Table \ref{tab:1}).
\begin{table}[h]
{\small
\begin{tabular}{|c||c|c|}
\hline
\textbf{RMSE}&  \textit{LqW}($24$)  & \textit{HqW}($24$) \\
\hline   
parity($\tau=2$) & 0.065 &\textbf{0.0017}\\
parity($\tau=4$) &0.49 & {\bf0.15}\\
NARMA10 & 0.087  &\textbf{0.065}\\
delay($\tau=2$)& 0.33 & \textbf{0.04}\\
delay($\tau=4$) &  0.49 &\textbf{0.13}\\
\hline
\end{tabular}
}
\caption{Average RMSE for (i) the QRC LqW; (ii) our QRC HqW; using  a polynomial nonlinearity (with degree $3$) in the readout layer. The number between brackets after the name of the model is relative to the number of harmonic oscillators in the quantum linear network.}
\label{tab:1bis}
\end{table}
It can be noticed that 
$HqW(24)$ outperforms $LqW(24)$  also in this case. The performance  with the  polynomial nonlinearity is generally worse than the one obtained with the squared-exponential kernel in Table \ref{tab:1}.
\newpage

\bibliography{biblio}

\end{document}